\documentclass[journal,twocolumn]{IEEEtran}
\usepackage{multirow}
\usepackage{booktabs}
\usepackage{amsfonts}
\usepackage{placeins}
\usepackage{array}
\usepackage{amsmath,cases,amssymb}
\usepackage{amsmath}
\usepackage{mathtools}
\usepackage{graphicx}
\usepackage{subcaption}
\usepackage{cite}
\usepackage{epsfig}
\usepackage{todonotes}
\usepackage{url}
\usepackage{algorithm}
\usepackage{algorithmic}
\usepackage{epstopdf}
\usepackage{balance}
\usepackage{color}
\usepackage{algorithm}
\DeclareGraphicsExtensions{.eps,.pdf}

\usepackage{scalerel}
\usepackage{multicol}
\usepackage{amssymb}
\usepackage{pifont}
\usepackage{cases}

\usepackage{scalerel}
\usepackage{orcidlink}

\let\labelindent\relax
\usepackage{enumitem}

\usepackage{mathbbol}
\usepackage{cite}
\usepackage[T1]{fontenc} 
\usepackage{amsthm} 
\usepackage{eqparbox}
\usepackage{textcomp}
\usepackage[eulergreek]{sansmath}
\usepackage{color}
\usepackage[cmintegrals]{newtxmath}

\usepackage[open,openlevel=1]{bookmark}
\usepackage{ctable}
\usepackage{threeparttablex}
\usepackage{makecell}

\newtheorem{remark}{Remark}
\newtheorem{proposition}{Proposition}

\allowdisplaybreaks

\newcommand{\cmark}{\ding{51}}%
\newcommand{\xmark}{\ding{55}}%

\newcommand{\bseq}{\begin{subequations}}
\newcommand{\eseq}{\end{subequations}}
\newcommand{\baln}{\begin{align}}
\newcommand{\ealn}{\end{align}}
\newcommand{\balnd}{\begin{aligned}}
\newcommand{\ealnd}{\end{aligned}}
\newcommand{\beq}{\begin{equation}}
\newcommand{\eeq}{\end{equation}}
\newcommand{\beqn}{\begin{eqnarray}}
\newcommand{\eeqn}{\end{eqnarray}}
\newcommand{\beqno}{\begin{eqnarray*}}
\newcommand{\eeqno}{\end{eqnarray*}}
\newcommand{\bma}{\begin{displaymath}}
\newcommand{\ema}{\end{displaymath}}
\newcommand{\bnu}{\begin{enumerate}}
\newcommand{\enu}{\end{enumerate}}
\newcommand{\bce}{\begin{center}}
\newcommand{\ece}{\end{center}}
\newcommand{\btb}{\begin{tabular}}
\newcommand{\etb}{\end{tabular}}
\newcommand{\bieq}{\begin{IEEEeqnarray}}
\newcommand{\eieq}{\end{IEEEeqnarray}}

\newcommand{\st}{{\mathrm{s.t.}}}
\newcommand{\subnum}{\IEEEyessubnumber}

\makeatletter
\newcommand{\linebreakand}{%
\end{@IEEEauthorhalign}
\hfill\mbox{}\par
\mbox{}\hfill\begin{@IEEEauthorhalign}
}
\makeatother

\begin{document}
\title{Digital-Twin-Aided Dynamic Spectrum Sharing and Resource Management in Integrated Satellite-Terrestrial Networks \vspace{-2mm} }

\author{ \IEEEauthorblockN{Hung Nguyen-Kha\orcidlink{0000-0002-5956-4279}, \IEEEmembership{Graduate Student Member,~IEEE,} Vu Nguyen Ha\orcidlink{0000-0003-1325-3480}, \IEEEmembership{Senior Member,~IEEE,}\\ Ti Nguyen\orcidlink{}, \IEEEmembership{Member,~IEEE,} Eva Lagunas\orcidlink{0000-0002-9936-7245}, \IEEEmembership{Senior Member,~IEEE}, Joel Grotz\orcidlink{0000-0002-4095-4015}}, \IEEEmembership{Senior Member,~IEEE}, \\
Symeon Chatzinotas\orcidlink{0000-0001-5122-0001}, \IEEEmembership{Fellow,~IEEE}, and Bj\"orn Ottersten\orcidlink{0000-0003-2298-6774}, \IEEEmembership{Fellow,~IEEE}

\IEEEcompsocitemizethanks{ 
This work was supported by the Luxembourg National Research Fund (FNR) under the project INSTRUCT (IPBG19/14016225/INSTRUCT).
The preliminary result of this manuscript is presented in IEEE GLOBECOM'25 \cite{Hung_GLOBECOM25}.
H. Nguyen-Kha, V. N. Ha, T. Nguyen, E. Lagunas, S. Chatzinotas, B. Ottersten are with the Interdisciplinary Centre for Security, Reliability and Trust (SnT), University of Luxembourg, 1855 Luxembourg Ville, Luxembourg.  (e-mail: \{khahung.nguyen; vu-nguyen.ha; titi.nguyen; eva.lagunas; Symeon.Chatzinotas, bjorn.ottersten\}@uni.lu). J. Grotz is with SES, Chateau de Betzdorf, Betzdorf 6815, Luxembourg (e-mail: Joel.Grotz@ses.com). }
}

\maketitle

\begin{abstract} 
The explosive growth in wireless service demand has prompted the evolution of integrated satellite-terrestrial networks (ISTNs) to overcome the limitations of traditional terrestrial networks (TNs) in terms of coverage, spectrum efficiency, and deployment cost. Particularly, leveraging 
LEO satellites and dynamic spectrum sharing (DSS), ISTNs offer promising solutions but face significant challenges due to diverse terrestrial environments, user and satellite mobility, and long propagation LEO-to-ground distance. 
To address these challenges, digital-twin (DT) has emerged as a promising technology to offer virtual replicas of real-world systems, facilitating prediction for resource management.
In this work, we study a time-window-based DT-aided DSS framework for ISTNs, enabling joint long-term and short-term resource decisions to reduce system congestion. Based on that, two optimization problems are formulated, which aim to optimize resource management using DT information and to refine obtained solutions with actual real-time information, respectively. To efficiently solve these problems, we proposed algorithms using compressed-sensing-based and successive convex approximation techniques. 
Simulation results using actual traffic data and the London 3D map demonstrate the superiority in terms of congestion minimization of our proposed algorithms compared to benchmarks. Additionally, it shows the adaptation ability and practical feasibility of our proposed solutions.
\end{abstract}
\vspace{-1mm}
\begin{IEEEkeywords}
LEO, ISTNs, C-Band, Digital Twin, 5G-NR, Resource Allocation. 
\end{IEEEkeywords}

\vspace{-6mm}

\section{Introduction}
\vspace{-2mm}






\IEEEPARstart{I}{n} recent years, the explosive growth in mobile traffic and emerging service demands, such as seamless and ubiquitous connectivity, have driven the evolution of wireless communication. 
While next-generation (NextG) networks are envisioned to meet these requirements \cite{SurveyTut_Roadto6G, ETSI_NTN6G}, relying solely on denser 
TN deployment faces significant challenges, including inefficient spectrum utilization in sparsely populated areas and high infrastructure costs. 
To overcome these limitations, non-terrestrial networks (NTNs), notably satellite networks (SatNet), can offer complementary coverage and traffic offloading. 
Leveraging both TN and SatNet capabilities, 
ISTNs have attracted much attention as a viable architecture to meet NextG requirements, advancing the goal of global, uninterrupted coverage.
Backed by regulatory and institutional bodies such as the US Federal Communications Commission (FCC) and the European Space Agency (ESA), ISTNs are recognized as enablers of direct-to-device (D2D) connectivity and TN support \cite{FCC2322_Supplemental_Space_Coverage, ESA_D2D_conf}.
Among the SatNet options, low Earth orbit (LEO) satellites (LSats) offer lower latency and higher channel gain, making them especially suited for tight TN-NTN integration and native support of NTNs in 6G \cite{ETSI_NTN6G}.
 
To realize ubiquitous connectivity in 6G with ISTNs, efficient use of radio spectrum across TNs and SatNets becomes critical.
Traditionally, these two networks have operated in separate radio frequency bands (RFBs), with TNs utilizing lower bands and SatNet relying on higher ones. 
However, due to limited spectrum availability, this static allocation strategy poses limitations, especially in densely deployed or heterogeneous traffic environments.
To address these constraints, dynamic spectrum sharing (DSS) between TNs and SatNets appears as an important strategy within ISTNs to improve spectral efficiency.  
Recent proposals, such as MediaTek’s spectrum reuse framework, highlight the potential of allowing SatNet to opportunistically access terrestrial spectrum \cite{3gpp.Mediatek.Tdoc_RWS230110}. This vision is further reinforced by regulatory bodies like the FCC and ESA, which support co-primary spectrum use between TN and NTN systems, particularly for D2D connectivity \cite{FCC2322_Supplemental_Space_Coverage, ESA_D2D_conf}.
Nevertheless, enabling spectrum sharing in ISTNs introduces critical challenges, especially inter-system interference (ISyI) between coexisting TN and SatNet transmissions. 


Effectively managing ISyI is essential to realize the full potential of DSS in 6G-ISTN architectures. This, however, demands advanced technologies to address key challenges stemming from the wide coverage areas, high mobility of LSats, and the long LSat-to-ground (Sat2G) distance. Moreover, LSats often serve regions with diverse terrestrial conditions, especially in urban areas, where dense buildings introduce reflection, diffraction, and signal blockage, further complicating interference patterns. In addition, the movement of both LSats and users leads to rapidly varying link conditions, while the long Sat2G distance increases the overhead for signaling and real-time channel estimation.
To address these challenges, digital-twin (DT) has emerged as a promising technology to offer virtual replicas of real-world systems, which enables capturing and emulating the evolution of the physical environments, network dynamics, and entities in real time \cite{Spirent_Whitepaper5GDT, Huan_ComMag21_DT} through continuous monitoring. In the context of ISTNs, utilizing DT to replicate traffic demand, environmental conditions \cite{Zhu_OJCOM24_DTRT}, and channel state information (CSI) can significantly enhance resource management (RM), support proactive interference mitigation, and reduce Sat2G control signaling by using DT information. This work leverages DT to address the key challenges of ISyI, DSS coordination, and RM in practical ISTN deployments.

\vspace{-3mm}

\subsection{Related Works}
\vspace{-1mm}

Spectrum sharing (SS) designs in ISTNs have been considered in literature \cite{du_JSAC2018, zhang_JSAC2022, Lee_TVT23, Li_TWC2024, martikainen_WoWMoM2023, Zhu_TWC24_BeamManage, Hung_TCOM25}. In \cite{du_JSAC2018, zhang_JSAC2022, Lee_TVT23, Li_TWC2024}, snapshot-based SS solutions for ISTNs were considered.
In particular, \cite{du_JSAC2018} designed the second-price auction mechanism to achieve an equilibrium SS solution for TNs and SatNet-based traffic offloading. \cite{zhang_JSAC2022} considered a backhaul and access SS framework, where one satellite served both base stations (BSs) and users (UEs). Limited backhaul link capacity, UE association, allocates bandwidth (BW) assignment, and power control were addressed through the corresponding sub-problems. 
\cite{Lee_TVT23} and \cite{Li_TWC2024} studied the RA problems for throughput maximization in reverse SS scenarios \cite{Lee_TVT23} and in underlay/overlay sharing cases \cite{Li_TWC2024}.
However, these snapshot-based works have not considered ubiquitous and seamless connectivity and struggle to capture the dynamic behavior of ISTNs.


The time-window-based SS systems have been examined in \cite{martikainen_WoWMoM2023, Zhu_TWC24_BeamManage, Hung_TCOM25}.
Specifically, \cite{martikainen_WoWMoM2023} studied DSS for ISTNs, where a centralized coordinator allocates BW to each network based on traffic load.
In \cite{Zhu_TWC24_BeamManage}, the TN spectrum was temporarily shared with SatNet. The study focused on beam management, scheduling, and DSS for only SatNet, while TNs were considered under interference constraints. Both \cite{martikainen_WoWMoM2023, Zhu_TWC24_BeamManage} have not considered interference and service requirements at the user side. 
Additionally, their designs mainly rely on the statistical channel model. 
Filling these gaps, our previous work \cite{Hung_TCOM25} has utilized 3D map and the ray-tracing (RayT) mechanism to examine the seamless connectivity and coverage enhancement in urban areas. {However, \cite{Hung_TCOM25} did not consider BW and RB allocation.} 


Recently, the DT technology has been utilized in RM \cite{Huynh_TCOM22_DT, Sun_IoT22_DT, Gong_TVT23_DT, Wu_IoT25_DT, Yu_IoT25_DT, Tang_TITS25_DT}.  However, existing DT models are predominantly used to replicate the computation information \cite{Huynh_TCOM22_DT, Sun_IoT22_DT, Gong_TVT23_DT, Wu_IoT25_DT}, traffic/load patterns \cite{Sun_IoT22_DT, Tang_TITS25_DT}, and node positions \cite{Sun_IoT22_DT, Gong_TVT23_DT, Yu_IoT25_DT}. Besides, given the crucial role of CSI in wireless RM, \cite{Tang_TITS25_DT} incorporated CSI into the DT model using the statistical channel. Nevertheless, CSI is strongly affected by the surrounding environment, particularly in complex areas such as urban scenarios \cite{Hung_MeditCom24, Hung_TCOM25}.
Therefore, modeling the environment and its influence on CSI is essential, and should be incorporated into DT frameworks using 3D maps and RayT mechanisms \cite{Zhu_OJCOM24_DTRT}. 
To the best of our knowledge, the joint design of BW allocation, traffic steering, AP/LSat-UE association, RB assignment, and power control in DT-aided DSS ISTNs, incorporating a realistic channel model based on actual 3D maps, has not been thoroughly investigated. This motivates us to fill this research gap in this work. The comparison between our work and the literature is summarized in Table~\ref{tab:works}.

\begin{table*}[t]
    \captionsetup{font=small}
    \centering
    \caption{Comparison of related works.}
    \label{tab:works}
    \vspace{-2mm}
    \scalebox{0.72}{
    \begin{tabular}{|p{0.55cm}|p{0.5cm}|p{1.55cm}|p{2.4cm}|p{0.25cm}|p{0.25cm}|*{15}{c|}}
        \hline
        \multirow{2}{*}{\textbf{Works}} &
        \multirow{2}{*}{\textbf{Year}} &
        \multirow{2}{*}{\textbf{Mode}} &
        \multicolumn{9}{c|}{\textbf{Resource/Network Management}} &
        \multicolumn{5}{c|}{\textbf{Digital Twin Repication}} & \multirow{2}{*}{\textbf{Channel model}}
        \\
        \cline{4-17}
        & & & \textbf{SS mode} & \textbf{PC} & \textbf{UA} & \textbf{RB/SC} & \textbf{QoS} 
        & \textbf{Intra SI} & \textbf{Inter SI} & \textbf{BWA} & \textbf{T/L/C} 
        & \textbf{T/L} & \textbf{Computing} & \textbf{Position} & \textbf{3D Env} & \textbf{CSI} & \\
        \hline
        \cite{du_JSAC2018} & 2018 & Snapshot & Primary-secondary 
        & \xmark & \xmark & \cmark & \xmark 
        & \xmark & \cmark & \xmark & \xmark 
        & \xmark & \xmark & \xmark & \xmark 
        & \xmark& Statistic \\
        \hline
        \cite{zhang_JSAC2022} & 2022 & Snapshot & Non-overlap 
        & \cmark & \cmark & \xmark & Rate 
        & \cmark & \cmark & \cmark & \xmark
        & \xmark & \xmark & \xmark & \xmark 
        & \xmark & Statistic \\
        \hline
        \cite{Lee_TVT23} & 2023 & Snapshot & Overlap, non-overlap 
        & \xmark & \xmark & \cmark & \xmark 
        & \cmark & \cmark & \xmark & \xmark
        & \xmark & \xmark & \xmark & \xmark
        & \xmark & Statistic \\
        \hline
        \cite{Li_TWC2024} & 2024 & Snapshot & Primary-secondary 
        & \cmark & \xmark & \xmark & Rate 
        & \cmark & \cmark & \xmark & \xmark 
        & \xmark & \xmark & \xmark & \xmark
        & \xmark & Statistic \\
        \hline
        \cite{martikainen_WoWMoM2023} & 2023 & Time window & Non-overlap 
        & \cmark & \xmark & \xmark & \xmark 
        & \xmark & \xmark & \cmark & \cmark
        & \xmark & \xmark & \xmark & \xmark
        & \xmark & Statistic \\
        \hline
        \cite{Zhu_TWC24_BeamManage} & 2024 & Time window & Overlap 
        & \xmark & \cmark & \xmark & \xmark
        & \cmark & \cmark & \xmark & \cmark
        & \xmark & \xmark & \xmark & \xmark
        & \xmark & Statistic \\
        \hline
        \cite{Hung_TCOM25} & 2025 & Time window & Co-primary 
        & \cmark & \cmark & \xmark & rate
        & \xmark & \cmark & \xmark & \xmark
        & \xmark & \xmark & \xmark & \xmark
        & \xmark & 3D map-based \\
        \hline
        \cite{Huynh_TCOM22_DT} & 2022 & Snapshot & None 
        & \cmark & \cmark & \xmark & Latency 
        & \cmark & \xmark & \xmark & \cmark
        & \xmark & \cmark & \xmark & \xmark
        & \xmark & Statistic \\
        \hline
        \cite{Sun_IoT22_DT} & 2022 & Snapshot & None 
        & \xmark & \xmark & \xmark & \xmark 
        & \xmark & \xmark & \xmark & \cmark
        & \cmark & \cmark & \cmark & \xmark
        & \xmark & None \\
        \hline
        \cite{Gong_TVT23_DT} & 2023 & Snapshot & None 
        & \xmark & \xmark & \xmark & Latency 
        & \xmark & \xmark & \cmark & \cmark
        & \xmark & \cmark & \cmark & \xmark
        & \xmark & Statistic \\
        \hline
        \cite{Wu_IoT25_DT} & 2025 & Snapshot & None 
        & \xmark & \cmark & \xmark & \xmark 
        & \xmark & \xmark & \xmark & \cmark
        & \xmark & \cmark & \xmark & \xmark
        & \xmark & Statistic \\
        \hline
        \cite{Yu_IoT25_DT} & 2025 & Time window & None 
        & \xmark & \cmark & \cmark & Latency 
        & \cmark & \xmark & \xmark & \xmark
        & \xmark & \xmark & \cmark & \xmark
        & \xmark & Statistic \\
        \hline
        \cite{Tang_TITS25_DT} & 2025 & Time window & None 
        & \xmark & \cmark & \xmark & Rate/Latency 
        & \xmark & \xmark & \cmark & \cmark
        & \cmark & \xmark & \xmark & \xmark
        & \cmark & Statistic \\
        \hline
        \multicolumn{2}{|c|}{Our work} & Time window & Slice, co-primary 
        & \cmark & \cmark & \cmark & Latency 
        & \cmark & \cmark & \cmark & \cmark 
        & \cmark & \xmark & \cmark  &  \cmark
        & \cmark & 3D map-based \\
        \hline
        \multicolumn{18}{|l|}{PC: Power control, UA: User association, RB/SC: Resource-block/Sub-channel assignment, Inter/Intra SI: Inter/Intra system interference,  BWA: Bandwidth allocation, T/L/C: Traffic/Load/Computing-resource} \\
        \hline
    \end{tabular}
    }
\end{table*}

\vspace{-3mm}
\subsection{Research Contributions}
\vspace{-1mm}
According to the discussed research gap as well as D2D scenarios driven by ESA and FCC, this work studies novel DT-aided DSS mechanisms for ISTNs supporting D2D connectivity, where TNs and SatNets share the same 5G-NR RFBs. Regarding uneven traffic demand among services and systems, the work aims to minimize system congestion by improving spectrum usage through DSS and RM under practical constraints. This work's main contributions are summarized as:
\begin{itemize}
    \item We investigate the 5G-NR transmission model for ISTNs under a practical system architecture and develop a novel DT model to replicate key components of the network. Beyond replicating traffic demand and terminal positions, as done in prior works, our model incorporates realistic terminal antenna patterns, an actual 3D map of the target area, and a RayT tool to accurately emulate LSat/TN access point (TAP)-to-UE CSIs. 
    \item We formulate two resource-allocation (RA) optimization problems, ``\textit{joint-RA}'' and ``\textit{refinement}'', which respectively rely on DT-predicted information and real-time system feedback, such as CSIs and traffic arrival. The joint-RA problem leverages DT-predicted information to optimize both long-term RA decisions (i.e., traffic steering and BW allocation across services and TN/SatNet systems) and preliminary estimated short-term RA decisions (i.e., TAP/LSat–UE association, RB assignment, and power control). The refinement problem adjusts the TN short-term decisions based on real-time feedback to align the DT-optimized strategy with current environmental conditions. Both problems are challenging due to non-convex constraints; notably, the joint-RA problem is formulated as a mixed-integer non-linear program (MINLP) and relies on predicted information.
    \item By leveraging the DT model for predictive information and applying compressed sensing and successive convex approximation (SCA) techniques, we propose two efficient solution methods, named DT-based joint RA and Real-time refinement algorithms (\textit{DT-JointRA} and \textit{RT-Refine}), that address the problems in alignment with the system architecture.
    \item For practical evaluation, we utilize a realistic 3D map of London, mobility of UEs such as vehicle UEs extracted from UE route in Google Navigator, and actual data traffic datasets. Numerical results demonstrate that our proposed algorithms significantly outperform benchmarks in minimizing queue lengths. Moreover, with DT-assisted prediction and pre-optimization in DT-JointRA, the RT-Refine scheme achieves fast convergence within a few iterations, highlighting the practical feasibility and efficiency of the proposed solutions.
\end{itemize}
The rest of this paper is organized as follows. Section~\ref{sec: sysmodel} presents the system model and problem formulation. Section~\ref{sec: solution} and \ref{sec: benchmark-complexity} describe the proposed solutions, benchmarks, and complexity analysis, respectively. Section~\ref{sec: result} and \ref{sec: conclusion} provide numerical results and the conclusion.

\vspace{-2mm}
\section{5G-NR and Digital Twin System Model}\label{sec: sysmodel}
\vspace{-1mm}
\begin{figure*}
	\centering
 	\includegraphics[width=1\linewidth]{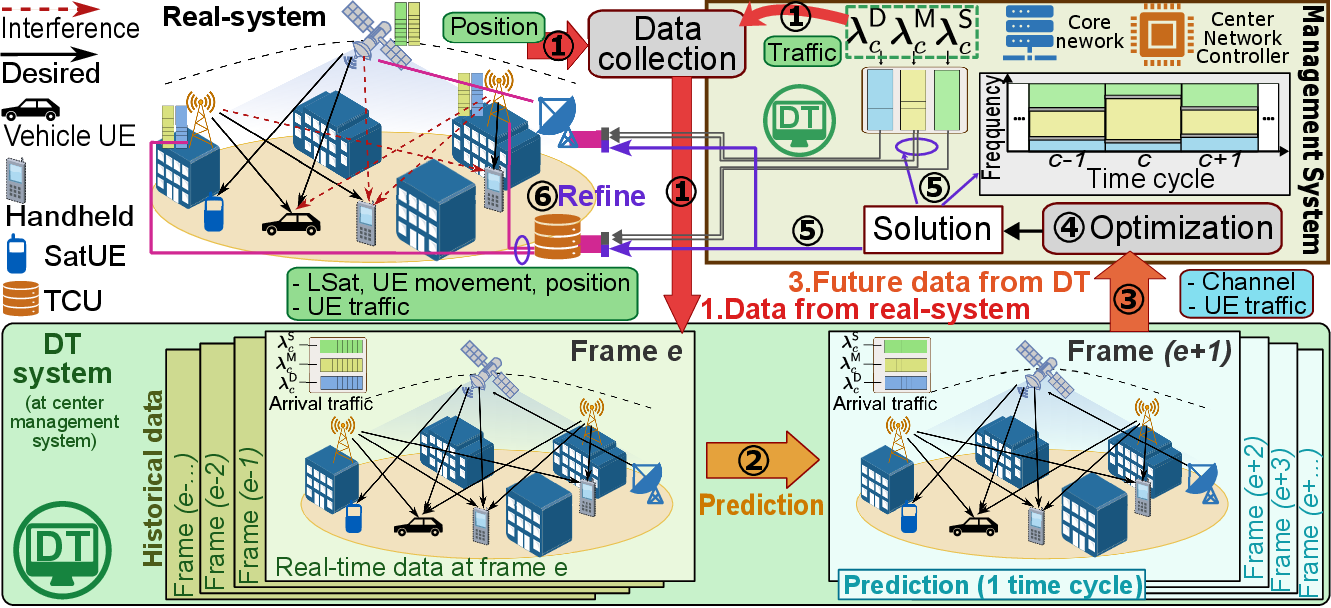}
    \captionsetup{font=footnotesize}
    \vspace{-4mm}
	\caption{Digital-twin-aided ISTNs.}
	\label{fig:sysmodel}
    \vspace{-6mm}
\end{figure*}
This paper considers a 5G-NR-based downlink transmission in an ISTN consisting of $N$ TAPs and one LSat jointly serving $K$ UEs, as depicted in Fig.~\ref{fig:sysmodel}. 
It is assumed that SatNet employs the FDD strategy to manage downlink (DL) and uplink (UL) transmissions in distinct RFBs, whereas TN operates under the TDD mechanism for allocating radio resources between DL and UL in the same RFB \cite{6GNTN_EU_SpectrumPolicies}.
In this context, we focus on utilizing the RFB, denoted $W^{\sf{tot}}$~(Hz), which is assumedly accessible to TN for both DL and UL with TDD management as well as the SatNet DL transmission. 
We assume that synchronization is achieved based on network information while handover procedures follow user association.
The key notations used in this work are described in Table~\ref{tab:notation}

\subsection{Services and DT-based Network Management}
\vspace{-1mm}
\begin{figure}
	\centering
	\includegraphics[width=1 \linewidth]{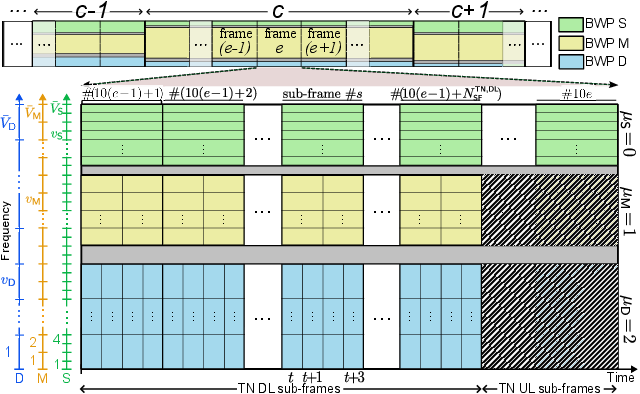}
    \vspace{-6mm}
    \captionsetup{font=footnotesize}
	\caption{Resource block grid.}
	\label{fig:RBGrid}
\end{figure}
The system supports three types of UEs, each associated with a specific communication service: (i) Delay-sensitive (${\sf{D}}$), such as uRLLC, exclusively served by TAPs; (ii) SatCom (${\sf{S}}$), provided solely by the LSat; and (iii) Multinet (${\sf{M}}$) supported by both TAPs and LSat.
The sets of UEs associated with these three services and set of all UEs are denoted by $\mathcal{K}_{\sf{D}}$, $\mathcal{K}_{\sf{S}}$, $\mathcal{K}_{\sf{M}}$, and $\mathcal{K} = \mathcal{K}_{\sf{D}} \cup \mathcal{K}_{\sf{S}} \cup \mathcal{K}_{\sf{M}}$ while the corresponding numbers of UEs are $K_{\sf{D}}=|\mathcal{K}_{\sf{D}}|$, $K_{\sf{S}}=|\mathcal{K}_{\sf{S}}|$, and $K_{\sf{S}}=|\mathcal{K}_{\sf{M}}|$, respectively; ${\sf{UE}}_{k}$ is the $k$-th UE.
The total BW is dynamically divided into three BW parts (BWPs) for three services. 
Each BWP employs a 5G-NR numerology, i.e., $\mu_{\sf{D}}$, $\mu_{\sf{M}}$, and $\mu_{\sf{S}}$, tailored to its respective latency perspective.
Let $\mathcal{L}$, $\mathcal{S}$, $L=|\mathcal{L}|$, and ${\sf{AP}}_{\ell}$ be the sets of TAPs, services, number of TAPs, and $\ell$-th TAP, respectively. 

As depicted in Fig.~\ref{fig:sysmodel}, this paper aims to jointly optimize \textit{``long-term decisions''}, i.e., traffic steering from the core network (CN) to the LSat and TAPs, BW allocation (BWA) between TN and SatNet systems and across BWPs--executed at the central network management system (NMS)--as well as \textit{``short-term decisions''}, i.e., UE association, RB assignment, and power control--executed at TN center unit (TCU) and gateway. 
To facilitate such optimization, we assume the availability of a DT model with the virtual representation of network components to enable the prediction of network information. \textit{The DT model will be presented in Section~\ref{sec:DTmodel}}. The proposed optimization framework operates over multiple time scales, including cycle, frame, and sub-frame (SF), which will be detailed later.

This paper considers a time window of $N_{\sf{Cy}}$ cycles. As depicted in Fig.~\ref{fig:sysmodel}, at the end of each cycle, the DT model updates its state (\textbf{1. Data from real-system}), predicts (\textbf{2. Prediction}), and provides predicted system-level information of the subsequent cycle to the NMS (\textbf{3. Future data from DT}). Leveraging this foresight, the NMS performs joint resource management (\textbf{Optimization step}), generating long-term decisions along with preliminary estimates of short-term ones (at \textbf{Solution block}). This stage is referred as the \textit{\textbf{``DT-based Joint RA Stage''}}. 
Each cycle is composed of multiple frames, with each frame consisting of $10$ sub-frames (SFs).
We assume that traffic rate information for $\sf{D}$ UEs can be updated at the SF level  \cite{kavehmadavani_TWC23}. Hence, \textit{``short-term decisions''} for the TN is further refined at the SF granularity (at \textbf{TCU}). This SF-level refinement process constitutes the \textit{\textbf{``RT-Refine Stage''}}.

\begin{table*}
    \centering
    \caption{Notation description.}
    \vspace{-2mm}
    \label{tab:notation}
    \scalebox{0.92}{
    \begin{tabular}{|l|l|l|l|}
        \hline 
        \multicolumn{4}{|c|}{\textbf{System notation}} \\
        \hline
        BWP, NMS, TAP, TCU & \multicolumn{3}{l|}{Bandwidth part, network management system, TN access point, and central unit} \\
        $\mathcal{K}$, $\mathcal{L}$, $\mathcal{S}$ & \multicolumn{3}{l|}{Set of all users, TAPs, and services} \\
        $\mathcal{K}_{\!\sf{D}}$, $\mathcal{K}_{\!\sf{S}}$, $ \mathcal{K}_{\!\sf{M}}$, $K_{\!\sf{D}}$, $K_{\!\sf{S}}$, $K_{\!\sf{M}}$  & \multicolumn{3}{l|}{Sets and numbers of users using delay-sensitive, SatCom, and multinet services} \\
        $\tilde{h}_{\ell,k}$, $\tilde{g}_{k}$ & \multicolumn{3}{l|}{Channel coefficients from ${\sf{TAP}}_{\ell}$ and LSAT to user $k$} \\
        $\{\boldsymbol{h}_{c}, \boldsymbol{g}_{c} \}$, $\{\boldsymbol{h}_{e}, \boldsymbol{g}_{e} \}$ & \multicolumn{3}{l|}{Channel gain sets in cycle $c$ and in frame $e$} \\
        $\tilde{h}_{\ell,k}^{{\sf{ls}}}$, $\tilde{h}_{\ell,k}^{{\sf{nl}}}$, $\tilde{g}_{k}^{{\sf{ls}}}$, $\tilde{g}_{k}^{{\sf{nl}}}$ & \multicolumn{3}{l|}{LoS and NLoS components in the real-system} \\
        $\mu_{\sf{D}}=2$, $\mu_{\sf{M}}=1$, $\mu_{\sf{S}}=0$ & \multicolumn{3}{l|}{Numerlogoies used in BWPs for corresponding services} \\
        \hline 
        \multicolumn{4}{|c|}{\textbf{Time-Frequency domain notation}} \\
        \hline
        $[v_{\sf{x}}, n_{\sf{x}}]$ & Frequency-time RB index in BWP $\sf{x}$ (used for service $\sf{x}$) &
        $c$, $e$, $s$ & Cycle, frame, and sub-frame indices \\
        $N_{\sf{TF}}$ and $N_{\sf{SF}}$ & Number of frames and sub-frames (SF) in one cycle &
        $N_{\sf{Cy}}$ & Number of time cycles \\
        $N_{\sf{x}}$ & Number of time indices of RBs for service $\sf{x}$ in one SF &
        $T_{\sf{x}}$ & Time duration of RB in BWP $\sf{x}$ \\
        $\mathcal{T}_{\sf{x}}^{{\sf{Cy}},c}$, $\mathcal{T}_{\sf{x}}^{{\sf{TF}},e}$, $\mathcal{T}_{\sf{x}}^{{\sf{SF}},s}$ & Time index sets of RBs for service $\sf{x}$ in cycle $c$, frame $e$, sub-frame $s$ &
        $N_{\sf{SF}}^{\sf{TN,DL}}$ & Number of TN DL SFs in each frame \\
        \hline 
        $[f^{\sf{min}},f^{\sf{max}}]$ & Operational frequency range &
        $W^{\sf{total}} $ & System bandwidth \\
        $W_{{\sf{x}},c} = [f^{\sf{min}}_{{\sf{x}},c},f^{\sf{max}}_{{\sf{x}},c}]$ & Frequency ranges of BWPs sliced for service $\sf{x}$ in cycle $c$ &
        $w_{\sf{x}}$ & RB channel spacing in BWP $\sf{x}$ \\
        $\bar{V}_{\sf{x}} = \lfloor W^{\sf{tot}} / w_{{\sf{x}}} \rfloor$ & Maximum SC numbers that can be assigned to service ${\sf{x}}$ &
        $W_{\sf{G1}}$, $W_{\sf{G2}}$ & Guard bands between BWPs \\
        $\mathcal{V}_{\sf{x}} \triangleq \{ 1,\dots, \bar{V}_{\sf{x}} \}$ & Set of all SCs that can be assigned to service $\sf{x}$ &
        $v_{\sf{x}} \in \mathcal{V}_{\sf{x}}$ & index of the SCs in BWP ${\sf{x}}$ \\
        \hline
        \multicolumn{4}{|c|}{\textbf{Digital-twin system}} \\
        \hline
        $\widehat{\sf{AP}}_{\ell}$, ${\sf{A}}_{\ell}^{\sf{ap}}$, $\boldsymbol{u}_{\ell}^{\sf{ap}}$ & TAP, TAP's position, and TAP's antenna property in TN cell $\ell$ &
        $\widehat{\sf{UE}}_{k,[e]}^{\sf{x}}$ & User $k$ using service $\sf{x}$ at frame $e$ \\
        $\widehat{\sf{Sat}}_{[e]}$, ${\sf{A}}^{\sf{sat}}$, $\hat{\boldsymbol{u}}^{\sf{sat}}_{[e]}$, ${\sf{TLE}}_{[e]}\!\!$ & LSAT, antenna, position, and TLE data in frame $e$ &
        $\widehat{\sf{env}}_{\ell}$, ${\sf{map}}_{\ell}$ & DT environment, 3D map in TN cell $\ell$ \\
        ${\sf{A}}_{k}^{\sf{ue}}$, $\hat{\boldsymbol{u}}_{k,[e]}^{{\sf{ue}}}$, $\hat{\lambda}_{k,[e]}^{\sf{TF,x}}$, $\hat{\lambda}_{k,[s]}^{\sf{SF,D}}\!\!$ & User antenna, position, traffic in frame $e$; traffic $\sf{D}$ in sub-frame $s$ &
        ${\sf{chan}}_{[e]}$;$\bar{h}_{\ell,k}^{{\sf{nl}}}$, $\bar{g}_{k}^{{\sf{nl}}}\!\!\!$ & Channel set; NLoS components \\
        \hline
        \multicolumn{4}{|c|}{\textbf{Transmission model notation}} \\
        \hline
        $\gamma_{\ell,k}^{{\sf{x}},[v_{\sf{x}},n_{\sf{x}}]}$ & SINR of signal from ${\sf{TAP}}_{\ell}$ at ${\sf{UE}}_{k}$ via ${\sf{RB}}_{[v_{\sf{x}}, n_{\sf{x}}]}$ &
        $\Psi_{\ell,k}^{[v_{\sf{x}},n_{\sf{x}}]}(\! \boldsymbol{p}_{\!c},\! \boldsymbol{\alpha}_{\!c}\!)$ & Inter-cell-interference (ICI) power \\
        $\Theta_{k}^{{\sf{TN}},[v_{\sf{M}},n_{\sf{M}}]}\!(\! \boldsymbol{P}_{\!c},\! \boldsymbol{\beta}_{\!c} \!)$ &  Inter-system-interference (ISyI) caused by the LSat &
        $\Theta_{k}^{{\sf{Sat}},[v_{\sf{M}},n_{\sf{M}}]}\!(\! \boldsymbol{P}_{\!c},\! \boldsymbol{\alpha}_{\!c} \!)\!\!\!\!$ & ISyI caused by TAPs \\
        $R_{\ell,k}^{{\sf{M}},[n_{\sf{M}}]}\!(\! \boldsymbol{P}_{\!c},\! \boldsymbol{\alpha}_{\!c},\! \boldsymbol{\beta}_{\!c} \!)$ & Rate of ${\sf{UE}}_{k}^{\sf{M}}$ served by ${\sf{AP}}_{\ell}$ at RB time $n_{\sf{M}}$ &
        $R_{\ell,k,[s]}^{{\sf{SF,D}}}(\boldsymbol{p}_{\!c},\boldsymbol{\alpha}_{\!c})$ & Rate of ${\sf{UE}}_{k}^{\sf{D}}$ at SF $s$ \\
        $\gamma_{0,k}^{{\sf{M}},[v_{\sf{M}},n_{\sf{M}}]}$ & SINR of signal from LSAT at ${\sf{UE}}_{k}^{\sf{M}}$ over ${\sf{RB}}_{[v_{\sf{M}},n_{\sf{M}}]}$ &
        $R_{0,k}^{{\sf{M}},[n_{\sf{M}}]}\!(\! \boldsymbol{P}_{\!c},\! \boldsymbol{\beta}_{\!c} \!)$ & Rate at RB time $n_{\sf{M}}$ of ${\sf{UE}}_{k}^{\sf{M}}$ from LSat \\
        $\gamma_{0,k}^{{\sf{S}},[v_{\sf{S}},n_{\sf{S}}]}$ & SNR of signal from LSAT at ${\sf{UE}}_{k}^{\sf{S}}$ over ${\sf{RB}}_{[v_{\sf{S}},n_{\sf{S}}]}$ &
        $R_{0,k}^{{\sf{S}},[n_{\sf{S}}]}\!(\! \boldsymbol{P}_{\!c},\! \boldsymbol{\beta}_{\!c} \!) $ & Rate at RB time $n_{\sf{S}}$ of ${\sf{UE}}_{k}^{\sf{M}}$ \\
        \hline
        \multicolumn{4}{|c|}{\textbf{Variables}} \\
        \hline
        $b^{\sf{x}}_{v_{\sf{x}},c}$, $\boldsymbol{b}_{c} \triangleq \{b^{\sf{x}}_{v_{\sf{x}},c} \}$ & \multicolumn{3}{l|}{Binary variable indicating using SC $v_{\sf{x}}$ or not and its set} \\
        $\alpha_{\ell,k}^{[v_{\sf{x}},n_{\sf{x}}]}$, $\boldsymbol{\alpha}_{c} = [\alpha_{\ell,k}^{[v_{\sf{x}},n_{\sf{x}}]}]$ & \multicolumn{3}{l|}{Binary variable indicating $\sf{AP}_{\ell}$ serves $\sf{UE}_{k}$ over $\sf{RB}_{[v_{\sf{x}},n_{\sf{x}}]}$ or not and its matrix in cycle $c$} \\
        $\beta_{k}^{[v_{\sf{x}},n_{\sf{x}}]}$, $\boldsymbol{\beta}_{c}= [\beta_{k}^{[v_{\sf{x}},n_{\sf{x}}]}]$ & \multicolumn{3}{l|}{Binary variable indicating $\sf{UE}_{k}$ $\sf{LSat}$ over ${\sf{RB}}_{[v_{\sf{x}} ,n_{\sf{x}}]}$ or not and its matrix in cycle $c$} \\
        $p_{\ell,k}^{[v_{\sf{x}},n_{\sf{x}}]}$, $\boldsymbol{p}_{c} = [p_{\ell,k}^{[v_{\sf{x}},n_{\sf{x}}]}]$ & \multicolumn{3}{l|}{Transmit power from ${\sf{AP}}_{\ell}$ to ${\sf{UE}}_{k}$ over ${\sf{RB}}_{[v_{\sf{x}},n_{\sf{x}}]}$ and its matrix in cycle $c$} \\
        $p_{0,k}^{[v_{\sf{x}},n_{\sf{x}}]}$, $\boldsymbol{p}_{0,c} = [p_{0,k}^{[v_{\sf{x}},n_{\sf{x}}]} \!]\!$ & \multicolumn{3}{l|}{Transmit power from LSAT to ${\sf{UE}}_{k}$ over ${\sf{RB}}_{[v_{\sf{x}},n_{\sf{x}}]}$ and its matrix in cycle $c$} \\
        $\omega_{k,c}^{\sf{CN,M}}$, $\omega_{\ell,k,c}^{\sf{x}}$ & \multicolumn{3}{l|}{Traffic steering from CN to TN; and that from TCU to $\sf{AP}_{\ell}$} \\
        $q_{\ell,k}^{{\sf{M}},[n_{\sf{M}}]}$, $q_{0,k}^{{\sf{M}},[n_{\sf{M}}]}$, $q_{0,k}^{{\sf{S}},[n_{\sf{S}}]}$ & \multicolumn{3}{l|}{Queue lengths (QLs) at ${\sf{AP}}_{\ell}/{\sf{LSat}}$ of flow $k$ of services ${\sf{M,S}}$ at each RB time} \\
        \hline
    \end{tabular}
    }
\end{table*}

\vspace{-4mm}
\subsection{5G-NR Standard-based Setting}
\vspace{-1mm}
\subsubsection{Frame and Sub-frame Setting}
According to 5G-NR standard, each frame duration is $10$~ms ($T_{\sf{TF}}=10$~ms); hence, duration of one SF is $1$~ms ($T_{\sf{SF}}=1$~ms). 
Assuming that each cycle $c$ contains $N_{\sf{TF}}$ frames, the number of SFs in one cycle is 
$N_{\sf{SF}} = 10 N_{\sf{TF}}$. 
Regarding the TDD transmission mode for TN, we assume that TN uses $N_{\sf{SF}}^{\sf{TN,DL}}$ beginning SFs for DL and $(10-N_{\sf{SF}}^{\sf{TN,DL}})$ ending SFs for UL. 

\subsubsection{Numerology Setting}
In this paper, numerologies used in BWPs are set as $\mu_{\sf{S}} = 0, \mu_{\sf{M}} = 1$ and $\mu_{\sf{D}} = 2$, reflecting the increasing tolerance to delay across the three service types \cite{Kihero_Access19,Korrai_OJCOM20}. Based on this, the RB grid is illustrated in Fig.~\ref{fig:RBGrid}.
\paragraph{Time Domain}
Hereafter, we define the RB duration for service $\sf{D}$ as a \textit{time slot} (TS) - the time unit.
Based on that, the time indices $n_{\sf{x}}$ of 
RBs of service $\sf{x}$ is defined as
\vspace{-2mm}
\begin{equation}\label{eq: time index}
    n_{\sf{D}} = t, \quad n_{\sf{M}} = \lceil t/2 \rceil, \quad
    n_{\sf{S}} = \lceil t/4 \rceil,     \vspace{-2mm} 
\end{equation}
respectively, where $\lceil \cdot \rceil$ is the ceil function.
Furthermore, the RB duration of service $\sf{x}$ can be given as $T_{\sf{x}} = 2^{-\mu_{\sf{x}}}$~ms; hence, there are $N_{\sf{x}}=2^{\mu_{\sf{x}}}$ RBs over one sub-channel of service $\sf{x}$ within a SF.
Then, the time indices of cycles ($c$), frames ($e$), sub-frames ($s$) can be counted based on $N_{\sf{D}}$ as
$c = \lceil t/(N_{\sf{D}}N_{\sf{SF}}) \rceil, e = \lceil t/(10 N_{\sf{D}}) \rceil, s = \lceil t/N_{\sf{D}} \rceil$.
In addition, the sets of RB indices in the time domain corresponding to service $\sf{x}$ within cycle $c$, frame $e$, and SF $s$ can be defined as $\mathcal{T}_{\sf{x}}^{{\sf{Cy}},c} \triangleq \{ (c-1)N_{\sf{x}}N_{\sf{SF}} +1,\dots, c N_{\sf{x}}N_{\sf{SF}} \}$, $\mathcal{T}_{\sf{x}}^{{\sf{TF}},e} \triangleq \{ (e-1)10N_{\sf{x}} +1,\dots, e 10 N_{\sf{x}}\}$, and $\mathcal{T}_{\sf{x}}^{{\sf{SF}},s} \triangleq \{ (s-1)N_{\sf{x}} +1,\dots, s N_{\sf{x}}\}$.

\paragraph{Frequency Domain}
Let $W_{{\sf{x}},c} = [f^{\sf{min}}_{{\sf{x}},c},f^{\sf{max}}_{{\sf{x}},c}]$ be the frequency ranges of BWPs sliced for service $\sf{x}$ in cycle $c$.
For simplicity, we set $f^{\sf{max}}_{{\sf{D}},c} \leq f^{\sf{min}}_{{\sf{M}},c}$ and $f^{\sf{max}}_{{\sf{M}},c} \leq f^{\sf{min}}_{{\sf{S}},c}$ for all $c$.
Then, the BW of service $\sf{x}$ in cycle $c$ is given as $W_{{\sf{x}},c} = f^{\sf{max}}_{{\sf{x}},c} - f^{\sf{min}}_{{\sf{x}},c}$. 
The BWPs are divided into sub-channels (SCs) due to various numerologies. For service $\sf{x}$, the RB channel spacing is given as $w_{\sf{x}} = 2^{\mu_{\sf{x}}} \times 180$~kHz. Let $W^{\sf{tot}} = f^{\sf{max}} - f^{\sf{min}}$ (Hz) denote the total BW. 
Then, the maximum SC number that can be assigned to the service
${\sf{x}}$ is $\bar{V}_{\sf{x}} = \lfloor W^{\sf{tot}} / w_{{\sf{x}}} \rfloor$.
We further denote $v_{\sf{x}}$ indices of the SCs in BWP ${\sf{x}}$, and $\mathcal{V}_{\sf{x}} \triangleq \{ 1,\dots, \bar{V}_{\sf{x}} \}$, i.e., $v_{\sf{x}} \in \mathcal{V}_{\sf{x}}$. 
Once, the BWP of service $\sf{x}$ is defined, the SCs of $\mathcal{V}_{\sf{x}}$ located in the corresponding frequency range will be activated for service-$\sf{x}$ transmission. 
Hereafter, the RBs in BWP $\sf{x}$ are identified based on both time and frequency indices, i.e., $[v_{\sf{x}}, n_{\sf{x}}]$.

Following 5G standard, guard bands (GBs) between BWPs are considered.
Here, the $\sf{D}-\sf{M}$ separating GB BW should be half of $w_{\sf{D}}$, i.e., $W_{\sf{G1}} = f^{\sf{min}}_{{\sf{M}},c} -  f^{\sf{max}}_{{\sf{D}},c} = w_{\sf{D}}/2$, and the $\sf{M}-\sf{S}$ separating GB BW is half of $w_{\sf{M}}$, $W_{\sf{G2}} = f^{\sf{min}}_{{\sf{S}},c} -  f^{\sf{max}}_{{\sf{M}},c} = w_{\sf{M}}/2$.
 

\begin{remark} \label{rmk01}
   Denote $v_{{\sf{D}},c}^{\sf{max}}$ and $v_{{\sf{M}},c}^{\sf{max}}$ the maximum indices of activated SCs of service $\sf{D}$ and $\sf{M}$ in cycle $c$. With respect to GB, we can obtain $f^{\sf{min}}_{{\sf{M}},c} = f^{\sf{min}} + (2v_{{\sf{D}},c}^{\sf{max}}+1) w_{\sf{M}} = f^{\sf{min}} + (4v_{{\sf{D}},c}^{\sf{max}}+2) w_{\sf{S}} $ and $f^{\sf{min}}_{{\sf{S}},c} = f^{\sf{min}} + (2v_{{\sf{M}},c}^{\sf{max}}+1) w_{\sf{S}} $.
\end{remark}

Let $\boldsymbol{b}_{\!c} \! \triangleq \! \{b^{\sf{x}}_{v_{\sf{x}},c} | \forall v_{\sf{x}} \! \in \! \mathcal{V}_{\sf{x}}, {\sf{x}} \! \in \! \mathcal{S} \}$ where $ b^{\sf{x}}_{v_{\sf{x}},c} = 1$ if SC $v_{\sf{x}}$ is activated in cycle $c$ and $ b^{\sf{x}}_{v_{\sf{x}},c} = 0$ otherwise. 
Remark~\ref{rmk01} yields,
\vspace{-6mm}
\begin{IEEEeqnarray}{ll}
    (C1): \; &b^{\sf{D}}_{v_{\sf{D}},c} \! + \! \scaleobj{0.8}{\sum\nolimits_{i=1}^{2v_{\sf{D}}+1} } \! b^{\sf{M}}_{i,c} \! + \! \scaleobj{0.8}{\sum\nolimits_{j=1}^{4v_{\sf{D}}+2 } } b^{\sf{s}}_{j,c} \leq 1, \; \forall (v_{\sf{D}}, c), \nonumber \vspace{-0.5mm} \\
    (C2): \; &b^{\sf{M}}_{v_{\sf{M}},c} +  \scaleobj{0.8}{\sum\nolimits_{j=1}^{2 v_{\sf{M}}+1} } b^{\sf{S}}_{j,c} \leq 1, \; \forall (v_{\sf{M}}, c). \nonumber
\end{IEEEeqnarray}
Additionally, the system BW constraint yields
\vspace{-2mm}
\begin{equation}
    (C3): \; \scaleobj{.8}{\sum\nolimits_{\sf{x} \in \mathcal{S}} \sum\nolimits_{\forall v_{\sf{x}} \in \mathcal{V}_{\sf{x}}}} b^{\sf{x}}_{v_{\sf{x}},c} w_{\sf{x}}  + W_{\sf{G1}} + W_{\sf{G2}} \leq W^{\sf{tot}}, \; \forall c. \nonumber
    \vspace{-3mm}
\end{equation}

\vspace{-2mm}

\subsection{Channel Model} \label{sec: channel}
\vspace{-1mm}
Let $\tilde{h}_{\ell,k}^{[v_{\sf{x}},n_{\sf{x}}]}$ and $\tilde{g}_{k}^{[v_{\sf{x}},n_{\sf{x}}]}$ be the channel coefficients of ${\sf{AP}}_{\ell}-{\sf{UE}}_{k}$ and ${\sf{LSat}}-{\sf{UE}}_{k}$ links over ${\sf{RB}}_{[v_{\sf{x}},n_{\sf{x}}]}$, respectively. 
Omitting $[v_{\sf{x}},n_{\sf{x}}]$, the channel coefficients are modeled as
\vspace{-1mm}
\bieq{ll}
    \tilde{h}_{\ell,k} = \sqrt{{\sf{PL}}_{\ell,k}} \left( \scaleobj{0.8}{\sqrt{\frac{\tilde{K}_{\ell,k}}{\tilde{K}_{\ell,k}+1}}} \tilde{h}_{\ell,k}^{{\sf{ls}}} + \scaleobj{0.8}{\sqrt{\frac{1}{\tilde{K}_{\ell,k}+1}}} \tilde{h}_{\ell,k}^{{\sf{nl}}}  \right),   \\
    \tilde{g}_{k} = \sqrt{{\sf{PL}}_{0,k}} \left( \scaleobj{0.8}{\sqrt{\frac{\tilde{K}_{0,k}}{\tilde{K}_{0,k}+1}}} \tilde{g}_{k}^{{\sf{ls}}} + \scaleobj{0.8}{\sqrt{\frac{1}{\tilde{K}_{0,k}+1}}} \tilde{g}_{k}^{{\sf{nl}}}  \right), \vspace{-2mm}
\eieq
based on the Rician model, where $\tilde{K}_{\ell,k}$, ${\sf{PL}}_{\ell,k}$, $\tilde{h}_{\ell,k}^{\sf{ls}}$, $\tilde{h}_{\ell,k}^{\sf{nl}}$, $\tilde{K}_{0,k}$, ${\sf{PL}}_{0,k}$, $\tilde{g}_{k}^{{\sf{ls}}}$, and $\tilde{g}_{k}^{{\sf{nl}}}$ are the K-factors, path-loss, the corresponding line-of-sight (LoS) and non-line-of-sight (NLoS) components of the TN and SatCom channels, respectively. 
In this work, we assume that TAPs can estimate perfectly CSIs for their served UEs in each frame thanks to the pilot sent in TN uplink SFs \cite{3gpp.38.211}. 
The channel gain in cycle $c$ and in frame $e$ are denoted by $\{\boldsymbol{h}_{\!c}, \boldsymbol{g}_{\!c} \}$ and $\{\boldsymbol{h}_{e}, \boldsymbol{g}_{e} \}$.
The relationship between the real-system and DT channels will be described in Section~\ref{sec: DT channel}.

\vspace{-2mm}
\subsection{Digital-Twin Model}\label{sec:DTmodel}
\vspace{-1mm}
The DT system is deployed in NMS to replicate the environment and network components. 
As illustrated in Fig.~\ref{fig:DT_model}, the DT model comprises a virtual system and computational functionalities. The virtual system consists of \textit{static components}, including environmental features, TAP positions, and terminal antenna characteristics, as well as \textit{dynamic components}, such as UE mobility, traffic demand, LSat movement, and CSIs. 
The DT model operation over each cycle is depicted in Fig.~\ref{fig:sysmodel}. 
Due to imperfect prediction, modeling limitations, and incomplete environmental information, mismatches inevitably arise. To enhance the accuracy, the DT system must be regularly calibrated by updating its components based on feedback from the actual system. 
Specifically, at the end of each cycle, the real mobility and traffic information are updated to the DT model. Using this information, computational functionalities are leveraged to emulate the real system's evolution in the next cycle and reflect it in the virtual one. The resulting predictions are used in resource management for the subsequent cycle.


\subsubsection{Geographical Environment} The DT system integrates 3D maps of the targeted areas. In cell $\ell$, the corresponding environment in DT system is represented by
\vspace{-2mm}
\begin{equation} \label{eq: cell DT}
    \widehat{\sf{env}}_{\ell} = \{{\sf{map}}_{\ell}\}, \quad 
    \widehat{\sf{AP}}_{\ell} = \{ {\sf{A}}_{\ell}^{\sf{ap}}, \boldsymbol{u}_{\ell}^{\sf{ap}} \}, \quad \forall \ell \in \mathcal{L}, 
    \vspace{-2mm}
\end{equation}
wherein ${\sf{map}}_{\ell}$ indicates the 3D map of the area in cell $\ell$, 
${\sf{A}}_{\ell}^{\sf{ap}}$ and $\boldsymbol{u}_{\ell}^{\sf{ap}}$ are the antenna properties and position of ${\sf{AP}}_{\ell}$.

\subsubsection{UE Information} Virtual $\sf{UE}_{k}$ in frame $e$ is modeled as 
\vspace{-2mm}
\begin{IEEEeqnarray}{ll} \label{eq: UE DT}
    \widehat{\sf{UE}}_{k,[e]}^{\sf{D}} = \{{\sf{A}}_{k}^{\sf{ue}}, \hat{\boldsymbol{u}}_{k,[e]}^{{\sf{ue}}} , \hat{\lambda}_{k,[s]}^{\sf{SF,D}}\vert_{\forall s \in \mathcal{T}_{\sf{SF}}^{{\sf{TF}},e}} \}, \; \forall k \in \mathcal{K}_{\sf{D}}, \subnum \\
    \widehat{\sf{UE}}_{k,[e]}^{\sf{x}} = \{{\sf{A}}_{k}^{\sf{ue}}, \hat{\boldsymbol{u}}_{k,[e]}^{{\sf{ue}}} , \hat{\lambda}_{k,[e]}^{\sf{TF,x}} \}, \; \forall k \in \mathcal{K}_{\sf{x}}, {\sf{x} \in \{\sf{M,S}\}}, \subnum \vspace{-1mm}
\end{IEEEeqnarray}
wherein $\mathcal{T}_{\sf{SF}}^{{\sf{TF}},e}$ is SF set in frame $e$,
${\sf{A}}_{k}^{\sf{ue}}$, $\hat{\boldsymbol{u}}_{k,[e]}^{{\sf{ue}}}$, $\hat{\lambda}_{k,[e]}^{\sf{TF,x}}$, and $\hat{\lambda}_{k,[s]}^{\sf{SF,D}}$ are the virtual antenna properties, position and traffic rate of ${\sf{UE}}_{k}$ in frame $e$ and SF $s$, respectively. \textit{The UE traffic rates will be described in Section~\ref{sec:traffic}}.
This UE information in the upcoming cycle is obtained based on prediction techniques and updated information from the real environment. 

\subsubsection{LSat Model} 
The LSat DT at frame $e$ is modeled as
\vspace{-1mm}
\begin{equation} \label{eq: Sat DT}
    \widehat{\sf{Sat}}_{[e]} = \{ {\sf{A}}^{\sf{sat}}, \hat{\boldsymbol{u}}^{\sf{sat}}_{[e]}, {\sf{TLE}}_{[e]} \}, \vspace{-1mm}
\end{equation} 
where ${\sf{A}}^{\sf{sat}}$, $ \hat{\boldsymbol{u}}^{\sf{sat}}_{[e]}$, and ${\sf{TLE}}_{[e]}$ are the antenna properties, predicted position, and two-line-element (TLE) data of LSat in frame $e$, respectively. 
Due to the orbit stability and the frequent update framework, we assume that the LSat position is predicted with a negligible error, that is, $\hat{\boldsymbol{u}}^{\sf{sat}}_{[e]} \approx \boldsymbol{u}^{\sf{sat}}_{[e]}$.

\subsubsection{Channel Prediction} \label{sec: DT channel}
Using the RayT tool as in Fig.~\ref{fig:DT_model}, the DT channel set is identified as our previous work \cite{Hung_TCOM25}
\vspace{-2mm}
\begin{equation} \label{eq: DT channel predict}
    {\sf{chan}}_{[e]} = {\sf{RT}}(\widehat{\sf{env}}_{\ell}, \widehat{\sf{AP}}_{\ell},\widehat{\sf{Sat}}_{[e]}, \widehat{\sf{UE}}_{k,[e]}), 
    \vspace{-1mm}
\end{equation}
where $\scaleobj{0.8}{\widehat{\sf{UE}}}_{k,[e]}$ and $\scaleobj{0.9}{\sf{chan}}_{[e]}$ are expressed as $\scaleobj{0.8}{\widehat{\sf{UE}}}_{k,[e]}=\{\scaleobj{0.8}{\widehat{\sf{UE}}_{k,[e]}^{\sf{x}}}\}_{\forall \sf{x}}$, $\scaleobj{0.9}{\sf{chan}}_{[e]} = \{ \tilde{h}_{\ell,k}^{{\sf{ls}},[v_{\sf{x}},n_{\sf{x}}]}, \bar{h}_{\ell,k}^{{\sf{nl}},[v_{\sf{x}},n_{\sf{x}}]},$ $ \scaleobj{0.8}{\sf{PL}}_{\ell,k}^{[v_{\sf{x}},n_{\sf{x}}]}, \tilde{K}_{\ell,k}^{[v_{\sf{x}},n_{\sf{x}}]}, \tilde{g}_{k}^{{\sf{ls}},[v_{\sf{x}},n_{\sf{x}}]}, \bar{g}_{k}^{{\sf{nl}},[v_{\sf{x}},n_{\sf{x}}]}, \scaleobj{0.8}{\sf{PL}}_{0,k}^{[v_{\sf{x}},n_{\sf{x}}]}, \tilde{K}_{0,k}^{[v_{\sf{x}},n_{\sf{x}}]} \}$, with $\forall (\ell ,k), \forall n_{\sf{x}} \in \mathcal{T}^{{\sf{TF}},e}_{\sf{x}}$.
In virtual environment, $\bar{h}_{\ell,k}^{{\sf{nl}},[v_{\sf{x}},n_{\sf{x}}]}$ and $\bar{g}_{k}^{{\sf{nl}},[v_{\sf{x}},n_{\sf{x}}]}$ are introduced to represent the NLoS components of DT channels. Those are different with $\tilde{h}_{\ell,k}^{{\sf{nl}},[v_{\sf{x}},n_{\sf{x}}]}$ and $\tilde{g}_{k}^{{\sf{nl}},[v_{\sf{x}},n_{\sf{x}}]}$ of real system channels, primarily due to the absence of information in the 3D map. The relationship between real and virtual NLoS components is modeled as
\vspace{-1mm}
\beq \label{eq: NLOS model}
    \tilde{h}_{\ell,k}^{{\sf{nl}}} \! = \! \sqrt{\xi} \bar{h}_{\ell,k}^{{\sf{nl}}} \! + \! \sqrt{(1 \! - \! \xi)} \delta_{\ell,k} \text{ and } \tilde{g}_{k}^{{\sf{nl}}} \! = \! \sqrt{\xi} \bar{g}_{k}^{{\sf{nl}}} \! + \! \sqrt{(1 \! - \! \xi)} \delta_{0,k}, \vspace{-1mm}
\eeq
where $\delta_{\ell,k}$ and $\delta_{0,k}$ indicate errors caused by the absence of map information, which are assumed as complex normal random variables. $\xi \in (0,1)$ is the correlation factor. 
In addition, it is worth noting that the position errors also contribute to mismatches between actual and DT channels, as they affect both LoS and NLoS components.

\begin{figure*}
    \centering
    \includegraphics[width=0.83\linewidth]{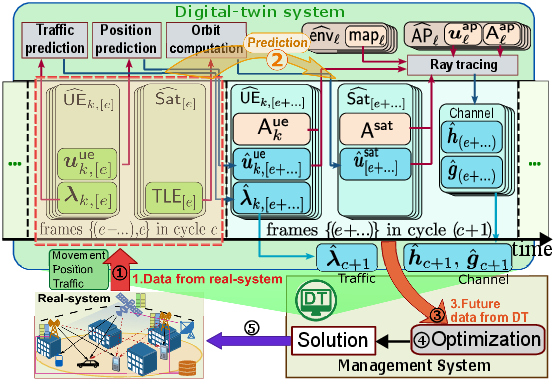}
    \captionsetup{font=footnotesize}
    \caption{Summary of the DT-system.}
    \label{fig:DT_model}
    \vspace{-4mm}
\end{figure*}

\vspace{-3mm}

\section{DT-based Optimization Problem Formulation}
\vspace{-2mm}
\subsection{User Association}
\vspace{-1mm}
Let $\boldsymbol{\alpha} = [\alpha_{\ell,k}^{[v_{\sf{x}},n_{\sf{x}}]}]$ be the binary association variable for $ {\sf{x}\in\{\sf{D,M}\}}$ where 
$\alpha_{\ell,k}^{[v_{\sf{x}},n_{\sf{x}}]} = 1$ if $\sf{AP}_{\ell}$ served $\sf{UE}_{k}$ over $\sf{RB}_{[v_{\sf{x}},n_{\sf{x}}]}$ for DL and $\alpha_{\ell,k}^{[v_{\sf{x}},n_{\sf{x}}]} = 0$ otherwise. Note that the association with ${\sf{UE}}^{\sf{x}}$ over SC $v_{\sf{x}}$ in BWP $\sf{x}$ is allowed only if this SC is activated, which is ensured as
\vspace{-1mm}
\begin{equation}
    (C4)\!: \alpha_{\ell,k}^{[v_{\sf{x}},n_{\sf{x}}]} \! \leq \! b^{\sf{x}}_{v_{\sf{x}},c}, \forall (c,\ell ,k,[v_{\sf{x}},n_{\sf{x}}]), {\sf{x}} \!\! \in \! \{\sf{D,M}\}. \nonumber \vspace{-1mm}
\end{equation}
In addition, one assumes that each RB of each AP can be assigned to at most one UE, which yields
\vspace{-1mm}
\begin{equation}
    (C5): \; \scaleobj{.8}{\sum\nolimits_{\forall k}} \alpha_{\ell,k}^{[v_{\sf{x}},n_{\sf{x}}]} \leq 1, \; \forall (c, \ell, [v_{\sf{x}},n_{\sf{x}}]), {\sf{x}} \in \{\sf{D,M}\}. \nonumber \vspace{-1mm}
\end{equation}
While each $\sf{UE}^{\sf{D}}$ can be served by multiple APs at each TS via different RBs, which is ensured as
\vspace{-1mm}
\begin{equation}
    (C6): \; \scaleobj{.8}{\sum\nolimits_{\forall \ell}} \alpha_{\ell,k}^{[v_{\sf{D}},n_{\sf{D}}]} \leq 1, \; \forall (c,k, [v_{\sf{D}}, n_{\sf{D}}]). \nonumber \vspace{-1mm}
\end{equation}

For LSat-UE association, we introduce a variable $\boldsymbol{\beta}= [\beta_{k}^{[v_{\sf{x}},n_{\sf{x}}]}]$ for $\sf{x} \in \{\sf{M,S}\}$ where
$\beta_{k}^{[v_{\sf{x}},n_{\sf{x}}]} = 1$ if $\sf{UE}_{k}$ is served by $\sf{LSat}$ over ${\sf{RB}}_{[v_{\sf{x}} ,n_{\sf{x}}]}$, $\beta_{k}^{[v_{\sf{x}},n_{\sf{x}}]} = 0$ otherwise.
Similar to the AP-UE association, we have
\vspace{-1mm}
\beqn
    &&(C7): \; \beta_{k}^{[v_{\sf{x}},n_{\sf{x}}]} \leq b^{\sf{x}}_{v_{\sf{x}},c}, \forall (c,k, [v_{\sf{x}}, n_{\sf{x}}], \nonumber \\
    &&(C8): \; \scaleobj{.8}{\sum\nolimits_{\forall k}} \beta_{k}^{[v_{\sf{x}},n_{\sf{x}}]} \leq 1, \; \forall (c, [v_{\sf{x}},n_{\sf{x}}]), {\sf{x}} \in \{\sf{M,S}\}, \nonumber \vspace{-1mm}
\eeqn
Additionally, we assume that UEs using $\sf{M}$ services can be served by both TAPs and LSat at the same time via different RBs in BWP $\sf{M}$, which yields the constraint
\vspace{-1mm}
\begin{equation}
    (C9): \; \scaleobj{.8}{\sum\nolimits_{\forall \ell}} \alpha_{\ell,k}^{[v_{\sf{M}},n_{\sf{M}}]} + \beta_{k}^{[v_{\sf{M}},n_{\sf{M}}]} \leq 1, \; \forall (c, k, [v_{\sf{M}}, n_{\sf{M}}]). \nonumber \vspace{-2mm}
\end{equation}

\subsection{Service-$\sf{D}$ Transmission}
\vspace{-1mm}
Assuming that ${\sf{UE}}_{k}^{\sf{D}}$ is served by ${\sf{AP}}_{\ell}$ over ${\sf{RB}}_{[v_{\sf{D}},n_{\sf{D}}]}$, the received signal $y_{k}^{[v_{\sf{D}},n_{\sf{D}}]}$ and its components are expressed as
\vspace{-2mm}
\begin{align}\label{eq: yk_d}
    y_{k}^{{\sf{D}},[v_{\sf{D}},n_{\sf{D}}]} = \tilde{y}_{k}^{{\sf{TN}},[v_{\sf{D}},n_{\sf{D}}]} +  \varsigma_{k}^{[v_{\sf{D}},n_{\sf{D}}]} ,
    \vspace{-2mm}
\end{align}
where $\tilde{y}_{k}^{{\sf{TN}},[v_{\sf{x}},n_{\sf{x}}]} \!\!=\! {\sum_{\forall (i ,j)} \!\!\! \scaleobj{0.9}{ \sqrt{ \alpha_{i,j}^{[v_{\sf{x}},n_{\sf{x}}]} p_{i,j}^{[v_{\sf{x}},n_{\sf{x}}]} } } \tilde{h}_{i,k}^{[v_{\sf{x}},n_{\sf{x}}]} x_{i,j}^{[v_{\sf{x}},n_{\sf{x}}]}}$; $p_{\ell,k}^{[v_{\sf{x}},n_{\sf{x}}]}$ and $x_{\ell,k}^{[v_{\sf{x}},n_{\sf{x}}]}$ are the transmit power and transmission symbol from ${\sf{AP}}_{\ell}$ to ${\sf{UE}}_{k}$ over ${\sf{RB}}_{[v_{\sf{x}},n_{\sf{x}}]}$. $\varsigma_{k}^{[v_{\sf{x}},n_{\sf{x}}]} \sim \mathcal{CN}(0,\sigma_{\sf{x},k}^{2})$ is the AGWN at ${\sf{UE}}_{k}$.
Hence, the SINR of ${\sf{UE}}_{k}$ is expressed as
\vspace{-1mm}
\begin{equation}\label{eq: SINR BWPds}
    \gamma_{\ell,k}^{{\sf{D}},[v_{\sf{D}},n_{\sf{D}}]}(\! \boldsymbol{p}_{\!c},\! \boldsymbol{\alpha}_{\!c}\!) = \scaleobj{.8}{\frac{\alpha_{\ell,k}^{[v_{\sf{D}},n_{\sf{D}}]} p_{\ell,k}^{[v_{\sf{D}},n_{\sf{D}}]} h_{\ell,k}^{[v_{\sf{D}},n_{\sf{D}}]}}{ \Psi_{\ell,k}^{[v_{\sf{D}},n_{\sf{D}}]}(\! \boldsymbol{p}_{\!c},\! \boldsymbol{\alpha}_{\!c}\!) + \sigma_{{\sf{D}},k}^2}}, \vspace{-1mm}
\end{equation}
where $\Psi_{\ell,k}^{[v_{\sf{x}},n_{\sf{x}}]}(\! \boldsymbol{p}_{\!c},\! \boldsymbol{\alpha}_{\!c}\!)$ is the inter-cell-interference (ICI) power,
\vspace{-1mm}
\begin{equation} \label{eq:ICI}
    \hspace{-3mm} \Psi_{\ell,k}^{[v_{\sf{x}},n_{\sf{x}}]}(\! \boldsymbol{p}_{\!c},\! \boldsymbol{\alpha}_{\!c}\!) = \scaleobj{.8}{\sum\nolimits_{\substack{ \forall i \neq \ell,  \forall j}} } \alpha_{i,j}^{[v_{\sf{x}},n_{\sf{x}}]} p_{i,j}^{[v_{\sf{x}},n_{\sf{x}}]} h_{i,k}^{[v_{\sf{x}},n_{\sf{x}}]}, \vspace{-1mm}
\end{equation}
with $(\boldsymbol{p}_{\!c},\boldsymbol{\alpha}_{\!c} ) \! \triangleq \! \{ (p_{\ell,k}^{[v_{\sf{x}},n_{\sf{x}}]},\alpha_{\ell,k}^{[v_{\sf{x}},n_{\sf{x}}]}) \vert_{\forall ( \ell, k, [v_{\sf{x}},n_{\sf{x}}]), {\sf{x}} \in \{\sf{D,M}\} } \}$ and $h_{\ell,k}^{[v_{\sf{x}},n_{\sf{x}}]} = |\tilde{h}_{\ell,k}^{[v_{\sf{x}},n_{\sf{x}}]}|^2$.
Due to the delay-sensitive requirement, the short packet framework is used to model the transmission of $\sf{D}$ service. Hence, the aggregated achievable rate of ${\sf{UE}}_{k}^{\sf{D}}$ served by ${\sf{AP}}_{\ell}$ at SF $s$ can be expressed as \cite{polyanskiy_FiniteCode}
\begin{IEEEeqnarray}{ll}
    R_{\ell,k,[s]}^{{\sf{SF,D}}}(\boldsymbol{p}_{\!c},\boldsymbol{\alpha}_{\!c}) = & w_{\sf{D}} \Big[ \scaleobj{0.8}{\sum\limits_{\forall (v_{\sf{D}},n_{\sf{D}}) } } \!\!\!\! \log_2{(1 + \gamma_{\ell,k}^{{\sf{D}},[v_{\sf{D}},n_{\sf{D}}]}(\! \boldsymbol{p}_{\!c},\! \boldsymbol{\alpha}_{\!c}\!)})   \nonumber \\
    & \;\; - \frac{1}{\ln(2)} \! \times \! {\frac{\alpha_{\ell,k}^{[v_{\sf{D}},n_{\sf{D}}]} \!\! \scaleobj{0.8}{\sqrt{V_{\ell,k}^{{\sf{D}},[v_{\sf{D}},n_{\sf{D}}]}} } Q^{-1}(P_{\epsilon}) }{{\sqrt{ { \sum\nolimits_{\forall (v_{\sf{D}},n_{\sf{D}})} } \alpha_{\ell,k}^{[v_{\sf{D}},n_{\sf{D}}]} T_{\sf{d}} w_{\sf{d}}}} } } \Big], \quad \quad
\end{IEEEeqnarray}
where $V_{\ell,k}^{[v_{\sf{D}},n_{\sf{D}}]} = 1 - (1 + \gamma_{\ell,k}^{{\sf{D}},[v_{\sf{D}},n_{\sf{D}}]}(\! \boldsymbol{p}_{\!c},\! \boldsymbol{\alpha}_{\!c}\!))^{-2}$, $Q^{-1}(\cdot)$ and $P_{\epsilon}$ are the channel dispersion, the inverse of the Q-function, and the error probability. One can see that the channel dispersion can be approximated as $V_{\ell,k}^{[v_{\sf{D}},n_{\sf{D}}]} \approx 1$ for a sufficiently high $\gamma_{\ell,k}^{{\sf{D}},[v_{\sf{D}},n_{\sf{D}}]}(\! \boldsymbol{p}_{\!c},\! \boldsymbol{\alpha}_{\!c}\!) \geq \gamma_{0}^{\sf{D}}$ with $\gamma_{0}^{\sf{D}} \geq 5$~dB \cite{schiessl_Fading_FiniteCode}. Regarding this approximation, we consider the following constraint
\vspace{-1mm}
\begin{equation}
    (C10):  \gamma_{\ell,k}^{{\sf{D}},[v_{\sf{D}},n_{\sf{D}}]}(\! \boldsymbol{p}_{\!c},\! \boldsymbol{\alpha}_{\!c}\!) \geq \alpha_{\ell,k}^{[v_{\sf{D}},n_{\sf{D}}]} \gamma_{0}^{\sf{D}}, \; \forall (\ell ,k,v_{\sf{D}},n_{\sf{D}}) . \nonumber \vspace{-1mm}
\end{equation}
Subsequently, $R_{\ell,k,[s]}^{{\sf{SF,D}}}(\! \boldsymbol{p}_{\!c},\! \boldsymbol{\alpha}_{\!c}\!)$ can be rewritten as 
\vspace{-1mm}
\begin{equation}\label{eq: rate DS}
    R_{\ell,k,[s]}^{{\sf{SF,D}}}\!\!(\!\boldsymbol{p}_{\!c},\boldsymbol{\alpha}_{\!c} \!) \!=\! 
    w_{\sf{D}} \!\!\!\!\! \scaleobj{0.8}{\sum_{ \forall (v_{\sf{D}},n_{\sf{D}}) } } \!\!\!\! \log_2{\!(\! 1 \!+\! \gamma_{\ell,k}^{{\sf{D}},[v_{\sf{D}},n_{\sf{D}}]}(\! \boldsymbol{p}_{\!c},\! \boldsymbol{\alpha}_{\!c}\!) } ) 
    \!-\! \chi_{\sf{D}} \!\! \scaleobj{0.8}{\sqrt{ \hspace{-4mm} \scaleobj{0.8}{\sum_{\hspace{5mm} \forall (v_{\sf{D}},n_{\sf{D}}) }} \!\!\!\!\!\!\! \alpha_{\ell,k}^{[v_{\sf{D}},n_{\sf{D}}]} } }, \vspace{-1mm}
\end{equation}
where $\chi_{\sf{D}} \!=\! \scaleobj{0.8}{\sqrt{V w_{\sf{D}} / T_{\sf{D}} } } Q^{-1}(P_{\epsilon}) / \ln(2)$ and $V \approx V_{\ell,k}^{{\sf{D}},[v_{\sf{D}}, n_{\sf{D}}]} \approx 1$.


\subsection{Service-$\sf{M}$ Transmission}
In this BWP, UEs can be served by APs and the LSat. 
The received signal $y_{k}^{{\sf{M}},[v_{\sf{M}},n_{\sf{M}}]}$ at ${\sf{UE}}_{k}^{\sf{M}}$ over ${\sf{RB}}_{[v_{\sf{M}},n_{\sf{M}}]}$ and its components is expressed as
\vspace{-2mm}
\begin{IEEEeqnarray}{ll}
    y_{k}^{{\sf{M}},[v_{\sf{M}},n_{\sf{M}}]} = \tilde{y}_{k}^{{\sf{TN}}, [v_{\sf{M}},n_{\sf{M}}]} + \tilde{y}_{k}^{{\sf{Sat}},{\sf{M}},[v_{\sf{M}},n_{\sf{M}}]} + \varsigma_{k}^{[v_{\sf{M}},n_{\sf{M}}]}, \\
    \tilde{y}_{k}^{{\sf{Sat}},{\sf{M}},[v_{\sf{M}},n_{\sf{M}}]} \triangleq \scaleobj{0.8}{\sum\nolimits_{\forall j}} \scaleobj{0.8}{ \sqrt{\beta_{j}^{[v_{\sf{M}},n_{\sf{M}}]} p_{0,j}^{[v_{\sf{M}},n_{\sf{M}}]} } } \tilde{g}_{k}^{[v_{\sf{M}},n_{\sf{M}}]} x_{j}^{[v_{\sf{M}},n_{\sf{M}}]},
    \vspace{-1mm}
\end{IEEEeqnarray}
where $p_{0,k}^{[v_{\sf{x}},n_{\sf{x}}]}$ is the transmit power from the LSat to ${\sf{UE}}_{k}$ over ${\sf{RB}}_{[v_{\sf{x}},n_{\sf{x}}]}$, and $g_{k}^{[v_{\sf{x}},n_{\sf{x}}]} = |\tilde{g}_{k}^{[v_{\sf{x}},n_{\sf{x}}]}|^2$. 
For brevity, let's define $(\boldsymbol{p}_{\!0,c},\boldsymbol{\beta}_{\!c}) \triangleq \{ ( p_{0,k}^{[v_{\sf{x}},n_{\sf{x}}]},\beta_{k}^{[v_{\sf{x}},n_{\sf{x}}]} ) \vert_{ \forall (k,v_{\sf{x}},n_{\sf{x}}), {\sf{x}} \in\{\sf{M,S}\} } \}$ and $\boldsymbol{P}_{\!c} \triangleq \{ \boldsymbol{p}_{\!c},\boldsymbol{p}_{\!0,c} \}$. Hereafter, argument $\boldsymbol{p}_{\!c}$ in defined functions is replaced appropriately by $\boldsymbol{P}_{\!c}$.

\subsubsection{TN-served UEs}
Assuming that ${\sf{UE}}_{k}^{\sf{M}}$ is served by ${\sf{AP}}_{\ell}$, the corresponding SINR is expressed as
\begin{equation}\label{eq: SINR BWPms}
\hspace{-3mm}    \gamma_{\ell,k}^{{\sf{M}},[v_{\sf{M}},n_{\sf{M}}]}\!(\! \boldsymbol{P}_{\!c},\! \boldsymbol{\alpha}_{\!c},\! \boldsymbol{\beta}_{\!c} \!) 
    = \scaleobj{0.8}{\frac{\alpha_{\ell,k}^{[v_{\sf{M}},n_{\sf{M}}]} p_{\ell,k}^{[v_{\sf{M}},n_{\sf{M}}]} h_{\ell,k}^{[v_{\sf{M}},n_{\sf{M}}]}}{ \Psi_{\ell,k}^{[v_{\sf{M}},n_{\sf{M}}]}\!(\! \boldsymbol{P}_{\!c},\! \boldsymbol{\alpha}_{\!c} \!) + \Theta_{k}^{{\sf{TN}},[v_{\sf{M}},n_{\sf{M}}]}\!(\! \boldsymbol{P}_{\!c},\! \boldsymbol{\beta}_{\!c} \!) + \sigma_{{\sf{M}},k}^2} }, \!\!\!\!\!
\end{equation}
where $\Psi_{\ell,k}^{[v_{\sf{M}},n_{\sf{M}}]} $ is defined in \eqref{eq:ICI} and  $\Theta_{k}^{{\sf{TN}},[v_{\sf{M}},n_{\sf{M}}]}\!(\! \boldsymbol{P}_{\!c},\! \boldsymbol{\beta}_{\!c} \!)$ is the inter-system-interference (ISyI) caused by the LSat to ${\sf{UE}}_{k}^{\sf{M}}$ served by TN over ${\sf{RB}}_{[v_{\sf{M}},n_{\sf{M}}]}$ which is given as
\begin{equation}
    \Theta_{k}^{{\sf{TN}},[v_{\sf{M}},n_{\sf{M}}]}\!(\! \boldsymbol{P}_{\!c},\! \boldsymbol{\beta}_{\!c} \!) = \scaleobj{0.8}{\sum\nolimits_{\forall j} } \beta_{j}^{[v_{\sf{M}},n_{\sf{M}}]} p_{0,j}^{[v_{\sf{M}},n_{\sf{M}}]}  g_{k}^{[v_{\sf{M}},n_{\sf{M}}]},
\end{equation}
The corresponding achievable rate over ${\sf{RB}}_{[v_{\sf{M}},n_{\sf{M}}]}$ and the aggregated rate at RB time $n_{\sf{M}}$ are expressed as
\vspace{-1mm}
\begin{IEEEeqnarray}{ll}\label{eq: rate ms ap}
    \!\!\!\!\! R_{\ell,k}^{{\sf{M}},[v_{\sf{M}},n_{\sf{M}}]}\!(\! \boldsymbol{P}_{\!c},\! \boldsymbol{\alpha}_{\!c},\! \boldsymbol{\beta}_{\!c} \!) 
    \! = \! w_{\sf{M}} \log_2 \!\! \left( \! 1 \! + \! \gamma_{\ell,k}^{{\sf{M}},[v_{\sf{M}},n_{\sf{M}}]}\!(\! \boldsymbol{P}_{\!c},\! \boldsymbol{\alpha}_{\!c},\! \boldsymbol{\beta}_{\!c} \!) \! \right)\!, \subnum \\
    \!\!\!\!\! R_{\ell,k}^{{\sf{M}},[n_{\sf{M}}]}\!(\! \boldsymbol{P}_{\!c},\! \boldsymbol{\alpha}_{\!c},\! \boldsymbol{\beta}_{\!c} \!) = \scaleobj{0.8}{\sum\nolimits_{\forall v_{\sf{M}} \in \mathcal{V}_{\sf{M}}} } R_{\ell,k}^{{\sf{M}},[v_{\sf{M}},n_{\sf{M}}]}\!(\! \boldsymbol{P}_{\!c},\! \boldsymbol{\alpha}_{\!c},\! \boldsymbol{\beta}_{\!c} \!). \subnum \vspace{-1mm}
\end{IEEEeqnarray}
\subsubsection{UE served by SatNet} Assuming that ${\sf{UE}}_{k}^{\sf{M}}$ is served by the LSat over ${\sf{RB}}_{[v_{\sf{M}},n_{\sf{M}}]}$, the corresponding SINR is given as
\vspace{-1mm}
\begin{equation}
    \gamma_{0,k}^{{\sf{M}},[v_{\sf{M}},n_{\sf{M}}]}\!(\! \boldsymbol{P}_{\!c},\! \boldsymbol{\alpha}_{\!c},\! \boldsymbol{\beta}_{\!c} \!) = \scaleobj{0.8}{ \frac{\beta_{k}^{[v_{\sf{M}},n_{\sf{M}}]} p_{0,k}^{[v_{\sf{M}},n_{\sf{M}}]} g_{k}^{[v_{\sf{M}},n_{\sf{M}}]} }{ \Theta_{k}^{{\sf{Sat}},[v_{\sf{M}},n_{\sf{M}}]}\!(\! \boldsymbol{P}_{\!c},\! \boldsymbol{\alpha}_{\!c} \!) + \sigma_{{\sf{M}},k}^2} }, \vspace{-1mm}
\end{equation}
where $\Theta_{k}^{{\sf{Sat}},[v_{\sf{M}},n_{\sf{M}}]}\!(\! \boldsymbol{P}_{\!c},\! \boldsymbol{\alpha}_{\!c} \!)$ is the ISyI caused by TAPs to ${\sf{UE}}_{k}^{\sf{M}}$ served by the LSat which is defined as
\vspace{-1mm}
\begin{equation}
    \Theta_{k}^{{\sf{Sat}},[v_{\sf{M}},n_{\sf{M}}]}\!(\! \boldsymbol{P}_{\!c},\! \boldsymbol{\alpha}_{\!c} \!) = \scaleobj{0.8}{ \sum\nolimits_{\forall (i,j)}  } \alpha_{i,j}^{[v_{\sf{M}},n_{\sf{M}}]} p_{i,j}^{[v_{\sf{M}},n_{\sf{M}}]} h_{i,k}^{[v_{\sf{M}},n_{\sf{M}}]}.
    \vspace{-1mm}
\end{equation}
Therefore, the achievable rate over ${\sf{RB}}_{[v_{\sf{M}},n_{\sf{M}}]}$ and the aggregated rate at RB time $n_{\sf{M}}$ of ${\sf{UE}}_{k}^{\sf{M}}$ served by the LSat can be respectively expressed as
\vspace{-1mm}
\begin{IEEEeqnarray}{ll}\label{eq: rate ms sat}
   \hspace{-5mm} R_{0,k}^{{\sf{M}},[v_{\sf{M}},n_{\sf{M}}]}\!(\! \boldsymbol{P}_{\!c},\! \boldsymbol{\alpha}_{\!c},\! \boldsymbol{\beta}_{\!c} \!) \! = \! w_{\sf{M}}\!  \log_2 \!\! \left( \! 1 \! + \! \gamma_{0,k}^{{\sf{M}},[v_{\sf{M}},n_{\sf{M}}]}\!(\! \boldsymbol{P}_{\!c},\! \boldsymbol{\alpha}_{\!c},\! \boldsymbol{\beta}_{\!c} \!) \! \right)\!, \subnum \\
   \hspace{-5mm} R_{0,k}^{{\sf{M}},[n_{\sf{M}}]}\!(\! \boldsymbol{P}_{\!c},\! \boldsymbol{\alpha}_{\!c},\! \boldsymbol{\beta}_{\!c} \!) = \scaleobj{0.8}{ \sum\nolimits_{\forall v_{\sf{M}} \in \mathcal{V}_{\sf{M}} } } R_{0,k}^{{\sf{M}},[v_{\sf{M}},n_{\sf{M}}]}\!(\! \boldsymbol{P}_{\!c},\! \boldsymbol{\alpha}_{\!c},\! \boldsymbol{\beta}_{\!c} \!). \subnum \vspace{-1mm}
\end{IEEEeqnarray} 



\subsection{Service-$\sf{S}$ Transmission}
\vspace{-1mm}
In this BWP, if ${\sf{UE}}_{k}^{\sf{S}}$ is served by LSat over ${\sf{RB}}_{[v_{\sf{S}},n_{\sf{S}}]}$, the corresponding SNR is given by
\vspace{-1mm}
\begin{equation} \label{eq: SINR ss}
    \gamma_{0,k}^{{\sf{S}},[v_{\sf{S}},n_{\sf{S}}]}\!(\! \boldsymbol{P}_{\!c},\! \boldsymbol{\beta}_{\!c} \!) = (\beta_{k}^{[v_{\sf{S}},n_{\sf{S}}]} p_{0,k}^{[v_{\sf{S}},n_{\sf{S}}]} g_{k}^{[v_{\sf{S}},n_{\sf{S}}] })/ \sigma_{{\sf{S}},k}^2, \vspace{-1mm}
\end{equation}
Therefore, the achievable rate over ${\sf{RB}}_{[v_{\sf{S}},n_{\sf{S}}]}$ and
the aggregated rate at RB time $n_{\sf{S}}$ of ${\sf{UE}}_{k}^{\sf{S}}$ can be expressed as
\vspace{-2mm}
\begin{IEEEeqnarray}{ll}\label{eq: rate ss sat}
    \hspace{-2mm} R_{0,k}^{{\sf{S}},[v_{\sf{S}},n_{\sf{S}}]}\!(\! \boldsymbol{P}_{\!c},\! \boldsymbol{\beta}_{\!c} \!) &= w_{\sf{S}} \log_2\left( 1 + \gamma_{0,k}^{{\sf{S}},[v_{\sf{S}},n_{\sf{S}}]}\!(\! \boldsymbol{P}_{\!c},\! \boldsymbol{\beta}_{\!c} \!) \right), \subnum \\
    R_{0,k}^{{\sf{S}},[n_{\sf{S}}]}\!(\! \boldsymbol{P}_{\!c},\! \boldsymbol{\beta}_{\!c} \!) &=  \scaleobj{0.8}{\sum\nolimits_{\forall v_{\sf{S}} \in \mathcal{V}_{\sf{S}}} } R_{0,k}^{{\sf{S}},[v_{\sf{S}},n_{\sf{S}}]}\!(\! \boldsymbol{P}_{\!c},\! \boldsymbol{\beta}_{\!c} \!). \subnum \vspace{-1mm}
\end{IEEEeqnarray} 
Regarding the TAP/LSat power budget, one must satisfy
\vspace{-1mm}
\begin{equation}
\begin{array}{ll}
    (C11):\; \sum_{\sf{x} \in \{D,M\}} \sum_{\forall k} \sum_{\forall v_{\sf{x}} \in \mathcal{V}_{\sf{x}}} p_{\ell,k}^{[v_{\sf{x}},n_{\sf{x}}]} \leq p_{{\sf{AP}},\ell}^{\sf{max}}, \forall (\ell, t), \nonumber \\
    (C12): \; \sum_{\sf{x} \in \{M,S\}} \sum_{\forall k} \sum_{\forall v_{\sf{x}} \in \mathcal{V}_{\sf{x}}} p_{0,k}^{[v_{\sf{x}},n_{\sf{x}}]} \leq p_{\sf{LSat}}^{\sf{max}}, \forall t. \nonumber
    \vspace{-1mm}
\end{array}
\end{equation}


\subsection{Traffic and Queuing Model} \label{sec:traffic}
\vspace{-1mm}
Regarding latency tolerance, different time scales are applied to serve UEs associated with different services. 
Specifically, the arrival data of $\sf{M}$ and $\sf{S}$ services are buffered and scheduled for transmission in the following frame, whereas the data stream of $\sf{D}$ services is processed at the SF time scale to meet stringent latency requirements.
Assuming $K$ traffic flows arriving the CN due to $K$ UEs. Denote $\boldsymbol{\lambda}_{[e]}^{{\sf{TF,M}}}=\{\lambda_{k,[e]}^{{\sf{TF,M}}} \vert_{\forall k\in \mathcal{K}_{\sf{M}}} \}$, $\boldsymbol{\lambda}_{[e]}^{{\sf{TF,S}}}=\{\lambda_{k,[e]}^{{\sf{TF,S}}} \vert_{\forall k\in \mathcal{K}_{\sf{S}}} \}$, and $\boldsymbol{\lambda}_{[s]}^{{\sf{SF,D}}}=\{ \lambda_{k,[s]}^{{\sf{SF,D}}} \vert_{\forall k\in \mathcal{K}_{\sf{D}}} \}$, where $\lambda_{k,[e]}^{{\sf{TF,M}}}$, $\lambda_{i,[e]}^{{\sf{TF,S}}}$, and $\lambda_{j,[s]}^{{\sf{SF,D}}}$ are the data arrival rates of ${\sf{UE}}_{k}^{\sf{M}}$, ${\sf{UE}}_{i}^{\sf{S}}$ at frame $e$, and of ${\sf{UE}}_{j}^{\sf{D}}$ at SF $s$, respectively. 
Additionally, the traffic rates in cycle $c$ and frame $e$ are denoted by $\boldsymbol{\lambda}_{\!c} $ and $\boldsymbol{\lambda}_{e} $.

\begin{figure}[!t]
    \centering
    \includegraphics[width=1\columnwidth]{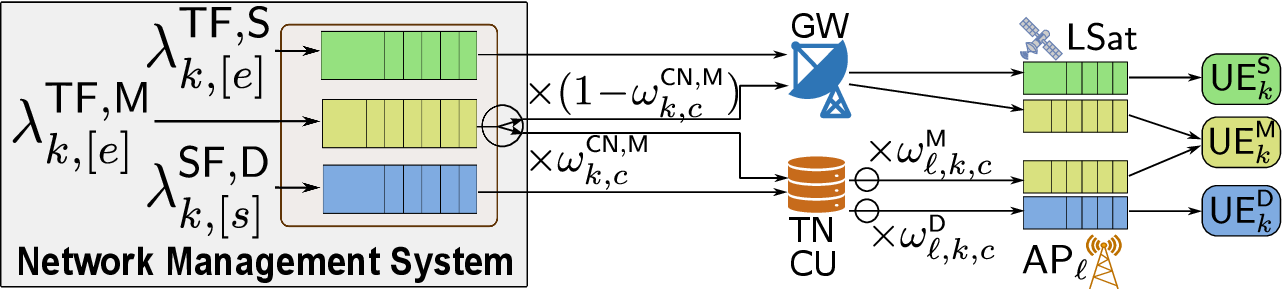}
    \vspace{-6mm}
    \captionsetup{font=footnotesize}
    \caption{Traffic steering model.}
    \label{fig:QueueModel}
    \vspace{-3mm}
\end{figure}

\subsubsection{Traffic Steering} As depicted in Fig.~\ref{fig:QueueModel}, at the CN, the $\sf{D}$ and $\sf{S}$ traffic flows are routed to the TNs and SatNet, respectively, while $\sf{M}$ traffic is split and dynamically steered to both domains. Subsequently, at the TCU, the $\sf{D}$ and $\sf{M}$ flows are further directed to TAPs due to a TN steering scheduler.
 
For $\sf{M}$-traffic steering, we introduce variables $\boldsymbol{\omega}_{\!c}^{\sf{CN,M}} = [ \omega_{k,c}^{\sf{CN,M}}], \forall k \in \mathcal{K}_{\sf{M}}$, where $\omega_{k,c}^{\sf{CN,M}} \in [0,1]$ and $(1 - \omega_{k,c}^{\sf{CN,M}})$ are the portions of ${\sf{UE}}_{k}^{\sf{M}}$'s traffic which are steered to TN and SatNet in cycle $c$, respectively.
In the TN, let $\boldsymbol{\omega}_{\!c}^{\sf{x}} = [\omega_{\ell,k,c}^{\sf{x}}], \forall (\ell ,k) \in (\mathcal{L} \times \mathcal{K}_{\sf{x}}), {\sf{x}} \in \{\sf{D,M}\}$ be the TN steering variables, wherein $\omega_{\ell,k,c}^{\sf{x}} \in [0,1]$ indicates the flow-split portion of service $\sf{x}\in \{\sf{D,M}\}$ from TCU to $\sf{AP}_{\ell}$ in cycle $c$. Besides, the integrity of traffic flows is ensured by
\vspace{-1mm}
\begin{equation} \label{eq: flow-split}
    (C13): \quad \scaleobj{0.8}{\sum\nolimits_{\forall \ell \in \mathcal{L}} } \omega_{\ell,k,c}^{\sf{x}} = 1, \forall (c,k), {\sf{x}} \in \{\sf{D,M}\}. \nonumber \vspace{-1mm}
\end{equation}
Subsequently, the data arrival of $\sf{D}$ service in SF $s$, and that of $\sf{M}$ and $\sf{S}$ services in frame $e$ at ${\sf{AP}}_{\ell}$/LSat corresponding to UEs are respectively expressed as
\vspace{-1mm}
\begin{IEEEeqnarray}{ll} \label{eq: traffic arrival}
    \!\!\!\! \lambda_{\ell,k,[s]}^{{\sf{SF,D}}} \!=\! \omega_{\ell,k,c}^{\sf{D}} \lambda_{k,[s]}^{{\sf{SF,D}}}, \; \forall (\ell ,k,s) \in \mathcal{L} \times \mathcal{K}_{\sf{D}} \times \mathcal{T}_{\sf{SF}}^{{\sf{Cy}},c}, \forall c \subnum \label{eq: traffic arrival a}\\
    \!\!\!\! \lambda_{\ell,k,[e]}^{{\sf{TF,M}}} \!=\! \omega_{k,c}^{\sf{CN,M}} \omega_{\ell,k,c}^{\sf{M}} \lambda_{k,[e]}^{{\sf{TF,M}}}, \; \forall (\ell ,k,e) \!\! \in \!\! \mathcal{L} \! \times \! \mathcal{K}_{\sf{M}} \! \times \! \mathcal{T}_{\sf{TF}}^{{\sf{Cy}},c} \!,\! \forall c, \quad\;\; \subnum  \label{eq: traffic arrival b}\\
    \!\!\!\! \lambda_{0,k,[e]}^{{\sf{TF,M}}} \!=\! (1-\omega_{k}^{\sf{CN,M}})  \lambda_{k,[e]}^{{\sf{TF,M}}}, \; \forall (k,e) \in \mathcal{K}_{\sf{M}} \times \mathcal{T}_{\sf{TF}}^{{\sf{Cy}},c}, \forall c,  \subnum \label{eq: traffic arrival c}\\
    \!\!\!\! \lambda_{0,k,[e]}^{{\sf{TF,S}}} \!=\! \lambda_{k,[e]}^{{\sf{TF,S}}}, \; \forall k \in \mathcal{K}_{\sf{S}},\forall e, \subnum  \label{eq: traffic arrival d} \vspace{-1mm}
\end{IEEEeqnarray}
where $\mathcal{T}_{\sf{TF}}^{{\sf{Cy}},c}$ and $\mathcal{T}_{\sf{SF}}^{{\sf{Cy}},c}$ are the frame and SF sets in cycle $c$.
For $\sf{D}$ service, the SF time-scale data-processing requirement is cast by the following constraint,
\vspace{-1mm}
\begin{equation}
    (C14): \quad T_{\sf{D}} R_{\ell,k,[s]}^{{\sf{SF,D}}} \! \geq \! \lambda_{\ell,k,[s]}^{{\sf{SF,D}}}, \forall \ell, \forall k \in \mathcal{K}_{\sf{D}}, \forall s. \nonumber \vspace{-2mm}
\end{equation}

\subsubsection{Service queues} Assume that TAPs and LSat are equipped with service-specific buffers.  
Each TAP/LSat is assumed to maintain separate queues for the UE data flows \cite{kavehmadavani_TWC23}. 
Subsequently, let $q_{\ell,k}^{{\sf{M}},[n_{\sf{M}}]}$, $q_{0,k}^{{\sf{M}},[n_{\sf{M}}]}$, and $q_{0,k}^{{\sf{S}},[n_{\sf{S}}]}$ be the queue lengths (QLs) at ${\sf{AP}}_{\ell}/{\sf{LSat}}$ of flow $k$ of services ${\sf{M,S}}$ at each RB time, the QL evolution over time is expressed as
\begin{IEEEeqnarray}{ll} \label{eq:queue TS}
    \hspace{-8mm} q_{\ell,k}^{{\sf{M}},[n_{\sf{M}} + 1]} \!\!=\!\! \left[ q_{\ell,k}^{{\sf{M}},[n_{\sf{M}}]} \!+\!  \lambda_{\ell,k}^{{\sf{M}},[n_{\sf{M}}]} \!-\! T_{\sf{M}} R_{\ell,k}^{{\sf{M}},[n_{\sf{M}}]} \right]^{+} \!\!,\! \forall \ell, \forall k \! \in \! \mathcal{K}_{\sf{M}} ,  \subnum \\
    \hspace{-8mm} q_{0,k}^{{\sf{x}},[n_{\sf{x}} + 1]} \!\!=\!\! \left[q_{0,k}^{{\sf{x}},[n_{\sf{x}}]} \!\!+\!  \lambda_{0,k}^{{\sf{x}},[n_{\sf{x}}]} \!\!-\! T_{\sf{x}} R_{0,k}^{{\sf{x}},[n_{\sf{x}}]} \right]^{+} \!\!\!, \! \forall k \! \in \! \mathcal{K}_{\sf{x}}, \! \forall {\sf{x}} \! \in \!\! \{\sf{ M,\! S }\}, \subnum
\end{IEEEeqnarray}
where $[\cdot]^{+} = \max \{0, \cdot\}$; $\lambda_{\ell,k}^{{\sf{M}},[n_{\sf{M}}]} $, $\lambda_{0,k}^{{\sf{M}},[n_{\sf{M}}]}$, and $\lambda_{0,k}^{{\sf{S}},[n_{\sf{S}}]}$ are intermediate data arrival notations with setting
\begin{IEEEeqnarray}{ll} \label{eq: set data arrival}
\!\!\!\!\!\!
    \begin{cases}
        \!\! \lambda_{\ell,k}^{{\sf{M}},[n_{\sf{M}}]} \!\!=\!\! \lambda_{\ell,k,[e]}^{{\sf{TF,M}}}, \; \lambda_{0,k}^{{\sf{x}},[n_{\sf{x}}]} \!\!=\!\! \lambda_{0,k,[e]}^{{\sf{TF,x}}} \text{ if } {\sf{mod}}(n_{\sf{x}},10 N_{\sf{x}}) \!=\! 1,  \quad \\
        \!\! \lambda_{\ell,k}^{{\sf{M}},[n_{\sf{M}}]}=0, \;\lambda_{0,k}^{{\sf{x}},[n_{\sf{x}}]}= 0 \text{ otherwise},
    \end{cases}
\end{IEEEeqnarray}
where $\{ n_{\sf{x}}\vert_{{\sf{mod}}(n_{\sf{x}},10 N_{\sf{x}})=1}\}$ indicates first RB time of each frame.
To maintain stability, the total QL of each service $\sf{x} \in \{M,S\}$ at ${\sf{AP}}_{\ell}/{\sf{LSat}}$ must satisfy \cite{kavehmadavani_TWC23}
\begin{equation} \label{eq:queue-length}
\begin{array}{ll}
    (C15): \; q_{\ell}^{{\sf{M}},[n_{\sf{M}}]} = \sum_{\forall k \in \mathcal{K}_{\sf{m}}} q_{\ell,k}^{{\sf{M}},[n_{\sf{M}}]} \leq q_{\ell}^{\sf{M,max}}, \; \forall (\ell, n_{\sf{M}}), \nonumber \\
    (C16): \; q_{0}^{{\sf{x}},[n_{\sf{x}}]} = \sum_{\forall k \in \mathcal{K}_{\sf{x}}} q_{0,k}^{{\sf{x}},[n_{\sf{x}}]} \leq q_{0}^{\sf{x,max}}, \; \forall n_{\sf{x}} , {\sf{x}} \in \{\sf{M,S}\}, \nonumber 
\end{array}
\end{equation}
where $q_{\ell}^{\sf{M,max}}$ and $q_{0}^{\sf{x,max}}$ are the maximum QL of ${\sf{AP}}_{\ell}$ and LSat for corresponding services, respectively. For convenience, let's denote $\boldsymbol{q}_{\!c}^{\sf{TN}} = \{ q_{\ell,k}^{{\sf{M}},[n_{\sf{M}}]} \vert_{\forall \ell,\forall k, \forall n_{\sf{M}} \in \mathcal{T}_{\sf{M}}^{{\sf{Cy}},c}} \}$, $\boldsymbol{q}_{\!c}^{\sf{Sat}} = \{ q_{0,k}^{{\sf{x}},[n_{\sf{x}}]} \vert_{\forall k, \forall n_{\sf{x}} \in \mathcal{T}_{\sf{x}}^{{\sf{Cy}},c}, \forall {\sf{x}} \in \{\sf{M,S}\}} \}$, and $\boldsymbol{q}_{\!c} \triangleq \{ \boldsymbol{q}_{\!c}^{\sf{TN}},\boldsymbol{q}_{\!c}^{\sf{Sat}} \}$.

\begin{remark}
    Under continuous traffic, constraints $(C15)$ and $(C16)$ ensure user fairness in terms of service fulfillment while also guaranteeing a finite service completion time.
\end{remark}

\begin{remark}
    The arrival data packet of $\sf{D}$ services is completely served within the next SF, as ensured by constraint $(C14)$. Hence, the buffer of $\sf{D}$ services is cleared after each SF, and the QL of $\sf{D}$ services is not considered. 
\end{remark}

\begin{remark}
    With knowledge of the queue length evolution and arrival rate, the ``effective'' sum-rate and throughput--corresponding to the amount of served data from the remaining buffer queue and/or newly arrived packets--can be determined.
\end{remark}

\vspace{-2mm}
\subsection{Problem Formulation}
\vspace{-1mm}
Typically, the uneven traffic demand of services and the inefficient spectrum utilization may lead to congestion in systems. Hence, this work aims to minimize the system congestion, i.e., the mean system QL, by optimizing the BWA, traffic split decision, AP-UE and LSat-UE associations, RB assignment, and power control. 
As a result, the utility function is the total average QL per RB time of $\sf{M}$ and $\sf{S}$ services at TAPs and LSats in each cycle $c$ defined as
\vspace{-1mm}
\begin{equation} \label{eq: objective}
\vspace{-1mm}
    f_{\sf{obj}}(\boldsymbol{q}_{\!c}) = \scaleobj{0.8}{\frac{1}{N_{\sf{M}} N_{\sf{SF}}} } \!\!\!\!\!\!\!\!\!\!\!\!\! \scaleobj{0.8}{\sum_{ \;\;\;\;\;\;\;\;\;\;\;\;\; \forall n_{\sf{M}} \in \mathcal{T}_{\sf{M}}^{{\sf{Cy}},c}, \forall \ell} } \!\!\!\!\!\!\!\!\!\!\!\! q_{\ell}^{{\sf{M}},[n_{\sf{M}}]} + \scaleobj{0.8}{ \sum_{\sf{x \in \{\sf{M,S}\}}} } \scaleobj{0.8}{\frac{1}{N_{\sf{x}} N_{\sf{SF}}} } \!\!\!\!\!\!\!\! \scaleobj{0.8}{\sum_{\;\;\;\;\;\; \forall n_{\sf{x}} \in \mathcal{T}_{\sf{x}}^{{\sf{Cy}},c}} } \!\!\!\!\!\!\! q_{0}^{{\sf{x}},[n_{\sf{x}}]}. 
\vspace{-1mm}
\end{equation}
Subsequently, two problems, ``DT-JointRA'' and ``RT-Refine'', will be studied. The former optimizes systems for each cycle $c$ and the latter optimizes TN short-term decisions for each SF, which are solved at the CNC and TCU, respectively.
\subsubsection{DT-JointRA Problem}
For cycle $c$, we consider
\vspace{-2mm}
\begin{equation}\label{eq: Prob0}
\begin{array}{lll} 
    \!\! (\mathcal{P}_{0})_{\!c}: \!\!\!\!\! &  \min\limits_{\boldsymbol{b}_{\!c},\boldsymbol{\omega}_{\!c},\boldsymbol{P}_{\!c}, \boldsymbol{\alpha}_{\!c}, \boldsymbol{\beta}_{\!c} ,\boldsymbol{q}_{\!c} } f_{\sf{obj}}(\boldsymbol{q}_{\!c})  \;
    \st \; \text{constraints } (C1)\!-\!(C16), \nonumber \\
    & (C0): b_{v_{\sf{x}},c}^{\sf{x}}, \alpha_{\ell,k}^{[v_{\sf{x}},n_{\sf{x}}]}, \beta_{k}^{[v_{\sf{x}},n_{\sf{x}}]}  \in \! \{0,1\}, \forall (\ell ,k,v_{\sf{x}},n_{\sf{x}},c). \nonumber
    \vspace{-1mm}
\end{array}
\end{equation}

\subsubsection{RT-Refine Problem}
Regarding implementation aspects, problem $(\mathcal{P}_{0})_{\!c}$ requires future information $\{ \boldsymbol{h}_{\!c},\boldsymbol{g}_{\!c},\boldsymbol{\lambda}_{\!c} \}$ in cycle $c$ which is challenging in practice. 
Assuming that the predicted information $\{ \hat{\boldsymbol{h}}_{\!c},\hat{\boldsymbol{g}}_{\!c},\hat{\boldsymbol{\lambda}}_{\!c} \}$ is used instead of $\{ \boldsymbol{h}_{\!c},\boldsymbol{g}_{\!c},\boldsymbol{\lambda}_{\!c} \}$, the corresponding solution should be adjusted appropriately with actual information. 
Hence, this section formulates the refinement problem to adjust the obtained solution at each SF by re-optimizing TN short-term decisions.
However, due to the long propagation distance in SatNet, the channel estimation and RA in SatNet at each SF are challenging. 
Hence, the given RA in SatNet can remain unchanged.

Let's call UEs served by TAPs and LSat as TN UEs and SatNet UEs. Assuming that TAPs only estimate TN UE CSIs, the predicted CSIs of SatNet UEs should be used. 
For a given SatCom RA, the channel uncertainty of $\sf{M}$ UEs served by LSat leads to errors in the ISyI power $\Theta_{k}^{{\sf{TN}},[v_{\sf{M}},n_{\sf{M}}]}$ and $\Theta_{k}^{{\sf{Sat}},[v_{\sf{M}},n_{\sf{M}}]}$. Hence, we impose the interference margin coefficient $\kappa \geq 1$ and use $\hat{\Theta}_{k}^{{\sf{TN}},[v_{\sf{M}},n_{\sf{M}}]} = \kappa \Theta_{k}^{{\sf{TN}},[v_{\sf{M}},n_{\sf{M}}]}$ and $\hat{\Theta}_{k}^{{\sf{Sat}},[v_{\sf{M}},n_{\sf{M}}]} = \kappa \Theta_{k}^{{\sf{Sat}},[v_{\sf{M}},n_{\sf{M}}]}$ for refinement problem.

Let $\boldsymbol{h}_{e}^{*}$ be the channel gain matrix in frame $e$ where CSIs of TN UEs are actual while those of SatNet UEs are extracted from $\hat{\boldsymbol{h}}_{e}$.
Based on channel gains $\{\boldsymbol{h}_{e}^{*}, \hat{\boldsymbol{g}}_{e}\}$ and actual traffic $\boldsymbol{\lambda}_{e}$ in frame $e$, for a fixed $\{ \boldsymbol{b}_{\!c}, \boldsymbol{\omega}_{\!c}, \boldsymbol{p}_{\!0,c},  \boldsymbol{\beta}_{\!c} \}$, the refinement problem for TN short-term decisions at SF $s$ is formulated as
\begin{equation}
\begin{array}{lll} \label{eq: Prob0Refine}
    \!\!\!(\mathcal{P}_{0}^{\sf{refi}})_{s} \!\! : \min\limits_{\boldsymbol{p}[s], \boldsymbol{\alpha}[s] ,\boldsymbol{q}[s] } \;f_{\sf{obj}}(\boldsymbol{q}[s]) \nonumber \\
    \hspace{-2mm} \st \; (C4)_{s}\!\!-\!\!(C6)_{s}\!,\! (C9)_{s},\!(C10)_{s}, \! (C11)_{s}, \!(C14)_{s}, \!(C15)'_{s} \!, \!(C16)'_{s} \!, \nonumber
\end{array}
\end{equation}
where index $s$ used in $\boldsymbol{x}[s]$ and $(Cx)_{s}$ indicates time index adaptation to SF $s$ for variable $\boldsymbol{x}$ and constraint $(Cx)$, 
$(C15)'$ and $(C16)'$ are the constraint revised from $(C15)$ and $(C16)$ in which terms $\hat{\Theta}_{k}^{{\sf{TN}},[v_{\sf{M}},n_{\sf{M}}]}$ and $\hat{\Theta}_{k}^{{\sf{Sat}},[v_{\sf{M}},n_{\sf{M}}]}$ are used instead of $\Theta_{k}^{{\sf{TN}},[v_{\sf{M}},n_{\sf{M}}]}$ and $\Theta_{k}^{{\sf{Sat}},[v_{\sf{M}},n_{\sf{M}}]}$ in rate functions, respectively.

\textit{Challenges in solving $(\mathcal{P}_{0})_{\!c}$ and $(\mathcal{P}_{0}^{\sf{refi}})_{s}$}: these problems contain the non-convex constraints, SINR and rate functions. Especially, problem $(\mathcal{P}_{0})_{\!c}$ further consists of both continuous and binary variables. Hence, $(\mathcal{P}_{0}^{\sf{refi}})_{s}$ and $(\mathcal{P}_{0})_{\!c}$ are non-convex and MINLP problems. 
Additionally, the unknown channel and traffic information in $(\mathcal{P}_{0})_{\!c}$ makes it more complicated.

\subsubsection{Overall workflow}
Let $\{\hat{\boldsymbol{x}}_{\!c} \}$ and $\{\boldsymbol{x}_{\!c} \}$ with $ \boldsymbol{x}  \in \{\boldsymbol{h}, \boldsymbol{g},\boldsymbol{\lambda} ,\boldsymbol{u}^{\sf{ue}} ,\boldsymbol{u}^{\sf{sat}} ,{\sf{TLE}} \}$ be the DT-virtual and real information vector in cycle $c$, the overall workflow is summarized as
\begin{enumerate}[label=\protect\large\textcircled{\small \arabic*}]
    \item \textbf{DT update:} The UE position $\hat{\boldsymbol{u}}_{c-1}^{{\sf{ue}}}$, orbit information ${\sf{TLE}}_{c-1}$, and traffic $\boldsymbol{\lambda}_{c-1}$ in cycle $(c-1)$ are collected and used to update DT-system
    \item \textbf{Prediction:} The DT model first predicts $\hat{\boldsymbol{\lambda}}_{\!c}$, $\{ \hat{\boldsymbol{u}}_{\!c}^{\sf{ue}} \!,\! \hat{\boldsymbol{u}}_{\!c}^{\sf{sat}} \}$, and then $\{\hat{\boldsymbol{h}}_{\!c},\hat{\boldsymbol{g}}_{\!c} \}$ in cycle $c$.
    \item \textbf{Extract information:} Traffic and channel information $\{\hat{\boldsymbol{h}}_{\!c}, \hat{\boldsymbol{g}}_{\!c},\hat{\boldsymbol{\lambda}}_{\!c}\}$ is extracted from the predicted information.
    \item \textbf{DT-based Joint-RA:} Based on $\{\hat{\boldsymbol{h}}_{\!c}, \hat{\boldsymbol{g}}_{\!c},\hat{\boldsymbol{\lambda}}_{\!c}\}$, problem $(\mathcal{P}_{0})_{\!c}$ is solved at the CNC to obtain long-term decisions and preliminary estimates of short-term ones $\{ \boldsymbol{b}_{\!c}, \boldsymbol{\omega}_{\!c}, \boldsymbol{P}_{\!c}, \boldsymbol{\alpha}_{\!c}, \boldsymbol{\beta}_{\!c} \}$ for cycle $c$.
    \item \textbf{Execute solution:} BWA $\{\boldsymbol{b}_{c}\}$, traffic steering  $\{\boldsymbol{\omega}_{c}\}$, and RA  $\{\boldsymbol{p}_{0,c}, \boldsymbol{\beta}_{c}\}$ for SatCom are applied.
    \item \textbf{RT-Refinement:} For each frame $e$ in cycle $c$, using channel gain $\boldsymbol{h}_{e}^{*}$, arrival data rate $\boldsymbol{\lambda}_{e}$ in the real system, and given decisions $\{ \boldsymbol{b}_{\!c}, \boldsymbol{\omega}_{\!c}, \boldsymbol{p}_{0,c}, \boldsymbol{\beta}_{\!c} \}$, the initial power control, UE association, and RB assignment $\{\boldsymbol{p}_{\!c},\boldsymbol{\alpha}_{\!c}\}$ in the TN, which are outcomes of step 4, are refined by solving $(\mathcal{P}_{0}^{\sf{refi}})_{s}$ at TCU at each SF $s$ in frame $e$.

\end{enumerate}

\begin{figure}
\captionsetup{font=footnotesize}
    \centering
    \includegraphics[width=1\linewidth]{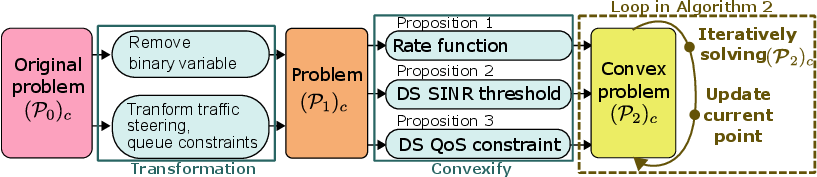}
    \vspace{-1mm}
    \caption{Developing workflow.}
    \label{fig:flowchart_DTJRA}
    \vspace{-3mm}
\end{figure}

\vspace{-3mm}
\section{Proposed Solutions} \label{sec: solution}
\vspace{-1mm}
This section first proposes the solution for problem $(\mathcal{P}_{0})_{\!c}$. Subsequently, the solution for the problem $(\mathcal{P}_{0}^{\sf{refi}})_{s}$ is proposed to refine TN short-term decisions with actual information. The solution workflows are summarized in Figs.~\ref{fig:flowchart_DTJRA} and ~\ref{fig:flowchart_RTRefine}.

\vspace{-3mm}
\subsection{DT-based Prediction}
At the beginning of every cycle, DT model needs the updated UEs position and LSat orbit information for channel prediction. Assuming that the cycle duration is sufficiently small, the changes in the movement trend of UEs and velocity are negligible. Hence, with a given updated route and velocity in cycle $(c-1)$, the UE position in cycle $c$ can be estimated by the approach given in \cite{Hung_TCOM25}. Subsequently, channel components in cycle $c$ are identified by \eqref{eq: DT channel predict}.
Based on output ${\sf{chan}}_{\!c} \! \triangleq \! \{\scaleobj{0.9}{\sf{chan}}_{[e]} | \forall e \! \in \! \mathcal{T}_{\sf{TF}}^{{\sf{Cy}},c} \}$, channel gain $\{ \hat{\boldsymbol{h}}_{\!c}, \hat{\boldsymbol{g}}_{\!c} \}$ used for problem $(\mathcal{P}_{0})_{\!c}$ is computed with DT channel coefficient $\xi=1$.

Regarding UE traffic, one can assume that UEs continue to use specific services over a given time duration. As a result, their traffic pattern and arrival traffic remain relatively stable. Therefore, the arrival data in cycle $c$ can be estimated by the average in cycle $(c-1)$ as
\begin{IEEEeqnarray}{ll}\label{eq: traffic pred}
    \!\!\!\!\!\!\!\!\! \hat{\lambda}_{k,[s]}^{{\sf{SF,D}}} = \scaleobj{0.8}{\frac{1}{N_{\sf{SF}}} } \scaleobj{0.8}{ \sum\nolimits_{s' \in \mathcal{T}_{\sf{SF}}^{{\sf{Cy}},c-1} } } \lambda_{k,[s']}^{{\sf{SF,D}}}, \; \forall s \in \mathcal{T}_{\sf{SF}}^{{\sf{Cy}},c}, \subnum \\
    \!\!\!\!\!\!\!\!\! \hat{\lambda}_{k,[e]}^{{\sf{TF,x}}} = \scaleobj{0.8}{\frac{1}{N_{\sf{TF}}} } \scaleobj{0.8}{\sum\nolimits_{e' \in \mathcal{T}_{\sf{TF}}^{{\sf{Cy}},c-1} } } \lambda_{k,[e']}^{{\sf{TF,x}}}, \; \forall e \in \mathcal{T}_{\sf{TF}}^{{\sf{Cy}},c}, {\sf{x}} \in \scaleobj{0.9}{\sf{\{M,S \}} }. \subnum
\end{IEEEeqnarray}
The DT-based prediction mechanism is summarized in Alg.~\ref{alg: DTPred}.

\begin{algorithm}[!t]
\scriptsize
\begin{algorithmic}[1]
    \captionsetup{font=small}
    \protect\caption{\
    \textsc{DT-based Prediction}}
    \label{alg: DTPred}
    \STATE \textbf{Input:} Actual information $\{ \boldsymbol{\lambda}_{c-1}, \boldsymbol{u}_{c-1}^{\sf{ue}}, {\sf{TLE}}_{c-1}\}$ in cycle $(c-1)$.
    \STATE Predict $\{ \hat{\boldsymbol{u}}_{\!c}^{\sf{ue}}, \hat{\boldsymbol{u}}_{\!c}^{\sf{sat}}\}$ and construct $\{ \widehat{\scaleobj{0.8}{\sf{UE}}}[e], \widehat{\scaleobj{0.8}{\sf{Sat}}}[e] |\forall e \in \mathcal{T}_{\sf{TF}}^{{\sf{Cy}},c} \}$, 
    \STATE Predict $\{ \scaleobj{0.8}{\sf{chan}}_{\!c}, \hat{\boldsymbol{\lambda}}_{\!c} \}$ by \eqref{eq: DT channel predict}, \eqref{eq: traffic pred}.
    \STATE Construct $\{\hat{\boldsymbol{h}}_{\!c}, \hat{\boldsymbol{g}}_{\!c} \}$ by using ${\sf{chan}_{\!c}}$ with $\xi=1$.
    \STATE \textbf{Output:} Predicted information $\{\hat{\boldsymbol{h}}_{\!c}, \hat{\boldsymbol{g}}_{\!c}, \hat{\boldsymbol{\lambda}}_{\!c} \}$.
\end{algorithmic}
\normalsize 
\end{algorithm}

\vspace{-3mm}

\subsection{DT-JointRA Solution}
\subsubsection{Compressed-Sensing-Based Relaxation}
Considering the system model and the optimization problem $(\mathcal{P}_{0})_{\!c}$, one can see the relationship between the binary variable $\{ \boldsymbol{\alpha}_{\!c}, \boldsymbol{\beta}_{\!c} \}$ and the continuous variable $\boldsymbol{P}_{\!c}$ as follows. Consider ${\sf{RB}}_{[v_{\sf{x}},n_{\sf{x}}]}$:

\begin{itemize}
    \item If ${\sf{AP}}_{\ell}$ serves ${\sf{UE}}_{k} \Rightarrow \alpha_{\ell,k}^{[v_{\sf{x}},n_{\sf{x}}]}=1$, $p_{\ell,k}^{[v_{\sf{x}},n_{\sf{x}}]} > 0$; \\
    if  ${\sf{AP}}_{\ell}$ does not serve ${\sf{UE}}_{k}$ $\alpha_{\ell,k}^{[v_{\sf{x}},n_{\sf{x}}]}=0, p_{\ell,k}^{[v_{\sf{x}},n_{\sf{x}}]}=0$.
    \item If the LSat serves ${\sf{UE}}_{k} \Rightarrow \beta_{k}^{[v_{\sf{x}},n_{\sf{x}}]}=1$, $p_{0,k}^{[v_{\sf{x}},n_{\sf{x}}]} > 0$; \\
    if the LSat does not serve ${\sf{UE}}_{k}$ $\beta_{k}^{[v_{\sf{x}},n_{\sf{x}}]}=0, p_{0,k}^{[v_{\sf{x}},n_{\sf{x}}]}=0$.
\end{itemize}
Based on this relationship, the binary variables $\boldsymbol{\alpha}_{\!c}$ and $\boldsymbol{\beta}_{\!c}$ can be respectively represented by the continuous one $\boldsymbol{P}_{\!c}$ as
\vspace{-1mm}
\begin{align}\label{eq: reduce binary}
    \alpha_{\ell,k}^{[v_{\sf{x}},n_{\sf{x}}]} = \Vert p_{\ell,k}^{[v_{\sf{x}},n_{\sf{x}}]} \Vert_{0}, \quad \beta_{k}^{[v_{\sf{x}},n_{\sf{x}}]} = \Vert p_{0,k}^{[v_{\sf{x}},n_{\sf{x}}]} \Vert_{0}.
    \vspace{-1mm}
\end{align}
Besides, consider $(C4),(C7)$, one can see that SC $v_{\sf{x}}$ is activated in cycle $c$, i.e., $b^{\sf{x}}_{v_{\sf{x}},c}=1$, only if it is used by at least one TAP/LSat-UE link at any RB time $n_{\sf{x}} \in \mathcal{T}_{\sf{x}}^{{\sf{Cy}},c}$, i.e., $\alpha_{\ell,k}^{[v_{\sf{x}},n_{\sf{x}}]} \!\!=\!\!1$ or $\beta_{k}^{[v_{\sf{x}},n_{\sf{x}}]} \!\!=\!\! 1$. 
Combining \eqref{eq: reduce binary}, $\boldsymbol{b}_{\!c}$ is rewritten as
\vspace{-1mm}
\bieq{ll} \label{eq: BW by norm0}
    \hspace{-6mm} b^{\sf{D}}_{v_{\sf{D}},c} = \Vert \!\!\!\! \scaleobj{0.8}{\sum_{\forall (\ell ,k), \forall n_{\sf{D}} \in \mathcal{T}_{\sf{D}}^{{\sf{Cy}},c} } } \!\!\!\! \alpha_{\ell,k}^{[v_{\sf{D}},n_{\sf{D}}]} \Vert_{0} = \Vert p^{{\sf{D}}}_{v_{\sf{D}},c} \Vert_{0}, \; \forall (v_{\sf{D}}, c), \subnum \\
    \hspace{-6mm} b^{\sf{M}}_{v_{\sf{M}},c} \!=\! \Vert \hspace{-12mm} \scaleobj{0.8}{\sum_{\hspace{12mm} \forall (\ell ,k), \forall n_{\sf{M}} \in \mathcal{T}_{\sf{M}}^{{\sf{Cy}},c}} } \hspace{-11mm} \alpha_{\ell,k}^{[v_{\sf{M}},n_{\sf{M}}]} 
    \!+\! \hspace{-10mm} \scaleobj{0.8}{ \sum_{\hspace{12mm} \forall k, \forall n_{\sf{M}} \in \mathcal{T}_{\sf{M}}^{{\sf{Cy}},c}} } \hspace{-10mm} \beta_{k}^{[v_{\sf{M}},n_{\sf{M}}]} \Vert_{0} 
    = \Vert P^{\sf{M}}_{v_{\sf{M}},c} \Vert_{0}, \; \forall (v_{\sf{M}}, c), \subnum \\
    \hspace{-6mm} b^{\sf{S}}_{v_{\sf{S}},c} = \Vert \scaleobj{0.8}{\sum\nolimits_{\forall k, \forall n_{\sf{S}} \in \mathcal{T}_{\sf{S}}^{{\sf{Cy}},c}} } \beta_{k}^{[v_{\sf{S}},n_{\sf{S}}]} \Vert_{0} = \Vert p^{\sf{S}}_{0,v_{\sf{S}},c} \Vert_{0}, \; \forall (v_{\sf{S}},c). \subnum \vspace{-1mm}
\eieq
where $P^{\sf{M}}_{v_{\sf{M}},c}=p^{\sf{M}}_{v_{\sf{M}},c} \!\!+\!\! p^{\sf{M}}_{0,v_{\sf{M}},c}$, 
$p^{\sf{x}}_{v_{\sf{x}},c} =  {\sum\nolimits_{\substack{\forall (\ell ,k),  \forall n_{\sf{x}} \in \mathcal{T}_{\sf{x}}^{{\sf{Cy}},c}}} } p_{\ell,k}^{[v_{\sf{x}},n_{\sf{x}}]} $ and $P^{\sf{x}}_{v_{\sf{x}},c}= {\sum_{\substack{\forall k, \forall n_{\sf{x}} \in \mathcal{T}_{\sf{x}}^{{\sf{Cy}},c} }}} p_{0,k}^{[v_{\sf{x}},n_{\sf{x}}]} $.
Thanks to relationship \eqref{eq: BW by norm0}, constraints $(C4)$,$(C7)$ can be omitted.

Exploiting \eqref{eq: reduce binary}, the binary components can be replaced by the corresponding $\ell_{0}$-norm components. Moreover, the binary variables in the production components and the binary arguments of SINR/rate functions can be omitted.
Specifically, $R_{\ell,k,[s]}^{{\sf{SF,D}}}(\boldsymbol{P}_{\!c},\boldsymbol{\alpha}_{\!c})$ is reformulated as
\vspace{-1mm}
\beq \label{eq: rate DS 2}
    R_{\ell,k,[s]}^{{\sf{SF,D}}}(\boldsymbol{P}_{\!c}) = w_{\sf{D}} \!\!\!\!\! \scaleobj{0.8}{\sum_{ \forall (v_{\sf{D}},n_{\sf{D}}) } } \!\!\!\!\! \log_2{(1 + \gamma_{\ell,k}^{{\sf{D}},[v_{\sf{D}},n_{\sf{D}}]}(\boldsymbol{P}_{\!c})}) 
    - \chi_{\sf{D}} \!\! \scaleobj{0.8}{\sqrt{ \!\!\!\!\!\!\! \scaleobj{0.9}{\sum_{\;\;\;\;\;\;\; \forall (v_{\sf{D}}, n_{\sf{D}}) } }  \!\!\!\!\!\! \Vert p_{\ell,k}^{[v_{\sf{D}},n_{\sf{D}}]} \Vert_{0} }  }  , \vspace{-1mm}
\eeq

Subsequently, the $\ell_{0}$-norm component $\Vert x \Vert_{0}, \forall x \!\! \geq \!\! 0$, can be approximated by concave function $\mathcal{F}_{\! \sf{apx}}\!(x)\triangleq 1 - e^{-x/ \epsilon}, \; 0<\epsilon \ll 1 $, as $\Vert x \Vert_{0} \! \approx \! \mathcal{F}_{\! \sf{apx}}\!(x)$. Let $\mathcal{F}_{\! \sf{apx}}^{(i)}\!(x)$ be an upper bound of $\mathcal{F}_{\! \sf{apx}}\!(x)$ at feasible point $(x^{(i)})$, the $\ell_{0}$-norm components in \eqref{eq: reduce binary} and \eqref{eq: BW by norm0} can be approximated at iteration $i$ as
\begin{equation}\label{eq: apx norm0}
    \Vert \tau \Vert_{0} \!\! \approx \!\! \mathcal{F}_{\! \sf{apx}}\!(\tau) \!\! \leq \!\! \mathcal{F}_{\! \sf{apx}}^{(i)}\!(\tau), \; \tau \!\in\!\! \{ p_{\ell,k}^{[v_{\sf{x}},n_{\sf{x}}]} \!,\! p_{0,k}^{[v_{\sf{x}},n_{\sf{x}}]} \!,\! p^{\sf{x}}_{v_{\sf{x}},c} ,\! p^{\sf{x}}_{0,v_{\sf{x}},c} ,\!   P^{\sf{x}}_{v_{\sf{x}},c} \},
\end{equation}
where upper bound $\mathcal{F}_{\! \sf{apx}}^{(i)}\!(x)$ can be obtained based on \cite{Hung_TCOM25} as
\vspace{-1mm}
\begin{equation}\label{eq: fapx}
    \mathcal{F}_{\! \sf{apx}}\!(x) \leq \mathcal{F}_{\! \sf{apx}}^{(i)}\!(x) \triangleq 1/ \epsilon \exp(-x^{(i)} / \epsilon)(x - x^{(i)} - \epsilon) + 1.
    \vspace{-1mm}
\end{equation} 
Utilize \eqref{eq: reduce binary}, \eqref{eq: BW by norm0}, and \eqref{eq: apx norm0}, 
by replacing binary component $b_{v_{\sf{x}},c}^{\sf{x}}$, $\alpha_{\ell,k}^{[v_{\sf{x}},n_{\sf{x}}]}$, and $\beta_{k}^{[v_{\sf{x}},n_{\sf{x}}]}$ by the corresponding approximated terms $\mathcal{F}_{\! \sf{apx}}^{(i)}\!(\cdot)$ at iteration $i$, 
constraints $(C1)-(C3),(C5),(C6),(C8)-(C9)$ are directly transformed into convex ones so-called $\boldsymbol{(\tilde{C}1)}-\boldsymbol{(\tilde{C}3)},\boldsymbol{(\tilde{C}5)},\boldsymbol{(\tilde{C}6)},\boldsymbol{(\tilde{C}8)}-\boldsymbol{(\tilde{C}9)}$; and $(C10)$ is approximated as
\vspace{-2mm}
\begin{IEEEeqnarray}{ll}
    (\bar{C}10): \; \gamma_{\ell,k}^{{\sf{D}},[v_{\sf{D}},n_{\sf{D}}]}(\boldsymbol{P}_{\!c}) \geq \mathcal{F}_{\! \sf{apx}}^{(i)}\!(p_{\ell,k}^{[v_{\sf{D}},n_{\sf{D}}]}) \gamma_{0}^{\sf{D}}, \; \forall (\ell, k,v_{\sf{D}},n_{\sf{D}}). \nonumber
    \vspace{-1mm}
\end{IEEEeqnarray}
\subsubsection{Transform Traffic and Queue Constraints}
$(C14)-(C16)$ are non-convex due to non-convex rate functions, coupling between traffic steering variables, and QL dependency.
First, we introduce variable $\bar{\boldsymbol{\omega}} = \{ \bar{\omega}_{\ell,k,c}^{\sf{x}} | \forall (\ell ,k,c,{\sf{x}}) \} $ and use it instead of $\omega_{k,c}^{\sf{CN,M}}$ and $\omega_{\ell,k,c}^{\sf{x}}$ with the relationship
\vspace{-2mm}
\begin{equation} \label{eq: transform flow-split}
    \bar{\omega}_{\ell,k,c}^{\sf{D}}= \omega_{\ell,k,c}^{\sf{D}} \quad \text{ and } \quad
    \bar{\omega}_{\ell,k,c}^{\sf{M}} = \omega_{k,c}^{\sf{CN,M}} \omega_{\ell,k,c}^{\sf{M}},
    \vspace{-1mm}
\end{equation}
Hence, constraint $(C13)$ is rewritten for the $\sf{D}$ service as
\vspace{-2mm}
\begin{equation} \label{eq: flow-split 1}
    (\tilde{C}13): \quad \scaleobj{0.8}{\sum\nolimits_{\forall \ell \in \mathcal{L} } } \bar{\omega}_{\ell,k,c}^{\sf{D}} = 1, \forall k \in  \mathcal{K}_{\sf{D}}, \forall c. \nonumber 
    \vspace{-1mm}
\end{equation}
The traffic arrival rates \eqref{eq: traffic arrival a}-\eqref{eq: traffic arrival c} are rewritten as
\vspace{-2mm}
\bieq{ll} \label{eq: traffic arrival 1}
    \lambda_{\ell,k,[s]}^{{\sf{SF,D}}} = \bar{\omega}_{\ell,k,c}^{\sf{D}} \lambda_{k,[s]}^{{\sf{SF,D}}}, \;\; \forall \ell, \forall k \in \mathcal{K}_{\sf{D}}, \forall s \in \mathcal{T}_{\sf{SF}}^{{\sf{Cy}},c}, \forall c, \subnum \label{eq: traffic arrival 1a} \\
    \lambda_{\ell,k,[e]}^{{\sf{TF,M}}} = \bar{\omega}_{\ell,k,c}^{\sf{M}} \lambda_{k,[e]}^{{\sf{TF,M}}}, \quad \forall \ell,\forall k \in \mathcal{K}_{\sf{M}}, \;\; e \in \mathcal{T}_{\sf{TF}}^{{\sf{Cy}},c}, \forall c, \subnum \label{eq: traffic arrival 1b}\\
    \lambda_{0,k,[e]}^{{\sf{TF,M}}} = (1 \!-\! \scaleobj{0.8}{\sum_{\forall \ell}} \bar{\omega}_{\ell,k,c}^{\sf{M}} ) \lambda_{k,[e]}^{{\sf{TF,M}}}, \;\; \forall k \! \in \! \mathcal{K}_{\sf{M}}, \forall e \! \in \! \mathcal{T}_{\sf{TF}}^{{\sf{Cy}},c} \!, \forall c. \quad\quad\; \subnum \label{eq: traffic arrival 1c}
    \vspace{-2mm}
\eieq

\begin{remark}
    The integrity of the $\sf{M}$ traffic is ensured thanks to representation \eqref{eq: transform flow-split}, \eqref{eq: traffic arrival 1}. Particularly, based on \eqref{eq: transform flow-split}, \eqref{eq: traffic arrival c} is rewritten by \eqref{eq: traffic arrival 1c} due to $\sum_{\forall \ell}{\bar{\omega}}_{\ell,k,c}^{\sf{M}} = \omega_{k,c}^{\sf{CN,M}}$. Since $\sum_{\forall \ell} \lambda_{\ell,k,[e]}^{{\sf{TF,M}}} + \lambda_{0,k,[e]}^{{\sf{TF,M}}} = \lambda_{k,[e]}^{{\sf{TF,M}}}$, the integrity is ensured.
\end{remark}

Subsequently, 
we introduce slack variables $\bar{\boldsymbol{q}}_{\!c} \!\!=\!\! \{ \! \bar{q}_{\ell,k}^{{\sf{x}},[n_{\sf{x}}]} \!,\! \bar{q}_{0,k}^{{\sf{x}},[n_{\sf{x}}]}$ $ \vert \forall (\ell ,k),$ $ \forall n_{\sf{x}} \! \in \! \mathcal{T}_{\sf{x}}^{{\sf{Cy}},c}, \forall {\sf{x}} \! \in \! \scaleobj{0.9}{\{\sf{M,S}\}} \}$ as the QL upper bound and $\boldsymbol{r}_{\!c} \!=\! \{ r_{\ell,k}^{{\sf{x}},[v_{\sf{x}},n_{\sf{x}}]} \!, r_{0,k}^{{\sf{x}},[v_{\sf{x}},n_{\sf{x}}]}, r_{\ell,k,[s]}^{{\sf{SF,D}}} | \forall (\ell ,k, s, v_{\sf{x}},n_{\sf{x}}), \forall {\sf{x}} \! \in \! \mathcal{S} \}$ as the rate's lower bound; and transform \eqref{eq:queue TS} and $(C14)-(C16)$ as
\vspace{-1mm}
\bieq{ll}
    (\tilde{C}14): T_{\sf{D}} r_{\ell,k,[s]}^{{\sf{SF,D}}} \! \geq \! \lambda_{\ell,k,[s]}^{{\sf{SF,D}}}, \forall \ell, \forall k \in \mathcal{K}_{\sf{d}}, \forall s, \nonumber \\
    (\tilde{C}15a)\!:\!  \bar{q}_{\ell,k}^{{\sf{M}},[n_{\sf{M}}]} \!\!+\! \lambda_{\ell,k}^{{\sf{m}},[n_{\sf{M}}]} \!\!-\! T_{\sf{M}} r_{\ell,k}^{{\sf{M}},[n_{\sf{M}}]} \leq \bar{q}_{\ell,k}^{{\sf{M}},[n_{\sf{M}}+1]}, \forall (\ell, k, n_{\sf{M}}), \nonumber \\
    (\tilde{C}15b): 0 \leq \bar{q}_{\ell,k}^{{\sf{M}},[n_{\sf{M}}]},  \scaleobj{0.8}{\sum\nolimits_{\forall k \in \mathcal{K}_{\sf{M}}}} \bar{q}_{\ell,k}^{{\sf{M}},[n_{\sf{M}}]} \leq q_{\ell}^{\sf{M,max}}, \forall (\ell,  n_{\sf{M}}), \nonumber \\
    (\tilde{C}16a)\!:\! \bar{q}_{0,k}^{{\sf{x}},[n_{\sf{x}}]} \!\!+\! \lambda_{0,k}^{{\sf{x}},[n_{\sf{x}}]} \!\!-\! T_{\sf{x}} r_{0,k}^{{\sf{x}},[n_{\sf{x}}]} \!\! \leq\! \bar{q}_{0,k}^{{\sf{x}},[n_{\sf{x}} + 1]} \!, \forall (k, n_{\sf{x}}) ,\! \forall {\sf{x}} \!\! \in \!\! \{\sf{M,S}\}, \nonumber \\
    (\tilde{C}16b):  0 \leq \bar{q}_{0,k}^{{\sf{x}},[n_{\sf{x}}]},  \scaleobj{0.8}{\sum\nolimits_{\forall k \in \mathcal{K}_{\sf{x}}}} \bar{q}_{0,k}^{{\sf{x}},[n_{\sf{x}}]} \!\! \leq \! q_{0}^{\sf{x,max}}, \; \forall n_{\sf{x}}, \forall {\sf{x}} \!\! \in \!\! \{\sf{M,S}\}, \nonumber 
\eieq
where $r_{\ell,k}^{{\sf{M}},[n_{\sf{M}}]} = \sum_{\forall v_{\sf{M}}} r_{\ell,k}^{{\sf{M}},[v_{\sf{M}},n_{\sf{M}}]}$, $r_{0,k}^{{\sf{x}},[n_{\sf{x}}]} = \sum_{\forall v_{\sf{x}}} r_{0,k}^{{\sf{x}},[v_{\sf{x}},n_{\sf{x}}]}$ while $r_{\ell,k}^{{\sf{M}},[v_{\sf{M}},n_{\sf{M}}]}$, $r_{\ell,k,[s]}^{{\sf{SF,D}}}$, and $r_{0,k}^{{\sf{x}},[v_{\sf{x}},n_{\sf{x}}]}$ satisfy constraints
\bieq{ll}
    (C17.1)\!: R_{\ell,k}^{{\sf{M}},[v_{\sf{M}},n_{\sf{M}}]}(\boldsymbol{P}_{\!c}) \! \geq \! r_{\ell,k}^{{\sf{M}},[v_{\sf{M}},n_{\sf{M}}]}, \; \forall (\ell, v_{\sf{M}},n_{\sf{M}}), \forall k \! \in \! \mathcal{K}_{\sf{M}}, \nonumber \\
    (C17.2)\!: R_{\ell,k,[s]}^{{\sf{SF,D}}}(\boldsymbol{P}_{\!c}) \! \geq \! r_{\ell,k,[s]}^{{\sf{SF,D}}}, \; \forall \ell, \forall k \in \mathcal{K}_{\sf{D}}, \forall s, \nonumber \\
    (C18)\!: r_{0,k}^{{\sf{x}},[v_{\sf{x}},n_{\sf{x}}]}(\boldsymbol{P}_{\!c}) \! \geq \! r_{0,k}^{{\sf{x}},[v_{\sf{x}},n_{\sf{x}}]}, \; \forall k \!\! \in \! \mathcal{K}_{\sf{x}} \!, \forall (v_{\sf{x}},n_{\sf{x}}), \forall {\sf{x}} \!\! \in \!\! \{\sf{M,S}\}, \nonumber
\eieq

Therefore, problem $(\mathcal{P}_{0})_{\!c}$ can be rewritten as
\vspace{-2mm}
\bieq{lr} \label{eq: Prob1}
    (\mathcal{P}_{1})_{\!c}: & \min_{\bar{\boldsymbol{\omega}}_{\!c},\boldsymbol{P}_{\!c}, \bar{\boldsymbol{q}}_{\!c}, \boldsymbol{r}_{\!c} } \; f_{\sf{obj}}(\bar{\boldsymbol{q}}_{\!c}) \; 
    \st \;  (\! \tilde{C}1 \!) \!\!-\!\! (\! \tilde{C}3), (\! \tilde{C}5 \!), (\! \tilde{C}6 \!), (\! \tilde{C}8 \!), (\! \tilde{C}9 \!), 
    \nonumber \vspace{-2mm} \\ 
    & (\! \bar{C}10 \!), (\! C11 \!),(\! C12 \!),(\! \tilde{C}14 \!) \!\!-\!\! (\! \tilde{C}16 \!), 
    (\! C17 \!),(\! C18 \!). \nonumber
\eieq
Obviously, problem $(\mathcal{P}_{1})$ is still non-convex due to the non-convexity of SINR and rate constraints $(\bar{C}10),(C17),(C18)$.

\subsubsection{Convexify SINR and Rate Constraints}
These constraints can be convexified by the following propositions.

\begin{proposition} \label{pro: rate cons}
    Constraints $(C17.1),(C18)$ are convexified as
    \bieq{ll}
        (\tilde{C}17a)\!\!: w_{\sf{M}} F_{\ell,k}^{{\sf{R,M}},[v_{\sf{M}},n_{\sf{M}}]}(\boldsymbol{P}_{\!c}, \boldsymbol{\eta}_{c}) \geq r_{\ell,k}^{{\sf{M}},[v_{\sf{M}},n_{\sf{M}}]}, \nonumber \\
        (\tilde{C}17b)\!\!: \Psi_{\ell,k}^{[v_{\sf{M}},n_{\sf{M}}]} \!(\! \boldsymbol{P}_{\!c} \!) \!+\!  \Theta_{k}^{{\sf{TN}},[v_{\sf{M}},n_{\sf{M}}]} \!(\!\boldsymbol{P}_{\!c} \!) \!+\! \sigma_{{\sf{M}},k}^{2} 
        \! \leq \! F_{\! \sf{exp}}^{(i)}\!(\eta_{\ell,k}^{[v_{\sf{M}},n_{\sf{M}}]} \!) , \nonumber \\
        (\tilde{C}18a)\!\!: w_{\sf{M}} F_{0,k}^{{\sf{R,M}},[v_{\sf{M}},n_{\sf{M}}]}(\boldsymbol{P}_{\!c}, \boldsymbol{\eta}_{c}) \geq r_{0,k}^{{\sf{M}},[v_{\sf{M}},n_{\sf{M}}]}, \nonumber \\
        (\tilde{C}18b)\!\!: \Theta_{k}^{{\sf{Sat,M}},[v_{\sf{M}},n_{\sf{M}}]} (\!\boldsymbol{P}_{\!c}) + \sigma_{{\sf{M}},k}^{2} 
        \! \leq \! \mathcal{F}_{\sf{exp}}^{(i)}(\eta_{0,k}^{[v_{\sf{M}},n_{\sf{M}}]}) , \nonumber \\
        (\tilde{C}18c)\!\!: R_{0,k}^{{\sf{S}},[v_{\sf{S}},n_{\sf{S}}]}(\boldsymbol{P}_{\!c}) \geq R_{0,k}^{{\sf{S}},[v_{\sf{S}},n_{\sf{S}}]}, \nonumber 
    \eieq
    with $\mathcal{F}_{\sf{exp}}^{(i)}(u) \triangleq \exp(u^{(i)})(u - u^{(i)} + 1)$, 
    $F_{\ell,k}^{{\sf{R,M}},[v_{\sf{M}}\!,n_{\sf{M}}]}(\boldsymbol{P}_{\!c}, \boldsymbol{\eta}_{c}) \triangleq \log_2{\!( p_{\ell,k}^{[v_{\sf{M}}\!,n_{\sf{M}}]} h_{\ell,k}^{[v_{\sf{M}}\!,n_{\sf{M}}]} \!+\! \Psi_{\ell,k}^{{\sf{M}},[v_{\sf{M}}\!,n_{\sf{M}}]} (\! \boldsymbol{P}_{\!c} \!)} \!+\!  \Theta_{k}^{{\sf{TN}},[v_{\sf{M}}\!,n_{\sf{M}}]} (\!\boldsymbol{P}_{\!c}) + \sigma_{{\sf{M}},k}^{2}) - \eta_{\ell,k}^{[v_{\sf{M}}\!,n_{\sf{M}}]}/\ln(2)$, and $F_{0,k}^{{\sf{R,M}},[v_{\sf{M}}\!,n_{\sf{M}}]}(\boldsymbol{P}_{\!c}, \boldsymbol{\eta}_{c}) \triangleq \log_2{\!( p_{0,k}^{[v_{\sf{M}}\!,n_{\sf{M}}]} g_{k}^{[v_{\sf{M}}\!,n_{\sf{M}}]} \!+\!  \Theta_{k}^{{\sf{Sat}},[v_{\sf{M}}\!,n_{\sf{M}}]} \!(\!\boldsymbol{P}_{\!c}) + \sigma_{{\sf{M}},k}^{2})} - \eta_{0,k}^{[v_{\sf{M}}\!,n_{\sf{M}}]} / \ln(2)$; and $\boldsymbol{\eta}_{c} = \{\eta_{0,k}^{[v_{\sf{M}}\!,n_{\sf{M}}]},\eta_{\ell,k}^{[v_{\sf{x}},n_{\sf{x}}]} | \forall (\ell, k, v_{\sf{x}},n_{\sf{x}}), {\sf{x}} \in \{\sf{D,M}\}\}$ is a slack variable.
\end{proposition}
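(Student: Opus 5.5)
The plan is to split each rate as a difference of logarithms, introduce an exponential slack variable for the interference term, and then linearize the resulting convex exponential by its tangent in SCA fashion.

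\textbf{Step 1: difference-of-logs form.} Once the binary factors are absorbed into the powers via \eqref{eq: reduce binary}, the TN-served rate reads
\begin{equation*}
R_{\ell,k}^{{\sf{M}},[v_{\sf{M}},n_{\sf{M}}]}(\boldsymbol{P}_{\!c})=w_{\sf{M}}\big[\log_2(S+I)-\log_2(I)\big],
\end{equation*}
where
\begin{equation*}
S=p_{\ell,k}^{[v_{\sf{M}},n_{\sf{M}}]}h_{\ell,k}^{[v_{\sf{M}},n_{\sf{M}}]},\qquad I=\Psi_{\ell,k}^{[v_{\sf{M}},n_{\sf{M}}]}(\boldsymbol{P}_{\!c})+\Theta_{k}^{{\sf{TN}},[v_{\sf{M}},n_{\sf{M}}]}(\boldsymbol{P}_{\!c})+\sigma_{{\sf{M}},k}^2 .
\end{equation*}
Both $S$ and $I$ are affine in $\boldsymbol{P}_{\!c}$. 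Hence the first logarithm is concave in $\boldsymbol{P}_{\!c}$, and only the subtracted $\log_2(I)$ term is a difficulty.

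\textbf{Step 2: slack variable.} I introduce $\eta_{\ell,k}^{[v_{\sf{M}},n_{\sf{M}}]}$ with $I\le \exp(\eta_{\ell,k}^{[v_{\sf{M}},n_{\sf{M}}]})$. Then $-\log_2 I\ge -\eta/\ln 2$, so the condition
\begin{equation*}
w_{\sf{M}}\big(\log_2(S+I)-\eta/\ln 2\big)\ge r
\end{equation*}
implies $(C17.1)$. Conversely, for any feasible point one may take $\eta=\ln I$, so no feasible rate level is lost. The constraint $w_{\sf{M}}F^{{\sf{R,M}}}\ge r$ is concave-$\ge$-affine, i.e., it is exactly the convex $(\tilde{C}17a)$.

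\textbf{Step 3: linearize the exponential.} The constraint $I\le e^{\eta}$ is a superlevel set of a convex function, hence nonconvex. This is the main obstacle, and I handle it in SCA fashion by convexity of $e^u$: its first-order Taylor expansion is a global under-estimator,
\begin{equation*}
e^{\eta}\ge \mathcal{F}_{\sf{exp}}^{(i)}(\eta)=e^{\eta^{(i)}}(\eta-\eta^{(i)}+1).
\end{equation*}
Replacing $e^{\eta}$ by this bound gives the affine constraint $(\tilde{C}17b)$, which is tighter than the original, so it is an inner convex approximation. The bound is tight at $\eta=\eta^{(i)}$, with matching gradient. Therefore a feasible iterate stays feasible, and the objective sequence is monotone nonincreasing, by the standard SCA argument.

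\textbf{Step 4: remaining constraints.} The LSat-served $\sf{M}$ rate is handled identically, with $S=p_{0,k}g_k$ and $I=\Theta_k^{{\sf{Sat}}}+\sigma^2$, which yields $(\tilde{C}18a)$ and $(\tilde{C}18b)$. For service $\sf{S}$ the SNR has no interference, so
\begin{equation*}
R_{0,k}^{{\sf{S}}}=w_{\sf{S}}\log_2\!\big(1+p_{0,k}g_k/\sigma^2\big)
\end{equation*}
is already concave in $\boldsymbol{P}_{\!c}$. Thus $(C18)$ for $\sf{x}=\sf{S}$ is kept as the convex constraint $(\tilde{C}18c)$. Combining these steps gives the stated convex constraints.
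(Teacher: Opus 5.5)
Your proposal is correct and follows the paper's route. The paper applies the same split $\log_2(1+x/(y+a)) \ge z \Leftarrow \{\log_2(x+y+a) \ge z + u/\ln 2,\ y+a \le e^{u^{(i)}}(u-u^{(i)}+1)\}$, with $x$ the desired received power, $y$ the interference, $a$ the noise and $u=\eta$, and notes that the $\sf{S}$-service rate is already concave. You additionally spell out why the tangent under-estimator of $e^{\eta}$ makes this a conservative inner approximation, which the paper leaves implicit.
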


\begin{IEEEproof}
Please see appendix~\ref{app: rate cons}.
\end{IEEEproof}

The SINR  constraint $(\bar{C}11)$ and rate constraint $(C17.2)$ for $\sf{D}$ services can be convexified by the following proposition.
\begin{proposition} \label{pro: SINR cons}
Constraint $(\bar{C}10)$ can be convexified as
\bieq{ll}
    (\tilde{C}10a)\!\!: F_{\ell,k}^{{\sf{R,D}},[v_{\sf{D}},n_{\sf{D}}]}(\boldsymbol{P}_{\!c}, \boldsymbol{\eta}_{c}) \geq \mathcal{F}_{\! \sf{apx}}^{(i)}\!(p_{\ell,k}^{[v_{\sf{D}},n_{\sf{D}}]}) \log_2(1+\gamma_{0}^{\sf{D}}) , \nonumber \\
    (\tilde{C}10b)\!\!: \Psi_{\ell,k}^{[v_{\sf{D}},n_{\sf{D}}]} (\boldsymbol{P}_{\!c}) \!+\! \sigma_{{\sf{D}},k}^{2} 
    \! \leq \! \mathcal{F}_{\sf{exp}}^{(i)}(\eta_{\ell,k}^{[v_{\sf{D}},n_{\sf{D}}]}), \nonumber
\eieq
with $F_{\ell,k}^{{\sf{R,D}},[\! v_{\sf{D}},n_{\sf{D}} \!]}\!(\!\boldsymbol{P}_{\! c} ,\! \boldsymbol{\eta}_{c} \!) \!\!=\!  \log_2{\! \big( p_{\ell,k}^{\! [\! v_{\sf{D}},n_{\sf{D}} \!]}  h_{\ell,k}^{\! [\! v_{\sf{D}},n_{\sf{D}} \!]} \!+\! \Psi_{\ell,k}^{\! [\!v_{\sf{D}},n_{\sf{D}} \!]} \!(\! \boldsymbol{P}_{\! c} \!)} + \sigma_{\!{\sf{D}},k}^{2} \big)
    - \eta_{\ell,k}^{\![\! v_{\sf{D}},n_{\sf{D}} \!]} / \ln(2) $.
\end{proposition}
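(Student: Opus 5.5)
Taking $\log_2(\cdot)$ of both sides of $(\bar{C}10)$ gives an equivalent rate-type inequality. Since $\log_2(1+x)$ is increasing and $\mathcal{F}_{\sf{apx}}^{(i)}(p_{\ell,k}^{[v_{\sf{D}},n_{\sf{D}}]})\in\{0,1\}$ in the binary limit, $(\bar{C}10)$ can be written as
\begin{equation*}
\log_2\big(1+\gamma_{\ell,k}^{{\sf{D}},[v_{\sf{D}},n_{\sf{D}}]}(\boldsymbol{P}_{\!c})\big)\geq \mathcal{F}_{\sf{apx}}^{(i)}(p_{\ell,k}^{[v_{\sf{D}},n_{\sf{D}}]})\log_2(1+\gamma_0^{\sf{D}}).
\end{equation*}
The exact equivalence holds at the binary points. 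For intermediate values, the right-hand side interpolates linearly and therefore only tightens the constraint, since $\log_2(1+\gamma_0 a)\ge a\log_2(1+\gamma_0)$ for $a\in[0,1]$ by concavity. The first step is then to split the left-hand side as a difference of logarithms:
\begin{equation*}
\log_2\big(p_{\ell,k}h_{\ell,k}+\Psi_{\ell,k}(\boldsymbol{P}_{\!c})+\sigma_{{\sf{D}},k}^2\big)-\log_2\big(\Psi_{\ell,k}(\boldsymbol{P}_{\!c})+\sigma_{{\sf{D}},k}^2\big).
\end{equation*}
The first term is concave in $\boldsymbol{P}_{\!c}$, being the log of an affine function. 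The second, subtracted term is the source of nonconvexity.

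The second step handles that term in the same way as in the proof of Proposition~\ref{pro: rate cons}. Introduce a slack $\eta_{\ell,k}^{[v_{\sf{D}},n_{\sf{D}}]}$ with $\Psi_{\ell,k}+\sigma_{{\sf{D}},k}^2\le \exp(\eta_{\ell,k}^{[v_{\sf{D}},n_{\sf{D}}]})$. This gives $\log_2(\Psi+\sigma^2)\le \eta/\ln 2$, so the left-hand side is lower bounded by $F_{\ell,k}^{{\sf{R,D}},[v_{\sf{D}},n_{\sf{D}}]}(\boldsymbol{P}_{\!c},\boldsymbol{\eta}_c)$, which is jointly concave. Requiring $F^{{\sf{R,D}}}\ge \mathcal{F}_{\sf{apx}}^{(i)}(p)\log_2(1+\gamma_0^{\sf{D}})$ therefore implies the original inequality. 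Because $\mathcal{F}_{\sf{apx}}^{(i)}$ is affine, this constraint is convex, and it is $(\tilde{C}10a)$.

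The remaining obstacle is the slack constraint $\Psi+\sigma^2\le e^{\eta}$. It is a reverse-convex constraint, since $e^\eta$ is convex and appears on the larger side. Apply the SCA step: by convexity, $e^{\eta}\ge e^{\eta^{(i)}}(\eta-\eta^{(i)}+1)=\mathcal{F}_{\sf{exp}}^{(i)}(\eta)$. Imposing $\Psi+\sigma^2\le\mathcal{F}_{\sf{exp}}^{(i)}(\eta)$ then gives an affine, hence convex, inner approximation, which is $(\tilde{C}10b)$. Finally, observe the standard SCA properties: the bound is tight at $\eta=\eta^{(i)}$, so setting $\eta^{(i)}=\ln(\Psi^{(i)}+\sigma^2)$ keeps the previous iterate feasible. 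Any point satisfying $(\tilde{C}10a)$--$(\tilde{C}10b)$ satisfies $(\bar{C}10)$, which yields the monotone convergence used later.
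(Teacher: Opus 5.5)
Your convexification is the paper's route. Appendix~\ref{app: SINR cons} rewrites $(\bar{C}10)$ in log form and applies the generic bound \eqref{eq: apx rate} with $x=p_{\ell,k}^{[v_{\sf{D}},n_{\sf{D}}]}h_{\ell,k}^{[v_{\sf{D}},n_{\sf{D}}]}$, $y=\Psi_{\ell,k}^{[v_{\sf{D}},n_{\sf{D}}]}$, $a=\sigma_{{\sf{D}},k}^{2}$ and $u=\eta_{\ell,k}^{[v_{\sf{D}},n_{\sf{D}}]}$. That is exactly your difference-of-logs split, the slack $\Psi+\sigma^{2}\le e^{\eta}$, and the tangent bound $e^{\eta}\ge\mathcal{F}^{(i)}_{\sf{exp}}(\eta)$. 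This part is correct, including the fact that $(\tilde{C}10a)$--$(\tilde{C}10b)$ is a convex inner approximation of the log-form constraint, tight at the current iterate.

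The gap is in your justification of the first step, where the direction is reversed. Write $\theta=\mathcal{F}^{(i)}_{\sf{apx}}(p_{\ell,k}^{[v_{\sf{D}},n_{\sf{D}}]})$ (your $a$). The concavity inequality you cite gives $\theta\log_2(1+\gamma_0^{\sf{D}})\le\log_2(1+\theta\gamma_0^{\sf{D}})$ for $\theta\in[0,1]$. Meanwhile, $(\bar{C}10)$ is equivalent to $\log_2(1+\gamma)\ge\log_2(1+\theta\gamma_0^{\sf{D}})$. So for fractional $\theta$ the log form has the smaller threshold: it is a \emph{relaxation} of $(\bar{C}10)$, not a tightening. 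For $\theta>1$, which the affine tangent $\mathcal{F}^{(i)}_{\sf{apx}}$ does reach for large $p$, the direction flips.

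Concretely, take $\gamma_0^{\sf{D}}=8$ and $\theta=1/2$. The log form only requires $\gamma\ge 2$, whereas $(\bar{C}10)$ requires $\gamma\ge 4$. Hence your closing claim, that every point satisfying $(\tilde{C}10a)$--$(\tilde{C}10b)$ satisfies $(\bar{C}10)$, is false whenever $\theta\in(0,1)$. That is exactly the regime around the current iterate, since $\mathcal{F}^{(i)}_{\sf{apx}}(p^{(i)})=1-e^{-p^{(i)}/\epsilon}$.

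The paper simply calls the log rewrite an equivalent transformation. That is exact only when the indicator is $0$ or $1$, i.e., for the binary $\alpha_{\ell,k}^{[v_{\sf{D}},n_{\sf{D}}]}$ of $(C10)$ that $\mathcal{F}^{(i)}_{\sf{apx}}$ stands in for. Feasibility for $(\mathcal{P}_{0})_{c}$ is then argued through the binary recovery \eqref{eq: recover binary} (Proposition~\ref{pro: convergence DTJRA}), not through an implication to $(\bar{C}10)$ at fractional values.

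To repair your write-up:
\begin{itemize}
\item drop the ``only tightens'' sentence and the final implication;
\item justify the log form by its exactness at binary values;
\item state the inner-approximation property relative to the log-form constraint.
\end{itemize}
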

\begin{IEEEproof}
Please see appendix~\ref{app: SINR cons}.
\end{IEEEproof}

\begin{proposition}\label{pro: DS rate cons}
Constraint $(C17.2)$ can be convexified as
\bieq{ll}
    (\tilde{C}17c)\!\!: w_{\sf{D}} \!\!\! \scaleobj{0.8}{\sum_{ \forall (v_{\sf{D}},n_{\sf{D}}) } } \!\!\! F_{\ell,k}^{{\sf{R,D}},[v_{\sf{D}},n_{\sf{D}}]}(\boldsymbol{P}_{\!c}, \boldsymbol{\eta}_{c})
    \!-\!\! \chi_{\sf{D}} \mathcal{F}_{\sf{sqrt}}^{(i)}( \zeta_{\ell,k,[s]}) \!\! \geq \!\! r_{\ell,k,[s]}^{{\sf{SF,D}}} , \nonumber \\
    (\tilde{C}17d)\!\!: \zeta_{\ell,k,[s]} \! \geq \! \scaleobj{0.8}{\sum\nolimits_{\forall (v_{\sf{x}},n_{\sf{x}}) } }  \mathcal{F}_{\! \sf{apx}}^{(i)}\!(p_{\ell,k}^{[v_{\sf{D}},n_{\sf{D}}]}), \nonumber 
\eieq
where $\mathcal{F}_{\sf{sqrt}}^{(i)}(x) \triangleq {x}/{(2\sqrt{x^{(i)}})} + {\sqrt{x^{(i)}}}/{2}$ and $\boldsymbol{\zeta}_{\!c} \triangleq \{\zeta_{\ell,k,[s]} \vert \forall \ell, \forall k \in \mathcal{K}_{\sf{D}}, \forall s \in \mathcal{T}_{\sf{SF}}^{{\sf{Cy}},c} \}$ is a slack variable.
\end{proposition}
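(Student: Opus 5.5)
The plan is to show that $(\tilde{C}17c)$–$(\tilde{C}17d)$ form a convex inner approximation of $(C17.2)$. Any point satisfying them, together with $(\tilde{C}10b)$ from Proposition~\ref{pro: SINR cons}, must satisfy the $\ell_0$-relaxed version of $(C17.2)$ built from \eqref{eq: rate DS 2} and \eqref{eq: apx norm0}, and both new constraints must be convex in $(\boldsymbol{P}_{\!c},\boldsymbol{\eta}_{c},\boldsymbol{\zeta}_{\!c})$. The argument splits the rate \eqref{eq: rate DS 2} into its logarithmic part and its square-root dispersion penalty, and treats each separately.

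\emph{Logarithmic part.} Write each term as
\[
\log_2(1+\gamma_{\ell,k}^{{\sf{D}},[v_{\sf{D}},n_{\sf{D}}]}) = \log_2\big(p_{\ell,k}^{[v_{\sf{D}},n_{\sf{D}}]} h_{\ell,k}^{[v_{\sf{D}},n_{\sf{D}}]} + \Psi_{\ell,k}^{[v_{\sf{D}},n_{\sf{D}}]} + \sigma_{{\sf{D}},k}^2\big) - \log_2\big(\Psi_{\ell,k}^{[v_{\sf{D}},n_{\sf{D}}]} + \sigma_{{\sf{D}},k}^2\big),
\]
with the binary factors already absorbed via \eqref{eq: reduce binary}. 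I will reuse exactly the slack $\eta_{\ell,k}^{[v_{\sf{D}},n_{\sf{D}}]}$ and constraint $(\tilde{C}10b)$ from Proposition~\ref{pro: SINR cons}. Since $\mathcal{F}^{(i)}_{\sf exp}$ is the first-order Taylor expansion of the convex function $\exp$, it satisfies $\mathcal{F}^{(i)}_{\sf exp}(\eta)\le e^{\eta}$. Hence $(\tilde{C}10b)$ gives $\Psi+\sigma^2\le e^{\eta}$, which means $-\log_2(\Psi+\sigma^2)\ge -\eta/\ln 2$. It follows that $\log_2(1+\gamma)\ge F_{\ell,k}^{{\sf{R,D}},[v_{\sf{D}},n_{\sf{D}}]}$. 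Moreover, $F^{{\sf{R,D}}}$ is the log of an affine function minus a linear term, so it is concave.

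\emph{Square-root penalty.} The term $-\chi_{\sf D}\sqrt{\sum\|p\|_0}$ is handled by introducing $\zeta_{\ell,k,[s]}$ as an upper bound on $\sum \mathcal{F}^{(i)}_{\sf apx}(p)$. By \eqref{eq: apx norm0}–\eqref{eq: fapx}, this quantity upper-bounds $\sum\mathcal{F}_{\sf apx}(p)\approx\sum\|p\|_0$, which is the content of $(\tilde{C}17d)$; that constraint is linear, hence convex. Because $\sqrt{\cdot}$ is monotone increasing and concave, it lies below its tangent at $\zeta^{(i)}$:
\[
\sqrt{\sum\|p\|_0}\le\sqrt{\zeta}\le \mathcal{F}^{(i)}_{\sf sqrt}(\zeta).
\]
Multiplying by $-\chi_{\sf D}$ reverses the inequality, giving a lower bound on the penalty term that is affine in $\zeta$.

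\emph{Combining.} Putting the two parts together, the left side of $(\tilde{C}17c)$ is a sum of concave functions plus an affine term, so it is concave. It also lower-bounds $R^{{\sf SF,D}}_{\ell,k,[s]}(\boldsymbol{P}_{\!c})$, so $(\tilde{C}17c)$ implies $(C17.2)$ and defines a convex set. To finish, I will note that all bounds are tight at the expansion point, where $\eta^{(i)}=\ln(\Psi^{(i)}+\sigma^2)$ and $\zeta^{(i)}=\sum\mathcal{F}^{(i)}_{\sf apx}(p^{(i)})$. This gives the standard SCA properties: the feasibility of iterate $i$ is preserved and the objective is monotone. The main obstacle is the $\ell_0$ step, since $\|\cdot\|_0\approx\mathcal{F}_{\sf apx}$ is an approximation and not a bound. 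I will therefore state the result relative to the relaxed rate \eqref{eq: rate DS 2} with $\mathcal{F}_{\sf apx}$ substituted, for which the chain of inequalities above is exact.
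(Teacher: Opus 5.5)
Your proposal is correct and follows the paper's proof. It uses the same three ingredients: the slack $\zeta_{\ell,k,[s]}$ bounding the $\ell_0$ count through $\mathcal{F}^{(i)}_{\sf{apx}}$ in $(\tilde{C}17d)$, the tangent upper bound $\mathcal{F}^{(i)}_{\sf{sqrt}}$ on the concave square root, and the $\eta$-slack convexification of the log terms together with $(\tilde{C}10b)$. Your additional checks make explicit what the paper leaves implicit: concavity of $F^{{\sf{R,D}}}_{\ell,k}$, the bound $\mathcal{F}^{(i)}_{\sf{exp}}\le\exp$, and restricting the inner-approximation claim to the $\mathcal{F}_{\sf{apx}}$-relaxed rate. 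That last restriction is right, but the reason is that $\mathcal{F}_{\sf{apx}}(x)\le\Vert x\Vert_0$ is a bound in the unhelpful direction, not that it is no bound at all.
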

\begin{IEEEproof}
See appendix~\ref{app: DS rate cons}.
\end{IEEEproof}

Thanks to proposition~\ref{pro: rate cons}, \ref{pro: SINR cons}, and \ref{pro: DS rate cons} problem $(\mathcal{P}_{1})_{\!c}$ is transformed into the iterative convex problem $(\mathcal{P}_{2})_{\!c}$ as
\bieq{lll} \label{eq: Prob2}
    (\mathcal{P}_{2})_{\!c}: \; \min_{\boldsymbol{\omega}_{\!c},\boldsymbol{P}_{\!c}, \bar{\boldsymbol{q}}_{\!c}, \boldsymbol{r}_{\!c}, \boldsymbol{\eta}_{c}, \boldsymbol{\zeta}_{\!c} } \!\!\! f_{\sf{obj}}(\bar{\boldsymbol{q}}_{\!c}) \;\;
    \st \; (\tilde{C}1)-(\tilde{C}3),(\tilde{C}5),(\tilde{C}6), \nonumber \\ 
    \hspace{12mm} (\tilde{C}8)-(\tilde{C}10), (C11),(C12),(\tilde{C}13)-(\tilde{C}18), \nonumber
\eieq
Using outcome of solving $(\mathcal{P}_{2})_{\!c}$, $\boldsymbol{\alpha}_{\!c}$ and $\boldsymbol{\beta}_{\!c}$ are recovered as
\bieq{ll} \label{eq: recover binary}
    \!\!\! \alpha_{\ell,k}^{[v_{\sf{x}},n_{\sf{x}}]} \!=\! 1 \text{ if } p_{\ell,k}^{[v_{\sf{x}},n_{\sf{x}}]} \! \geq \! \epsilon_{\sf{rec}} \text{; } \alpha_{\ell,k}^{[v_{\sf{x}},n_{\sf{x}}]} \!=\! 0 \text{ otherwise, } \quad \subnum \\
    \!\!\! \beta_{k}^{[v_{\sf{x}},n_{\sf{x}}]} = 1 \text{ if } p_{0,k}^{[v_{\sf{x}},n_{\sf{x}}]} \geq \epsilon_{\sf{rec}} \text{; } \beta_{k}^{[v_{\sf{x}},n_{\sf{x}}]}=0 \text{ otherwise. } \subnum
\eieq
With an appropriately chosen $\epsilon_{\sf{rec}}$, only the links with sufficiently large transmit power remain active.
Besides, the removal of unselected links--with negligible transmit powers--has an insignificant impact on the aggregated rate. Consequently, the feasibility of the constraints is preserved.
The BWA variable $\boldsymbol{b}_{\!c}$ is recovered by \eqref{eq: BW by norm0}. The proposed solution \textit{``DT-JointRA''} to solve problem $(\mathcal{P}_{0})_{\!c}$ using predicted information $\{\hat{\boldsymbol{h}}_{\!c}, \hat{\boldsymbol{g}}_{\!c}, \hat{\boldsymbol{\lambda}}_{\!c}\}$ is summarized in Alg.~\ref{alg: DTJRA}.

\begin{proposition} \label{pro: convergence DTJRA}
    Alg.~\ref{alg: DTJRA} is guaranteed to converge to a local optimal solution of problem $(\mathcal{P}_{0})_{\!c}$.
\end{proposition}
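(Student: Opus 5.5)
The plan is the standard successive convex approximation (inner approximation / MM) argument applied to the sequence of convex problems $(\mathcal{P}_{2})_{\!c}$ that Alg.~\ref{alg: DTJRA} generates. Let $\boldsymbol{z}^{(i)}$ denote the optimizer of $(\mathcal{P}_{2})_{\!c}$ at iteration $i$; this collects $\bar{\boldsymbol{\omega}}_{\!c},\boldsymbol{P}_{\!c},\bar{\boldsymbol{q}}_{\!c},\boldsymbol{r}_{\!c},\boldsymbol{\eta}_{c},\boldsymbol{\zeta}_{\!c}$. The approximations at iteration $i+1$ are built around $\boldsymbol{z}^{(i)}$. The argument has three parts: feasibility is inherited, the objective is monotone, and limit points satisfy KKT.

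\textbf{Step 1: the surrogates are conservative and tight at the expansion point.} Each surrogate used in Propositions~\ref{pro: rate cons}--\ref{pro: DS rate cons} and in \eqref{eq: fapx} is a first-order Taylor bound of a convex or concave function, so I will list the three properties needed for each.
\begin{itemize}
\item $\mathcal{F}_{\sf apx}$ is concave, so $\mathcal{F}_{\sf apx}\le\mathcal{F}^{(i)}_{\sf apx}$, with equality and equal gradient at $x^{(i)}$.
\item $e^{u}$ is convex, so $\mathcal{F}^{(i)}_{\sf exp}(u)\le e^{u}$, with equality and equal gradient at $u^{(i)}$.
\item $\sqrt{x}$ is concave, so $\sqrt{x}\le\mathcal{F}^{(i)}_{\sf sqrt}(x)$, again tight to first order at $x^{(i)}$.
\end{itemize}
It follows that every constraint of $(\mathcal{P}_{2})_{\!c}$ implies the corresponding $\ell_0$-approximated constraint of $(\mathcal{P}_{1})_{\!c}$. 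Using the appendices of Propositions~\ref{pro: rate cons}--\ref{pro: DS rate cons}, $\log(\cdot)-\eta\le\log(1+\gamma)$ whenever the denominator is at most $e^{\eta}$. Hence the feasible set of $(\mathcal{P}_{2})_{\!c}$ at iteration $i$ is an inner convex approximation of that of $(\mathcal{P}_{1})_{\!c}$.

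\textbf{Step 2: monotone objective.} Because the bounds are tight at the expansion point, $\boldsymbol{z}^{(i)}$ remains feasible for $(\mathcal{P}_{2})_{\!c}$ at iteration $i+1$. To check this, I will set $\eta^{(i)}$ to $\log$ of the interference-plus-noise at $\boldsymbol{P}^{(i)}$ and $\zeta^{(i)}$ to the corresponding sum, then verify that each surrogate reproduces its value at the point. This gives $f_{\sf obj}(\bar{\boldsymbol q}^{(i+1)})\le f_{\sf obj}(\bar{\boldsymbol q}^{(i)})$. The objective is also bounded below by $0$, since $\bar q\ge 0$ by $(\tilde C15b)$ and $(\tilde C16b)$. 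Therefore the objective sequence converges.

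\textbf{Step 3: stationarity of limit points.} The iterates lie in a compact set, because powers are bounded by $(C11)$ and $(C12)$, queues by $(\tilde C15b)$ and $(\tilde C16b)$, and $\bar\omega\in[0,1]$. So a convergent subsequence exists. I will then invoke the standard result of Marks--Wright / Beck et al. for inner approximation. Its hypotheses are that the surrogates are upper (resp. lower) bounds, are tight in value and in gradient at the expansion point, and that Slater/MFCQ holds at the limit. Under these hypotheses every limit point is a KKT point of $(\mathcal{P}_{1})_{\!c}$. The final link back to $(\mathcal{P}_{0})_{\!c}$ goes through the exact reformulation \eqref{eq: reduce binary}--\eqref{eq: BW by norm0} and the recovery \eqref{eq: recover binary}. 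With small $\epsilon$ and $\epsilon_{\sf rec}$, this yields a local optimum of $(\mathcal{P}_{0})_{\!c}$ in the same sense, namely a point at which the binary pattern is fixed and the continuous part is KKT.

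\textbf{Main obstacle.} I expect two difficulties. The first is Step 3's constraint qualification, which has to be assumed, for example as a strictly feasible point of $(\mathcal{P}_{2})_{\!c}$. The second, more seriously, is the claim of ``local optimal'' for the mixed-integer problem. The $\ell_0$ smoothing and the thresholding are approximations, so I would state that the guarantee is convergence to a stationary (KKT) point of the smoothed problem, and that this point is interpreted as a local solution of $(\mathcal{P}_{0})_{\!c}$ for sufficiently small $\epsilon$. I would also need a feasible initial point for the first iteration, which I would assume is supplied by the algorithm's initialization.
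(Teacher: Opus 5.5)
Your proposal is correct and takes essentially the same route as the paper. The paper's proof (deferred to Proposition~4 of its earlier work) argues that the SCA iterates of $(\mathcal{P}_{2})_{\!c}$ give a non-increasing, bounded objective sequence, and then passes to $(\mathcal{P}_{0})_{\!c}$ through the binary recovery step and a claimed feasible-set inclusion. Your version is more careful: it shows the previous iterate stays feasible via tightness, establishes compactness, gets KKT limit points under a constraint qualification, and requires a feasible initial point. You also correctly caveat that the $\ell_0$ smoothing and $\epsilon_{\sf{rec}}$ thresholding support ``local optimality'' for $(\mathcal{P}_{0})_{\!c}$ only approximately, whereas the paper asserts the inclusion without justification.
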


\begin{IEEEproof}
    Proposition~\ref{pro: convergence DTJRA} can be proved similarly as \textit{proposition 4} in \cite{Hung_TCOM25}. Particularly, 
    due to the properties of the SCA-based problem \cite{SCA} and limited transmit power, the generated objective value sequence by solving $(\mathcal{P}_{2})_{\!c}$ is non-increasing and bounded, resulting in Alg.~\ref{alg: DTJRA}'s convergence. In additional, by recovering binary variables, the feasible set of $(\mathcal{P}_{2})_{\!c}$ is a subset of that of $(\mathcal{P}_{0})_{\!c}$. Hence, Alg.~\ref{alg: DTJRA} converges to a local optimal solution of $(\mathcal{P}_{0})_{\!c}$.
\end{IEEEproof}

\vspace{-3mm}

\subsection{RT-Refine Solution}
Problems $(\mathcal{P}_{0}^{\sf{refi}})_{s}$ and $(\mathcal{P}_{0})_{\!c}$ share a similar structure with certain omitted constraints and variables in $(\mathcal{P}_{0}^{\sf{refi}})_{s}$. Hence, by using approximation steps for $(\mathcal{P}_{0})_{\!c}$, $(\mathcal{P}_{0}^{\sf{refi}})_{s}$ is transformed into iterative convex problem $(\mathcal{P}_{1}^{\sf{refi}})_{s}$ as
\begin{equation}
\begin{array}{lll} \label{eq: Prob1Refine}
    \!\!\!(\mathcal{P}_{1}^{\sf{refi}})_{s} \!\! : \min\limits_{\boldsymbol{p}[s], \boldsymbol{r}[s], \boldsymbol{\eta}[s], \bar{\boldsymbol{q}}[s] } \;f_{\sf{obj}}(\bar{\boldsymbol{q}}[s]) \;
    \hspace{1mm} \st \; (\tilde{C}5)_{s}, \!(\tilde{C}6)_{s}, \!(\tilde{C}9)_{s}, \nonumber\\
    \!(\tilde{C}10)_{s},  \! (C11)_{s}, \!(\tilde{C}14)_{s}, \!(\tilde{C}15)_{s} \!, \!(\tilde{C}16)_{s}, \!(\tilde{C}17)'_{s}, \!(\tilde{C}18a,b)'_{s} \!, \nonumber
\end{array}
\end{equation}
where $(\tilde{C}17)'$, $(\tilde{C}18a)'$, and $(\tilde{C}18b)'$ are the convex constraint revised from $(\tilde{C}17)$, $(\tilde{C}18a)$, and $(\tilde{C}18b)$ using terms $\hat{\Theta}_{k}^{{\sf{TN}},[v_{\sf{M}},n_{\sf{M}}]}$ and $\hat{\Theta}_{k}^{{\sf{Sat}},[v_{\sf{M}},n_{\sf{M}}]}$ instead of $\Theta_{k}^{{\sf{TN}},[v_{\sf{M}},n_{\sf{M}}]}$ and $\Theta_{k}^{{\sf{Sat}},[v_{\sf{M}},n_{\sf{M}}]}$, respectively. It is worth noting that information  $\{\boldsymbol{h}_{e}^{*}, \hat{\boldsymbol{g}}_{e}, \boldsymbol{\lambda}_{e}\}$ is used to solve $(\mathcal{P}_{1}^{\sf{refi}})_{s}$.
The overall solution of RT-Refine Algorithm, is described in Alg.~\ref{alg: RTRefine}, with the workflow illustrated in Fig.~\ref{fig:flowchart_RTRefine}.

\begin{proposition} \label{pro: convergence RTRefine}
    Alg.~\ref{alg: RTRefine} converges to a local optimal solution of problem sequence $\{ (\mathcal{P}_{0}^{\sf{refi}})_{s} \}_{\forall s}$.
\end{proposition}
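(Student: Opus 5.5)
The argument mirrors the proof of Proposition~\ref{pro: convergence DTJRA}, applied separately to each sub-frame problem of the sequence. Since $\{\boldsymbol{b}_{\!c}, \boldsymbol{\omega}_{\!c}, \boldsymbol{p}_{\!0,c}, \boldsymbol{\beta}_{\!c}\}$ are fixed by DT-JointRA, each $(\mathcal{P}_{0}^{\sf{refi}})_{s}$ involves only the TN variables $\{\boldsymbol{p}[s], \boldsymbol{\alpha}[s], \boldsymbol{q}[s]\}$. The problem data $\{\boldsymbol{h}_{e}^{*}, \hat{\boldsymbol{g}}_{e}, \boldsymbol{\lambda}_{e}\}$ are known at the TCU. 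The problems are coupled in time only through the queue states $\bar{\boldsymbol{q}}$ carried from SF $s-1$ to SF $s$, and these enter $(\mathcal{P}_{0}^{\sf{refi}})_{s}$ as constants. It therefore suffices to show that, for an arbitrary fixed initial queue state, the inner SCA loop of Alg.~\ref{alg: RTRefine} converges to a local optimum of $(\mathcal{P}_{0}^{\sf{refi}})_{s}$. Induction over $s$ then gives the claim for the whole sequence.

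For a fixed $s$, the first step is to verify that $(\mathcal{P}_{1}^{\sf{refi}})_{s}$ is a valid inner (conservative) approximation of the $\ell_0$-relaxed problem at iteration $i$. This uses three facts. The bound $\mathcal{F}_{\sf{apx}}^{(i)}$ in \eqref{eq: fapx} upper-bounds the concave $\mathcal{F}_{\sf{apx}}$. The bound $\mathcal{F}_{\sf{exp}}^{(i)}$ lower-bounds the convex $\exp$. The bound $\mathcal{F}_{\sf{sqrt}}^{(i)}$ upper-bounds the concave $\sqrt{\cdot}$. All three are tight at the current point, as established in Propositions~\ref{pro: rate cons}--\ref{pro: DS rate cons}. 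Replacing $\Theta$ by $\kappa\Theta$ in $(\tilde{C}17)'$ and $(\tilde{C}18a,b)'$ changes only the constant coefficients and leaves these properties intact. Consequently, the solution at iteration $i$ remains feasible at iteration $i+1$, and the objective $f_{\sf{obj}}(\bar{\boldsymbol{q}}[s])$ is non-increasing.

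Next, boundedness. Because $\bar{\boldsymbol{q}}\geq 0$ under $(\tilde{C}15b)$ and $(\tilde{C}16b)$, the objective is bounded below by zero. The power budget $(C11)_{s}$ makes the feasible set compact. Monotonicity together with boundedness then gives convergence of the objective sequence. Invoking the standard SCA results of \cite{SCA}, with differentiable surrogates that match the gradients of the original functions at the expansion point, every limit point is a KKT point of the relaxed problem.

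Finally, the binary association $\boldsymbol{\alpha}[s]$ is recovered by thresholding at $\epsilon_{\sf{rec}}$, as in \eqref{eq: recover binary}. As argued for DT-JointRA, the recovered point lies in the feasible set of $(\mathcal{P}_{0}^{\sf{refi}})_{s}$, so the solution is a local optimum of the original problem.

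The main obstacle I expect is initial feasibility at each SF. The SCA argument requires a feasible starting point for $(\mathcal{P}_{1}^{\sf{refi}})_{s}$, but the DT-JointRA output $\{\boldsymbol{p}_{\!c},\boldsymbol{\alpha}_{\!c}\}$ was computed from predicted channels and traffic. Under the actual $\boldsymbol{h}_{e}^{*}$, $\boldsymbol{\lambda}_{e}$ and the margin $\kappa$, it may violate $(\tilde{C}10)_{s}$ or $(\tilde{C}14)_{s}$. I would handle this by assuming feasibility of each $(\mathcal{P}_{0}^{\sf{refi}})_{s}$, which is implicit in the problem statement. One fallback is to start from the DT-JointRA point and, if it is infeasible, first run a feasibility phase that minimizes constraint violation via slack variables until it reaches zero. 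Alternatively, feasibility can be obtained by continuity, provided the prediction error is small. Once feasibility is in hand, the monotone-convergence argument above goes through unchanged for every $s$.
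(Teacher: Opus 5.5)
Your proposal is correct and takes essentially the same route as the paper. The paper argues that the SCA loop for each $(\mathcal{P}_{1}^{\sf{refi}})_{s}$ converges, because it is an SCA scheme with a non-increasing, bounded objective under tight conservative first-order surrogates, and that its feasible set lies inside that of $(\mathcal{P}_{0}^{\sf{refi}})_{s}$ once $\boldsymbol{\alpha}[s]$ is recovered by thresholding; convergence of phase~1 is taken from Proposition~\ref{pro: convergence DTJRA}. Your per-SF decoupling through the inherited queue state, the compactness and KKT remarks, and the discussion of whether the DT-JointRA starting point is feasible are extra detail that the paper's short proof leaves implicit, not a different argument.
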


\begin{IEEEproof}
    For each cycle $c$, Alg.~\ref{alg: RTRefine} solves $(\mathcal{P}_{2})_{\!c}$ and $(\mathcal{P}_{1}^{\sf{refi}})_{s}$. Solving $(\mathcal{P}_{2})_{\!c}$ and $(\mathcal{P}_{1}^{\sf{refi}})_{s}$ are ensured to converge since they are the SCA-based problems \cite{SCA}. Besides, for each loop of SF $s$ (steps $8-16$), the $(\mathcal{P}_{1}^{\sf{refi}})_{s}$'s feasible set is a subset of that of $(\mathcal{P}_{0}^{\sf{refi}})_{s}$. Hence, Alg.~\ref{alg: RTRefine} converges to a local optimal solution of $\{ (\mathcal{P}_{0}^{\sf{refi}})_{s} \}_{\forall s}$.
\end{IEEEproof}

\begin{algorithm}[!t]
\scriptsize
\begin{algorithmic}[1]
    \captionsetup{font=footnotesize}
    \protect\caption{\
    \textsc{DT-based Joint-RA Algorithm (DT-JointRA)}}
    \label{alg: DTJRA}
    \STATE \textbf{Input:} Predicted $\{ \hat{\boldsymbol{h}}_{\!c}, \hat{\boldsymbol{g}}_{\!c},\hat{\boldsymbol{\lambda}}_{\!c} \}$ from DT.
    \STATE Set $i=1$ and generate an initial point $(\boldsymbol{P}_{\!c}^{(0)}, \boldsymbol{\eta}_{c}^{(0)}, \boldsymbol{\zeta}_{\!c}^{(0)})$.\\
    \REPEAT
    \STATE Solve problem $(\mathcal{P}_{2})_{\!c}$ to obtain $(\boldsymbol{P}_{\!c}^{\star}, \boldsymbol{\eta}_{c}^{\star}, \boldsymbol{\zeta}_{\!c}^{\star} )$.
    \STATE Update $ (\boldsymbol{P}_{\!c}^{(i)}, \boldsymbol{\eta}_{c}^{(i)}, \boldsymbol{\zeta}_{\!c}^{(i)})=( \boldsymbol{P}_{\!c}^{\star}, \boldsymbol{\eta}_{c}^{\star}, \boldsymbol{\zeta}_{\!c}^{\star})$ and $i:=i+1$.
    \UNTIL Convergence
    \STATE Recovery $\boldsymbol{\omega}_{\!c}^{\star}$ and binary solutions $\boldsymbol{b}_{\!c}^{\star}$, $\boldsymbol{\alpha}_{\!c}^{\star}$, and $\boldsymbol{\beta}_{\!c}^{\star}$  by \eqref{eq: transform flow-split}, \eqref{eq: BW by norm0}, and \eqref{eq: recover binary}.
    \STATE \textbf{Output:} Solution $\{\boldsymbol{b}_{\!c}^{\star},\boldsymbol{\omega}_{\!c}^{\star},\boldsymbol{P}_{\!c}^{\star}, \boldsymbol{\alpha}_{\!c}^{\star}, \boldsymbol{\beta}_{\!c}^{\star}, \boldsymbol{\eta}_{c}^{\star}\}$ for cycle $c$.
\end{algorithmic}
\normalsize 
\end{algorithm}

\begin{figure}[!t]
    \vspace{-3mm}
    \centering
    \includegraphics[width=1\linewidth]{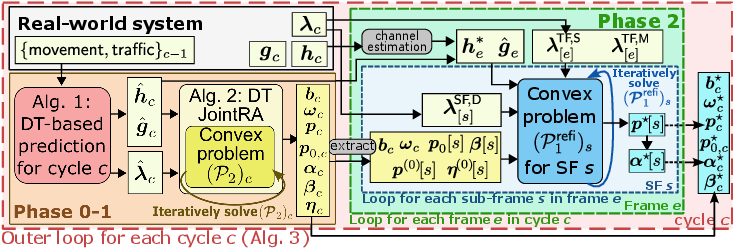}
    \vspace{-6mm}
    \captionsetup{font=footnotesize}
    \caption{Flowchart of the proposed DT-based algorithm.}
    \label{fig:flowchart_RTRefine}
    \vspace{-3mm}
\end{figure}

\subsection{RT-Refine Algorithm Implementation}
\vspace{-1mm}
\begin{figure}[!t]
  \captionsetup{font=footnotesize}
    \includegraphics[width=85mm]{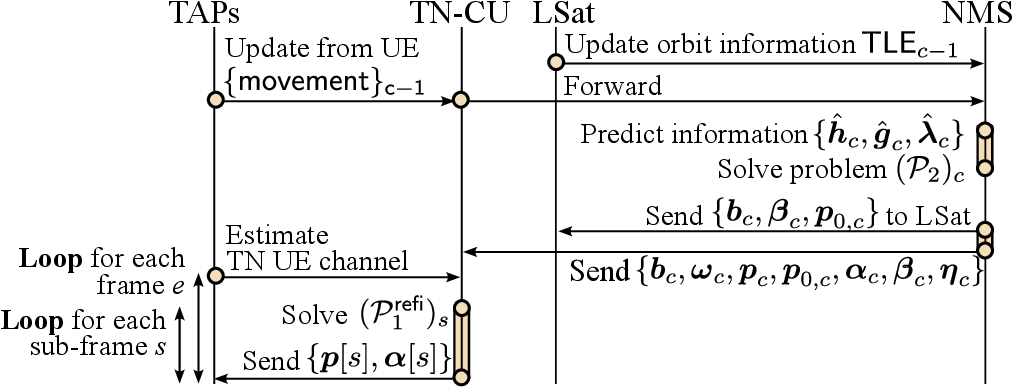}
    \vspace{-2mm}
    \caption{Data exchange in DT-based algorithm.}
    \label{fig:RTRefine_signaling}
    \vspace{-4mm}
\end{figure}

Regarding implementation, Fig.~\ref{fig:RTRefine_signaling} shows execution steps and data exchange among TAPs, TCU, LSat, and NMS in the RT-Refine algorithm. The procedure is summarized as
\begin{itemize}
    \item At each cycle, TAPs and LSat send the updated UE movement and orbit information to NMS. Based on which, the necessary information is predicted and Alg.~\ref{alg: DTJRA} is executed to pre-solve $(\mathcal{P}_{2})_{\!c}$ for upcoming cycle $c$.
    \item Once $(\mathcal{P}_{2})_{\!c}$ is solved, all outcomes are sent to TCU for the next phase, while only SatNet's BWA, power control, and UE/RB assignment solutions are sent to LSat.
    \item For the given solution sent from NMS, TCU refines TN short-term decisions at each SF $s$. Subsequently, solutions are sent to TAPs.
\end{itemize}
One can see that, at each cycle, there is only one round-trip communication between LSat and ground segments, i.e., LSat updates orbit information and then NMS sends transmission solutions. Additionally, depending on the stability and accuracy of orbit information (TLE data), LSat can update less frequently, which further reduces costly signaling.
The complexity aspect is further discussed in the following section.

\begin{remark}
    In practical implementations, resource management tasks must be completed within a finite time horizon, which may vary depending on the specific system. For simplicity, we assume that the computational capacities of the NMS and TCU are sufficient to meet these time constraints.
\end{remark}

\begin{algorithm}[!t]
\scriptsize
\begin{algorithmic}[1]
    \captionsetup{font=footnotesize}
    \protect\caption{\
    \textsc{Real-time-Refine Algorithm (RT-Refine)}}
    \label{alg: RTRefine}
    \STATE \textbf{Initialize:} Actual information $\{ \boldsymbol{\lambda}_{0}, \boldsymbol{u}_{0}^{\sf{ue}}, {\sf{TLE}}_{0}\}$ in cycle $0$. 
    \FOR{Each cycle $c$, $c=1\rightarrow N_{\sf{Cy}}$}
    \STATE \textbf{Phase 0:} Predict $\{ \hat{\boldsymbol{h}}_{\!c}, \hat{\boldsymbol{g}}_{\!c},\hat{\boldsymbol{\lambda}}_{\!c} \}$ by Alg.~\ref{alg: DTPred}.
    \STATE \textbf{Phase 1:} Execute Alg.~\ref{alg: DTJRA} for solution $\{\boldsymbol{b}_{\!c}^{\star},\boldsymbol{\omega}_{\!c}^{\star},\boldsymbol{P}_{\!c}^{\star}, \boldsymbol{\alpha}_{\!c}^{\star}, \boldsymbol{\beta}_{\!c}^{\star}, \boldsymbol{\eta}_{c}^{\star} \}$ for cycle $c$..
    \STATE \textbf{Phase 2: Re-optimize} using $\boldsymbol{h}_{e}^{*}, \! \hat{\boldsymbol{g}}_{e},\!\boldsymbol{\lambda}_{e}$ and initial point from output in \textbf{step 4}.
    \FOR{Each frame $e$ in cycle $c$}
        \STATE Collect $\{ \boldsymbol{\lambda}_{e}^{\sf{TF,x}} \}, {\sf{x}} \in \scaleobj{0.9}{\sf{\{M,S\}}}$. Estimate channel of TN UEs and build $\boldsymbol{h}_{e}^{*}$.
        \FOR{Each SF $s$ in frame $e$}
        \STATE Collect $ \boldsymbol{\lambda}_{s}^{\sf{SF,D}} $, set i=0 and extract initial point from solution in \textbf{step 4}.
            \REPEAT
                \STATE Solve problem $(\mathcal{P}_{1}^{\sf{refi}})_{s}$ to obtain $(\boldsymbol{p}^{\star}[s], \boldsymbol{\eta}^{\star}[s])$.
                \STATE Update $ (\boldsymbol{p}^{(i)}[s], \boldsymbol{\eta}^{(i)}[s]) = (\boldsymbol{p}^{\star}[s], \boldsymbol{\eta}^{\star}[s])$ and $i:=i+1$.
            \UNTIL Convergence
            \STATE Recover binary variable $\boldsymbol{\alpha}^{\star}[s]$ by \eqref{eq: recover binary}.
            \STATE \textbf{Output SF $s$:} The adjusted solution  $(\boldsymbol{p}^{\star}[s], \boldsymbol{\alpha}^{\star}[s])$.
        \ENDFOR
    \ENDFOR
    \STATE \textbf{Output cycle $c$:} Solution $\{\boldsymbol{b}_{\!c}^{\star},\boldsymbol{\omega}_{\!c}^{\star},\boldsymbol{p}_{\!c}^{\star}, \boldsymbol{p}_{\!0,c}^{\star}, \boldsymbol{\alpha}_{\!c}^{\star}, \boldsymbol{\beta}_{\!c}^{\star} \}$ with adjusted $(\boldsymbol{p}_{\!c}^{\star}, \boldsymbol{\alpha}_{\!c}^{\star})$.
    \ENDFOR
    \STATE \textbf{Output:} Solution for $N_{\sf{cy}}$  cycles.
\end{algorithmic}
\normalsize 
\end{algorithm}

\begin{table}[]
    \setlength{\tabcolsep}{2pt}
    \captionsetup{font=footnotesize}
    \centering
    \caption{Comparison of examined schemes.}
    \label{tab:ExaminedSchemes}
    \vspace{-2mm}
    \scalebox{0.72}{
    \begin{tabular}{|l|c|l|l|c|c|}
    \hline
        \textbf{Scheme} & \makecell{\textbf{DT}\\ \textbf{system}} & \textbf{CSI information} & \textbf{Traffic information} & \makecell{\textbf{Interference}\\\textbf{aware}} & \makecell{\textbf{Convergence}\\\textbf{acceleration}} \\
        \hline
        \textbf{Full-info Alg.} & \multicolumn{3}{c|}{Full actual network state knowledge} & \cmark & \xmark \\
        \hline
        \textbf{RT-Refine Alg.} & \cmark & & & \cmark & \cmark \\
        \multicolumn{1}{|r|}{Phase 1} & \cmark & Cycle-Predicted CSI & Cycle-Predicted traffic & \cmark & \\
        \multicolumn{1}{|r|}{Phase 2} & \cmark & Subframe-Actual CSI & Subfr./frame-Actual traffic & \cmark & \cmark \\
        \hline 
        \textbf{SF-RA Alg.} & \xmark & Subframe-Actual CSI & Cycle-Actual traffic & \cmark & \xmark \\
        \hline
        \makecell{\textbf{Ref./Heuristic/}\\ \textbf{Greedy Algs.}} & \xmark & Subframe-Actual CSI & Cycle-Actual traffic & \xmark & \xmark \\
        \hline
    \end{tabular}}
\end{table}

\vspace{-5mm}

\section{Benchmarks and Complexity Analysis} \label{sec: benchmark-complexity}
\vspace{-1mm}
This section provides several bechmarks for comparison purposes. The differences across schemes are summarized in Table~\ref{tab:ExaminedSchemes}. Due to space limit, we describe the overall big-O complexity instead of a detailed analysis where those of solving convex problems are derrived by \cite{complexity}. 

\vspace{-5mm}
\subsection{Greedy Algorithm}
\vspace{-1mm}
The \textbf{Greedy Algorithm} is summarized in Alg.~\ref{alg: greedy}. The number of used SCs for $\sf{D}$ services is fixed to $N_{\sf{SC}}^{\sf{D}}$ while BW for the $\sf{M}$ and $\sf{S}$ services is uniformly allocated. The AP/LSat-UE association and RB assignment are selected based on channel gain matrices in cycle $c$. Afterward, the water-filling algorithm is utilized for power allocation by ignoring interference with $\epsilon_{\sf{pow}}$ tolerance for water level searching. 
The traffic steering decision is set proportional to the UE rate. One notes that this algorithm does not guarantee SINR/rate and QL constraints $(C10),(C14)-(C16)$.

\begin{table}[]
    \captionsetup{font=footnotesize}
    \centering
    \caption{QL and remaining traffic $\sf{D}$ of Heuristic algorithm versus $N_{\sf{SC}}^{\sf{D}}$.}
    \vspace{-2mm}
    \scalebox{0.8}{
    \begin{tabular}{|l|c|c|c|c|c|}
    \hline
         Num. SCs for $\sf{D}$ services, $N_{\sf{SC}}^{\sf{D}}$ & 1 & 2& 3& 4&5 \\ \hline
         Mean QL (MB) & 20.12 & 23.12 & 27.08 & 31.55 & 37.29 \\
         Remaining $\sf{D}$ traffic (\%) & 22.2 & 19.5 & 18.4 & 17.8 & 17.4 \\ \hline
    \end{tabular}}
    \label{tab:HeuAlg_Nsb}
\end{table}

\vspace{-4mm}
\subsection{Heuristic Algorithm}
\vspace{-1mm}
A \textbf{Heuristic Algorithm} is further proposed by updating the greedy algorithm above, where the BWA is adopted from \cite{kavehmadavani_TWC23}. 
In particular, the approach is developed from Alg.~\ref{alg: greedy} as: \textbf{step 1)} traffic $\boldsymbol{\lambda}_{\!c}$ is added to input; \textbf{step 4)} BW for the $\sf{M}$ and $\sf{S}$ services is set proportional to the remaining QL in cycle $(c-1)$; \textbf{step 19)} after power allocation, the remaining QL is computed based on the rate of UE and traffic.
The complexity of the greedy (Alg.~\ref{alg: greedy}) and heuristic algorithms is
\vspace{-2mm}
\begin{equation}
    {\sf{X}}_{\sf{Grd-Heu}} = \mathcal{O}(N_{\sf{Cy}} (X_{\sf{Assoc}} -N\log_{2}(\epsilon_{\sf{pow}}))).
    \vspace{-3mm}
\end{equation}
\vspace{-8mm}
\subsection{Other Benchmark}
\vspace{-1mm}
Given the similar aspects with our work, the proposed algorithm in \cite{kavehmadavani_TWC23} is reused with modifications to adapt to our considered problem. Particularly, for the long-term decisions, BWA for BWPs is set propostional to the next cycle service traffics while the traffic steering for each node is set propotional to the previous cycle sum-rate of that node. For the short-term decision, power control and RB assignment are jointly optimized wherein the co-channel interference is ignored \cite{kavehmadavani_TWC23}. The adapted algorihm using full actual information is called as \textbf{Reference Algorithm} with a complexity of
\vspace{-1mm} 
\begin{equation} \label{eq: complexity RefAlg}
    \hspace{-2mm} X_{\sf{RefAlg}} = \mathcal{O}(N_{\sf{Cy}} N_{\sf{iter}} a_{\sf{DTJRA}}^2  (a_{\sf{RefAlg}} + b_{\sf{RefAlg}}) b_{\sf{RefAlg}}^{1/2} ), 
    \vspace{-2mm} 
\end{equation}
where $a_{\sf{RefAlg}} = N_{\sf{SF}}( L( K_{\sf{D}}( 8\bar{V}_{\sf{D}} + 2) 
+ K_{\sf{M}}( 6 \bar{V}_{\sf{M}} + 2)) + K_{\sf{M}}(6 \bar{V}_{\sf{M}} + 2) 
+ K_{\sf{S}}(3 \bar{V}_{\sf{S}} + 1))$, 
$b_{\sf{RefAlg}} = N_{\sf{SF}} \big( L( K_{\sf{D}}( 8 \bar{V}_{\sf{D}} + 3 ) + K_{\sf{M}}(\bar{V}_{\sf{M}} + 1) 
+ 2(2\bar{V}_{\sf{D}} + \bar{V}_{\sf{M}} + 3 ) )
+ 2( 2K_{\sf{D}}\bar{V}_{\sf{D}} + K_{\sf{M}}(3 \bar{V}_{\sf{M}} + 2)) 
+ 2\bar{V}_{\sf{M}} + K_{\sf{S}} (\bar{V}_{\sf{S}} + 2) + \bar{V}_{\sf{S}} + 5 \big)$, and
$N_{\sf{iter}}$ is the number of iterations for convergence.


Furthermore, one introduces an interference-aware benchmark for the fairness comparisons. In particular, the BWA and traffic steering are adopted from \textbf{Reference Algorithm} over cycles. The RA for each sub-frame is obtained by solving an iterative algorithm which is adapted from $(\mathcal{P}_{2})_{c}$ for one sub-frame with the fixed BWA and traffic steering. This benchmark is called \textbf{Sub-frame-RA (SF-RA) Algorithm}. Since the SF-RA algorithm and RT-Refine phase 2 execute a similar problem for one sub-frame, their complexities in terms of big-O are equivalent.

In addition, a \textbf{Full Information Algorithm (FIA)} is further examined where DT-JointRA algorithm is executed with full actual information.

\begin{algorithm}[t]
\scriptsize
\begin{algorithmic}[1]
     \captionsetup{font=small}
    \protect\caption{\textsc{Greedy Algorithm}}
    \label{alg: greedy}
    \STATE \textbf{Input:} Channel $ \boldsymbol{h}_{\!c} \!=\! [\!h^{\sf{x}}(\ell ,k,v_{\sf{x}},n_{\sf{x}})\!]$, $ \boldsymbol{g}_{\!c}\!=\![\!g^{\sf{x}}(k,v_{\sf{x}},n_{\sf{x}})\!]$, ${\forall (\ell ,k,v_{\sf{x}},n_{\sf{x}},{\sf{x}}, c)}$.
    \STATE \textbf{Initialize:} Zero matrices $\boldsymbol{\alpha}_{\!c}$, $\boldsymbol{\beta}_{\!c}$, and $\boldsymbol{b}_{\!c}$ with $\forall c$.
    \FOR{Each cycle $c$}
    \STATE Use $N_{\sf{sb}}^{\sf{d}}$ $\sf{D}$ SCs, uniformly allocate BW for services $\sf{x} \in \{M,S\}$. Build $\boldsymbol{b}_{\!c}$.
    \FOR{Each frame $e$}
    \STATE Extract $\{ \! \boldsymbol{h}_{e} \!, \boldsymbol{g}_{e} \!, \boldsymbol{\alpha}_{e} \!, \boldsymbol{\beta}_{e} \! \}$ for frame $e$. Set elements of non-using SCs to zero.
    \FOR{Each UE $k$ service $\sf{x}$}
        \WHILE{$\boldsymbol{h}_{e}^{\sf{x}}(:,k,:,:) \neq \mathbf{0}$}
            \STATE Find index $(\hat{\ell},\hat{v},\hat{n})$ where $\boldsymbol{h}_{e}^{\sf{x}}(\hat{\ell},k,\hat{v},\hat{n}) = {\sf{max}}( \boldsymbol{h}_{e}^{\sf{x}}(:,k,:,:))$.
            \STATE Set $\boldsymbol{\alpha}_{e}^{\sf{x}}(\hat{\ell},k,\hat{v},\hat{n}) = 1$, $\boldsymbol{h}_{e}^{\sf{x}}(:,k,\hat{v},\hat{n}) = 0$ and $\boldsymbol{h}_{e}^{\sf{x}}(\hat{\ell},:,\hat{v},\hat{n}) = 0$.
        \ENDWHILE
        \WHILE{$\boldsymbol{g}_{e}^{\sf{x}}(k,:,:) \neq \mathbf{0}$}
            \STATE Find index $(\hat{v},\hat{n})$ satisfying $\boldsymbol{g}_{e}^{\sf{x}}(k,\hat{v},\hat{n}) = {\sf{max}}( \boldsymbol{g}_{e}^{\sf{x}}(k,:,:))$.
            \STATE Set $\boldsymbol{\beta}_{e}^{\sf{x}}(k,\hat{v},\hat{n}) = 1$ and $\boldsymbol{g}_{e}^{\sf{x}}(:,\hat{v},\hat{n}) = 0$.
        \ENDWHILE
    \ENDFOR
    \ENDFOR
    \STATE Assign $\boldsymbol{\alpha_{e}}, \boldsymbol{\beta_{e}}$ to $\boldsymbol{\alpha_{\!c}}, \boldsymbol{\beta_{\!c}}$
    \STATE \textbf{Power control $\{ \boldsymbol{P}_{\!c} \}$:} using water-filling scheme by ignoring interference.
    \STATE \textbf{Traffic steering $\boldsymbol{\omega}_{\!c}$:} set proportional to UE rate.
    \ENDFOR
    \STATE \textbf{Output:} Solution $\{\boldsymbol{b}_{\!c},\boldsymbol{\omega}_{\!c},\boldsymbol{P}_{\!c}, \boldsymbol{\alpha}_{\!c}, \boldsymbol{\beta}_{\!c} \}, \forall c$ for $N_{\sf{Cy}}$  cycles.
\end{algorithmic} 
\normalsize
\end{algorithm}

\vspace{-2mm}
\subsection{Proposed Algorithm Complexity}
\vspace{-1mm}
According to Alg.~\ref{alg: DTJRA} and , the complexities of DT-JointRA and RT-Refine algorithms are expressed respectively as \cite{complexity}
\vspace{-2mm}
\begin{IEEEeqnarray}{ll} 
    \hspace{-2mm} X_{\scaleobj{0.8}{\sf{DTJRA}}} \!=\! \mathcal{O}(N_{\sf{Cy}} N_{\sf{iter}} a_{\scaleobj{0.8}{\sf{DTJRA}}}^2  (a_{\scaleobj{0.8}{\sf{DTJRA}}} + b_{\scaleobj{0.8}{\sf{DTJRA}}}) b_{\scaleobj{0.8}{\sf{DTJRA}}}^{1/2} ), \! \label{eq: complexity DTJRA} \\
    \hspace{-2mm} X_{\scaleobj{0.8}{\sf{RTRefi}}} \!=\! X_{\scaleobj{0.8}{\sf{DTJRA}}} \!+\!\! \mathcal{O} \! (N_{\scaleobj{0.8}{\sf{Cy}}} N_{\scaleobj{0.8}{\sf{SF}}} \bar{N}_{\scaleobj{0.8}{\sf{iter}}} a_{\scaleobj{0.8}{\sf{RTRefi}}}^2  (a_{\scaleobj{0.8}{\sf{RTRefi}}} \!\!+\!\! b_{\scaleobj{0.8}{\sf{RTRefi}}}) b_{\scaleobj{0.8}{\sf{RTRefi}}}^{1/2} ), \quad \label{eq: complexity RTRefine}
\end{IEEEeqnarray}
where $N_{\sf{iter}}$ and $\bar{N}_{\sf{iter}}$ are the iteration  numbers for convergence in solving $(\mathcal{P}_{2})_{c}$ and $(\mathcal{P}_{1}^{\sf{refi}})_{s}$. $a_{\sf{DTJRA}} = N_{\sf{SF}}( L( K_{\sf{D}}( 8\bar{V}_{\sf{D}} + 2) 
+ K_{\sf{M}}( 6 \bar{V}_{\sf{M}} + 2)) + K_{\sf{M}}(6 \bar{V}_{\sf{M}} + 2) 
+ K_{\sf{S}}(3 \bar{V}_{\sf{S}} + 1)) + L(\bar{V}_{\sf{D}} + \bar{V}_{\sf{M}})$, $b_{\sf{DTJRA}} = N_{\sf{SF}} \big( L( K_{\sf{D}}( 8 \bar{V}_{\sf{D}} + 3 ) + K_{\sf{M}}(\bar{V}_{\sf{M}} + 1) 
+ 2(2\bar{V}_{\sf{D}} + \bar{V}_{\sf{M}} + 3 ) )
+ 2( 2K_{\sf{D}}\bar{V}_{\sf{D}} + K_{\sf{M}}(3 \bar{V}_{\sf{M}} + 2)) 
+ 2\bar{V}_{\sf{M}} + K_{\sf{S}} (\bar{V}_{\sf{S}} + 2) + \bar{V}_{\sf{S}} + 5 \big)
+ \bar{V}_{\sf{D}} + \bar{V}_{\sf{M}} + 1$,  
$a_{\sf{RTRefi}} = L( K_{\sf{D}}( 8\bar{V}_{\sf{D}} + 2) 
+ K_{\sf{M}}( 6 \bar{V}_{\sf{M}} + 2)) + K_{\sf{M}}(4 \bar{V}_{\sf{M}} + 2)$, $b_{\sf{RTRefi}} =  L( K_{\sf{D}}( 8 \bar{V}_{\sf{D}} + 3 ) + K_{\sf{M}}(\bar{V}_{\sf{M}} + 1) 
+ 2(2\bar{V}_{\sf{D}} + \bar{V}_{\sf{M}} + 3 ) )
+ 2( 2K_{\sf{D}}\bar{V}_{\sf{D}} + K_{\sf{M}}(3 \bar{V}_{\sf{M}} + 2) + 1) $.

\section{Numerical Results} \label{sec: result}
\subsection{Simulation Setup}
The simulation is conducted with 3D map in an area of $3$km$^2$ at $(51.524^{\circ}\text{N},0.085^{\circ}\text{W})$ in London city. For simulation data, the vehicular UE route and LSat orbit are taken from Google Navigator and Starlink LEO TLE, the arrival traffic is taken from an actual dataset \cite{Bonati_2024_dataset}. Environment channels are generated using the DT model with coefficient $\xi=0.5$.

The key simulation parameters are: system BW $W^{\sf{tot}}=15$~MHz, operation frequency $f_{\!c}=3.4$~GHz, LSat altitude $500$~km, antenna parameters of AP, LSat, SUE as in \cite{3gpp.38.863, 3gpp.38.821}, vehicle UE antenna and RayT parameters as in \cite{Hung_TCOM25, Hung_VTC24}, 
number of  cycles $N_{\sf{Cy}}=20$, number of frames/cycle $N_{\sf{TF}}=5$,
numbers of TAPs, UEs $(N,K_{\sf{D}}, K_{\sf{M}}, K_{\sf{S}}) = (6, 4, 5, 3)$, power budget at AP and LSat $(p_{\sf{AP}}^{\sf{max}}, p_{\sf{Lsat}}^{\sf{max}}) = (34,36)$~dBm, maximum QL $q_{\ell}^{\sf{m,max}}=q_{0}^{\sf{x,max}}=2$~MB, interference margin for the RT-Refine algorithm $\kappa=1.1$. The simulations are conducted on a computer with an \textit{Intel Xeon E5 CPU @ 2.4~GHz} and \textit{128~GB RAM}, wherein the iterative algorithms are implemented by using the CVX modeling framework and MOSEK solver in the MATLAB programming environment.
In practical systems, the proposed schemes can be implemented through hardware-based platforms, such as FPGA accelerators \cite{thesis_FPGA_IPmethod}, as well as parallelized processing techniques \cite{MOSEK_guideline}, thereby reducing computational overhead and improving execution efficiency. 
The numerical results for each parameter scenario are obtained via Monte Carlo simulations over $50$ independent realizations.
For intuition, the simulation scenario with UE position and channel gain heatmap, and RayT result are shown in Fig.~\ref{fig:heatmap_withUE}. 
For heuristic algorithms, the mean QL and remaining $\sf{D}$ traffic versus $N_{\sf{SC}}^{\sf{D}}$ are shown in Table.~\ref{tab:HeuAlg_Nsb}. 
To minimize the congestion, we select $N_{\sf{SC}}^{\sf{D}}=1$.

\begin{figure}[]
    \vspace{-4mm}
    \captionsetup{font=footnotesize}
    \centering
    \begin{subfigure}{0.9\columnwidth}
        \hspace{-2mm}
        \includegraphics[width=85mm]{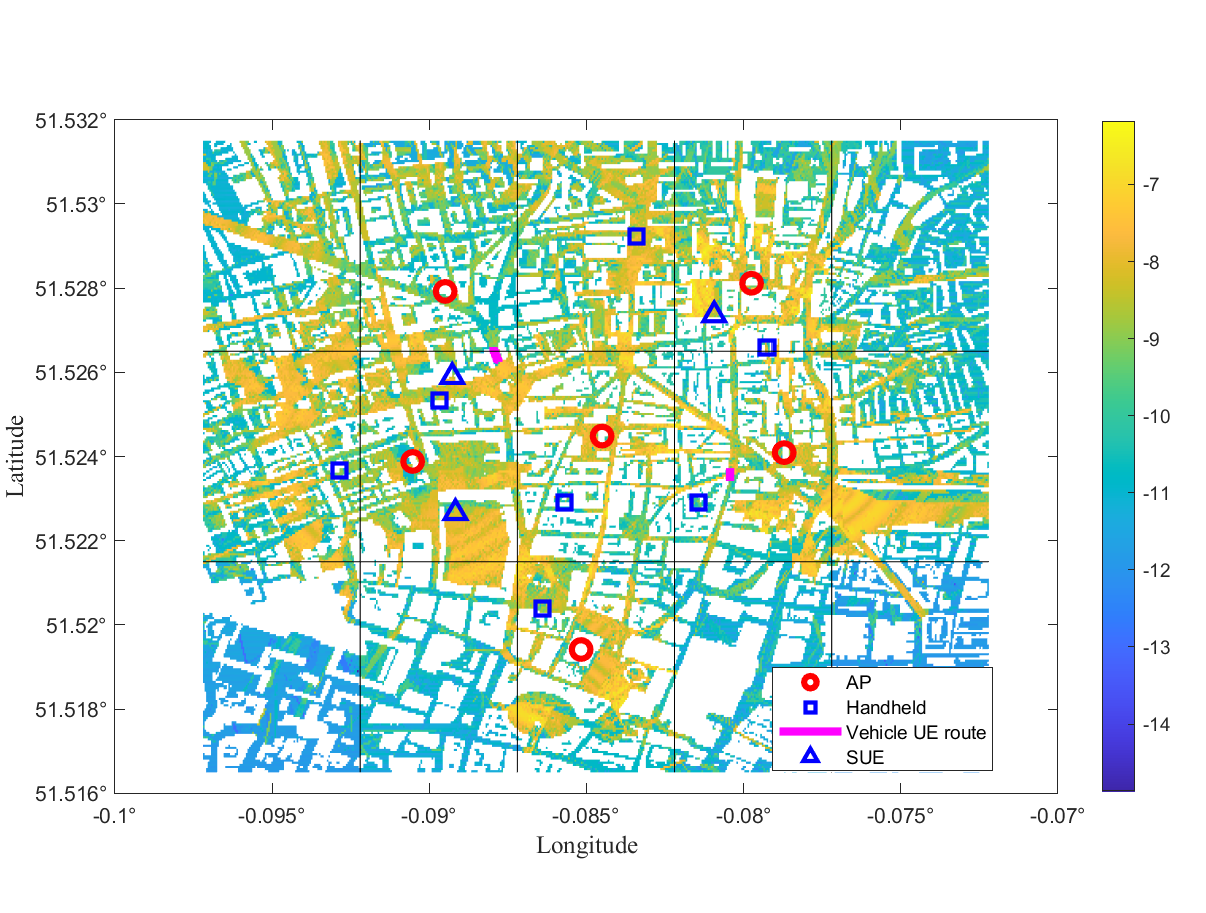}
        \vspace{-10mm}
        \caption{Simulation scenario and TAP channel gain heatmap (log10 scale).}
    \end{subfigure}
    \begin{subfigure}{0.8\columnwidth}
        \vspace{2mm}
        \centering
        \includegraphics[width=65mm]{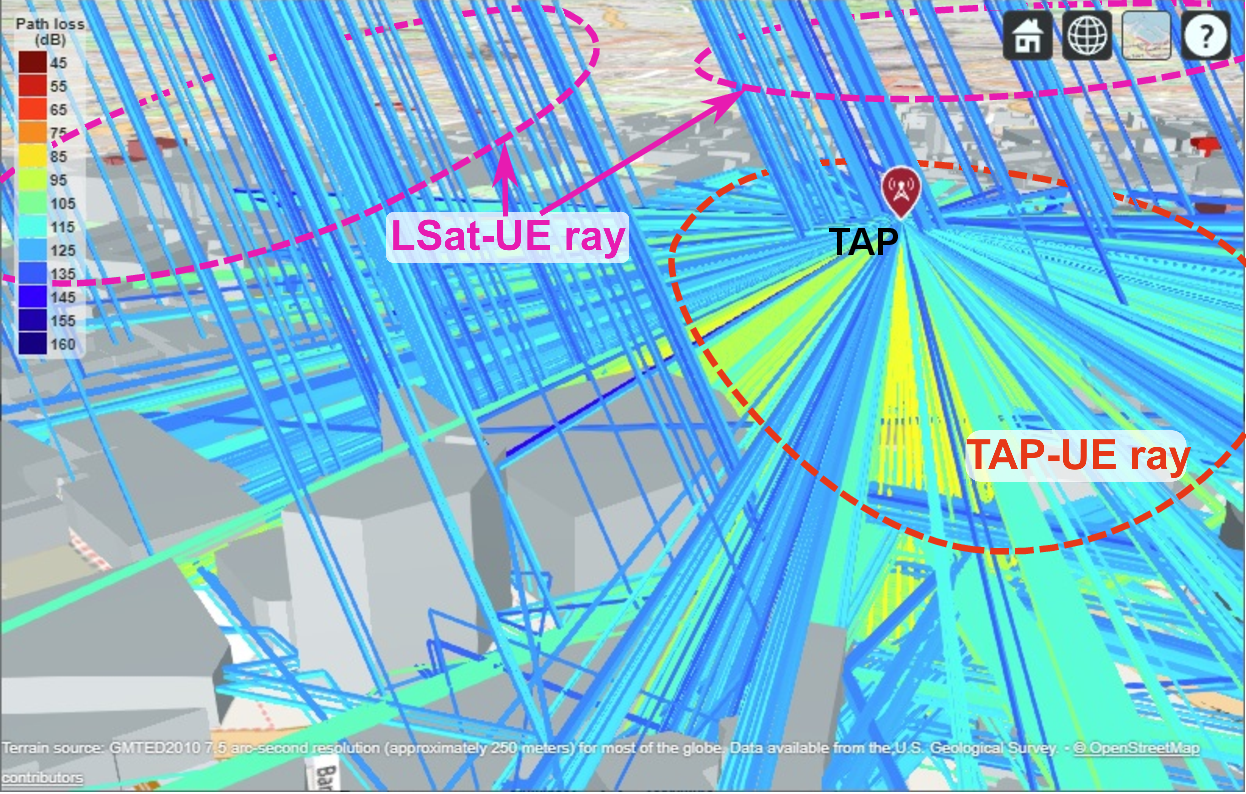}
        \caption{Ray-tracing result in the examined area.}
    \end{subfigure}
    \caption{Simulation scenario, TN channel heatmap (log10 scale), and RayT.}
    \label{fig:heatmap_withUE}
    \vspace{-3mm}
\end{figure}

\subsection{Numerical Results}

\begin{figure}
    \captionsetup{font=footnotesize}
    \centering
    \includegraphics[width=85mm]{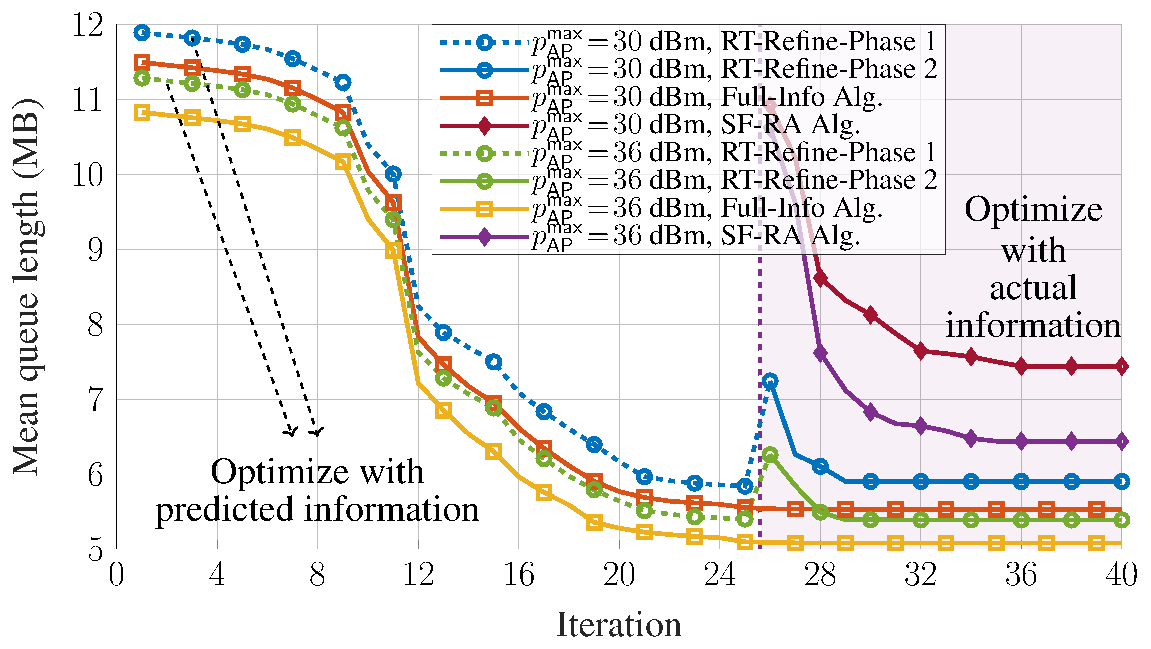}
    \vspace{-2mm}
    \caption{Convergence rate of the examined algorithms.}
    \label{fig:Q_convergence}
    \vspace{-3mm}
\end{figure}
Fig.~\ref{fig:Q_convergence} shows the QL convergence of the proposed algorithm in two phases and of the benchmarks in different cases.
It is worth noting that the RT-Refine algorithm consists of phase 1-the DT-JointRA algorithm and phase 2-the refinement stage, which operate on different input types, i.e., predicted and actual information, respectively.
Since the DT-JointRA algorithm and FIA share the same structure with only differences in inputs, their convergence rates are similar. Particularly, the DT-JointRA algorithm and FIA require only about $25$ iterations for convergence in both cases $p_{\sf{AP}}^{\sf{max}}=30$ dBm and $p_{\sf{AP}}^{\sf{max}}=36$ dBm. Although in the RT-Refine algorithm, phase 2 is executed right after phase 1 (i.e., DT-JointRA), it is bumpy at the beginning of phase 2, which is due to the changes in inputs, i.e., replacing $\{ \hat{\boldsymbol{h}}_{e}, \hat{\boldsymbol{\lambda}}_{e} \}$ by $\{\boldsymbol{h}^{*}_{e}, \boldsymbol{\lambda}_{e} \}$. However, phase 2 requires only about $3$ iterations for convergence. Especially, leveraging extracted output from RT-Refine phase 1 as an initial point for phase 2 facilitates a faster convergence rate, compared to about $10$~iterations of the SF-RA algorithm. This emphasizes the practical implementation aspect of the proposed RT-Refine algorithm.
Additionally, since the solution is improved over iterations, if the practical running time is excessive, the algorithm can be terminated before full convergence to obtain a sufficiently good solution.

Table~\ref{tab:RunningTime} shows the average running time per execution--defined as solving the corresponding problem until convergence--of the proposed framework and the SF-RA algorithm. Since RT-Refine phase 1 involves solving problem $(\mathcal{P}_{2})_{c}$ over an entire cycle, its running time is significantly longer than that of RT-Refine phase 2 and the SF-RA algorithm, which require solving the problem over a single sub-frame. However, it is important to emphasize that RT-Refine phase 1 is a preparation stage, where the problem is solved using predicted information rather than in realtime. Hence, leveraging prediction allows this phase to be executed in advance to meet the horizon time.
Regarding sub-frame execution, by leveraging the initial point provided by RT-Refine phase 1, phase 2 achieves a shorter running time, i.e., approximately $0.2237$~seconds compared to about $0.6452$~seconds for the SF-RA algorithm. In particular, the initial point extracted from the output of RT-Refine phase 1 facilitates faster convergence in phase 2, as depicted in Fig.~\ref{fig:Q_convergence}, thereby reducing the overall execution time.

\begin{table}[!h]
    \vspace{-2mm}
    \captionsetup{font=footnotesize}
    \centering
    \caption{ Average running time per execution.}
	\label{tab:RunningTime}
    \scalebox{1}{
    \begin{tabular}{|l|l|l|l|l|l|}
    \hline
		RT-Refine phase 1 & RT-Refine phase 2 & SF-RA algorithm \\
		\hline 
		$12.1867$ sec & $0.2237$ sec & $0.6452$ sec \\
		\hline
    \end{tabular}}
    \vspace{-3mm}
\end{table}

\begin{figure}[!h]
    \captionsetup{font=footnotesize}
    \centering
    \includegraphics[width=85mm]{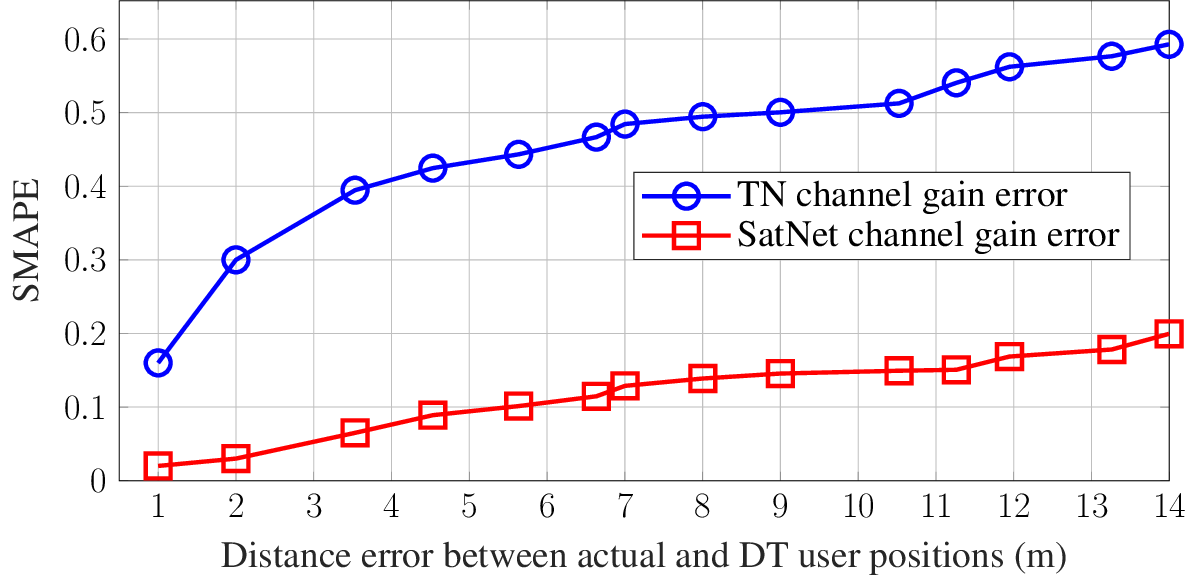}
    \vspace{-2mm}
    \caption{Symmetric mean absolute percentage error (SMAPE) between actual and DT channels versus position mismatch.}
    \label{fig:SMAPE_DisError}
    \vspace{-3mm}
\end{figure}
Fig.~\ref{fig:SMAPE_DisError} depicts the mismatch between the actual and DT channel gains in terms of the SMAPE metric under different levels of position error. In general, larger position errors lead to higher channel gain mismatches. However, at a certain error level, the TN channel exhibits a higher mismatch than that of the SatNet. In the SatNets, due to the long transmission distances and high elevation angles at the receiver, the channel gain variations within the coverage beam are relatively negligible. In contrast, in TNs, the shorter transmission distances and lower elevation angles result in more complicated signal propagation effects, especially in urban environments. Therefore, the position errors have a stronger impact on the TN channel gain. Nevertheless, by leveraging refinement based on feedback from the actual system, the proposed framework can effectively mitigate these channel mismatches, thereby improving network performance and resource allocation decisions, which will be discussed further later.
\begin{figure}[!t]
    \captionsetup{font=footnotesize}
    \centering
    \includegraphics[width=85mm]{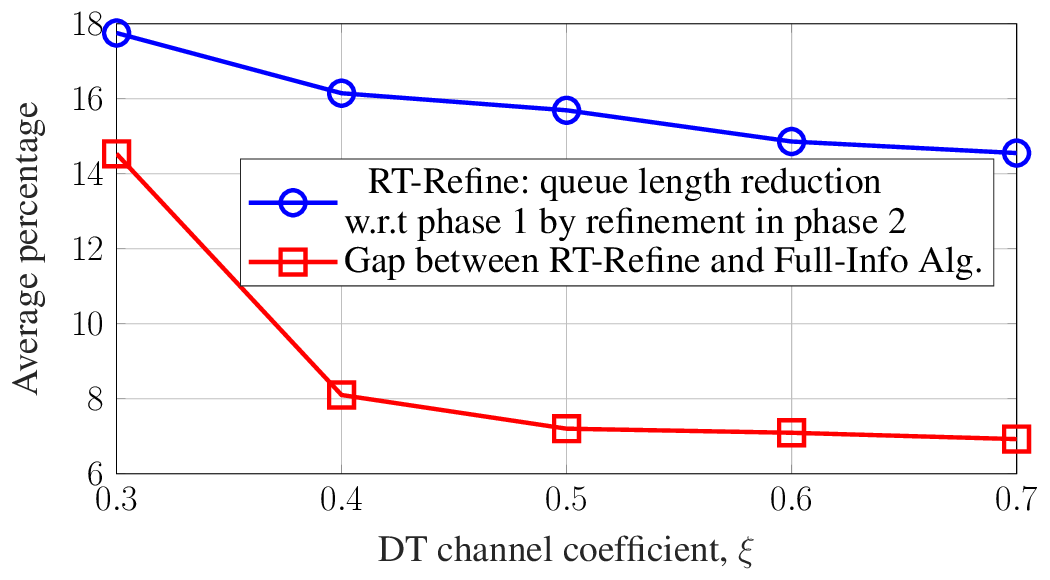}
    \vspace{-2mm}
    \caption{QL versus DT channel coefficient, $\xi$.}
    \label{fig:Q_CoeDT} \vspace{-3mm}
\end{figure}

Regarding the impact of DT channel coefficient $\xi$ on the performance of the proposed RT-Refine algorithm, Fig.~\ref{fig:Q_CoeDT} illustrates the QL improvement of re-optimization in phase 2 of the RT-Refine algorithm and the QL gap between the RT-Refine algorithm and FIA. Based on the channel model in Section~\ref{sec: channel}, one can see that a larger $\xi$ results in a smaller difference between the emulated real environment and its DT reduces. Hence, both lines decrease as $\xi$ increases. Particularly, at $\xi=0.3$, $\xi=0.5$, and $\xi=0.7$, re-optimizing in phase 2 can reduce the QL by about $17.7\%$, $15.7\%$, and $14.5\%$ compared to DT-JointRA outcomes, and the percentage gaps between the RT-Refine algorithm and FIA are about $14.5\%$, $7.3\%$, and $6.9\%$ compared to FIA outcomes,  respectively.

\begin{figure}[!t]
    \captionsetup{font=footnotesize}
    \centering
    \includegraphics[width=85mm]{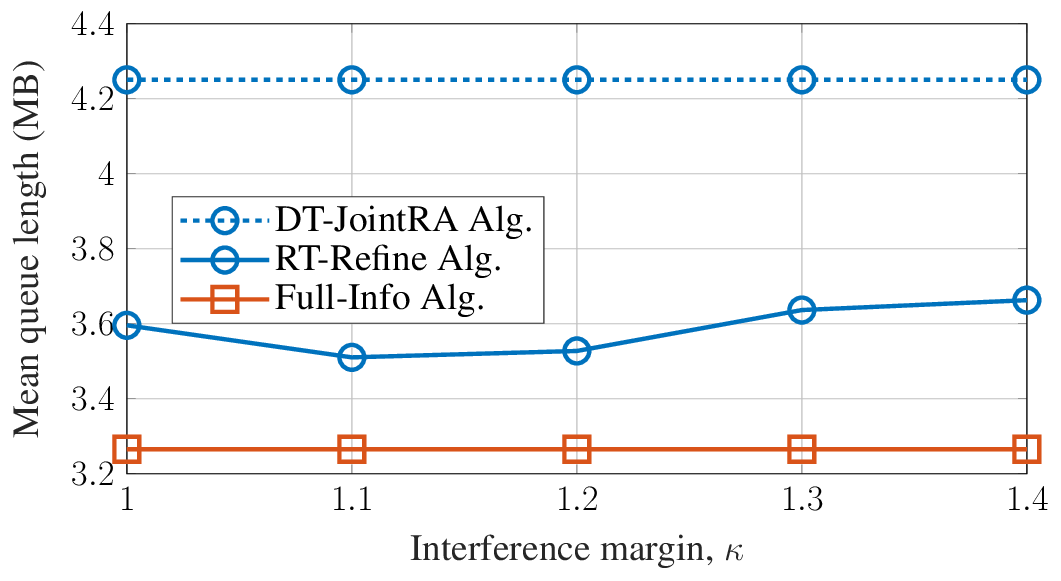}
    \vspace{-1mm}
    \caption{QL versus margin interference.}
    \label{fig:Q_marginH} \vspace{-3mm}
\end{figure}
Fig.~\ref{fig:Q_marginH} depicts the mean QL versus interference margin $\kappa$ in the RT-Refine algorithm. One can see that the mean QL outcome first decreases and then increases as $\kappa$ increases. 
Particularly, the mean QL at $\kappa=(1,1.1,1.2,1.3)$ is about $(3.59,3.51,3.53,3.63)$ MB.
This phenomenon can be explained as follows. First, one notes that in phase 2 of the RT-Refine algorithm, only actual channels from TAPs to their own served UEs are used while other channels are kept as predicted ones. For $\kappa=1$, the prediction error in these interference channels is not accounted for. In contrast, for $\kappa>1$, the prediction uncertainty is considered, promoting TAPs to reduce their transmit power to protect UEs served by LSat. Hence, at higher $\kappa$, e.g., at $\kappa=1.1$ and $\kappa=1.2$, mean QL decreases. However, for sufficiently high $\kappa$, the prediction error is overestimated, leading the TAPs to reduce transmit power more than necessary, which results in rate degradation and an increase in QL. By appropriately choosing this margin, i.e., $\kappa=1.1$, re-optimization in phase 2 can reduce about $0.75$ MB in mean QL, and the gap between the RT-Refine and FIA lines is only about $0.25$ MB.
This result demonstrates the effectiveness of considering this margin against interference channel prediction errors while avoiding costly channel estimation. Based on this result, margin $\kappa=1.1$ is selected for the remaining simulations.

\begin{figure}[!t]
    \captionsetup{font=footnotesize}
    \centering
    \includegraphics[width=85mm]{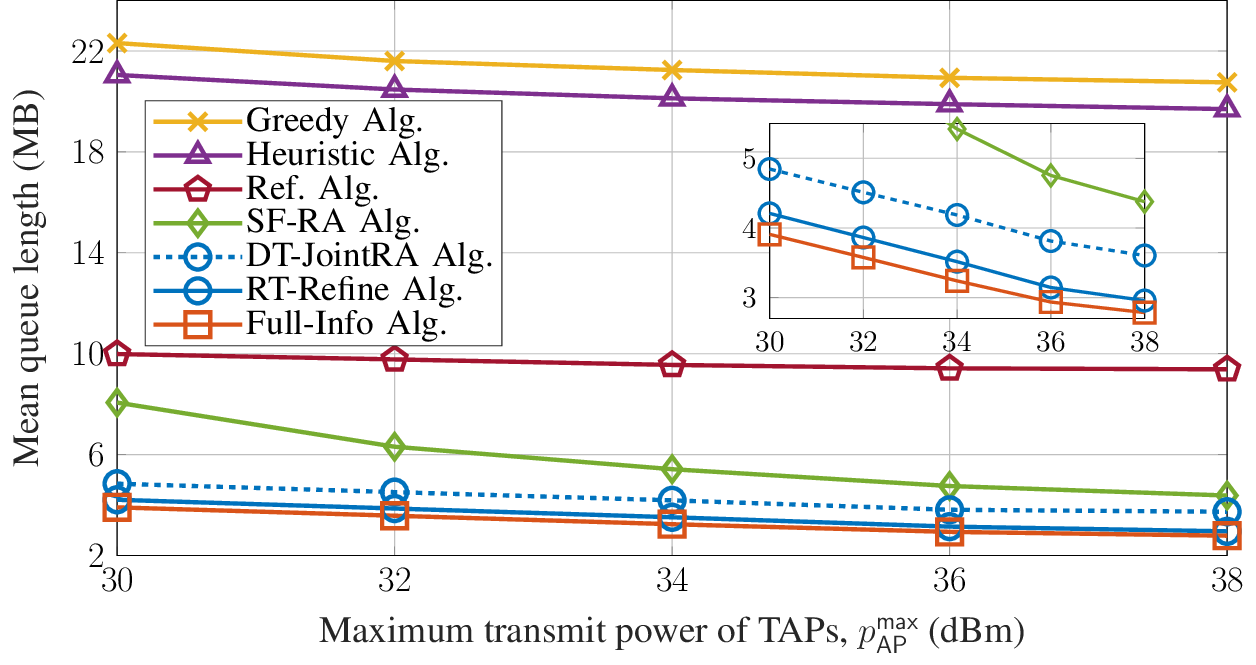}
    \vspace{-1mm}
    \caption{QL versus TAP power budget.}
    \label{fig:Q_PA} \vspace{-3mm}
\end{figure}
Fig.~\ref{fig:Q_PA} shows the mean QL outcomes of proposed algorithms and benchmarks versus the AP power budget. Regarding the greedy and heuristic benchmarks, the mean QL gap between two lines is about $1.12$~MB which is due to the adaptive BWA based on the remaining QL of the heusistic algorithm. 
Moreover, thanks to the joint user association, RB assignment, and power control optimization, the reference algorithm achieves an additional mean QL reduction of approximately $10-11$~MB.
Furthermore, by incorporating the interference-aware design, the SF-RA algorithm further reduces the mean QL about $2-5$~MB. In particular, the performance gap between the reference and SF-RA algorithms increases with the TAP power budget, as the interference-aware design enables more effective power control to mitigate harmful interference, rather than merely increasing transmit power proportionally to enhance the desired signal.
However, compared to the proposed algorithms which incorporate the DT-based network decision design, the QL performance gap between the proposed and benchmark algorithms is still significant, i.e., approximately $17$~MB, $5-5.7$~MB, and $1.5-4$~MB compared with greedy/heustic, reference, and SF-RA algorithms, respectively. 
Additionally, the proposed algorithms satisfy $\sf{D}$ service requirements in all considered cases while the unserved $\sf{D}$ traffics of the heuristic and reference algorithms are about $22.2\%$ and $17.5\%$, respectively.
Regarding proposed algorithms, the gap between the DT-JointRA and RT-Refine lines, which indicates improvement by the refinement in RT-Refine-phase 2, is about $0.7$~MB. Besides, compared to the proposed benchmark FIA which uses full actual information for optimization, the gap between the RT-Refine and FIA lines is only about $0.27$~MB. Hence, in addition to superiority compared to two benchmarks, the proposed RT-Refine scheme shows the effectiveness of refining TAP power control to adjust solutions with the actual information at all considered cases of $p_{\sf{AP}}^{\sf{max}}$. 
Additionally, a similar outcome trend of the proposed algorithms can be recognized. Particularly, with TAP power budget $p_{\sf{AP}}^{\sf{max}}=30 \rightarrow 36$~dBm, the mean QL decreases quickly, and with $p_{\sf{AP}}^{\sf{max}}=36 \rightarrow 38$~dBm, their decrease trend slightly reduces. 
While a higher power budget enables higher data rates, certain traffic flows can be fully served once the power budget reaches a sufficiently high level.
This phenomenon shows that beyond $p_{\sf{AP}}^{\sf{max}}=36$~dBm, the power budget starts to be redundant for QL minimization.

\begin{figure}[!t]
    \captionsetup{font=footnotesize}
    \centering
    \includegraphics[width=85mm]{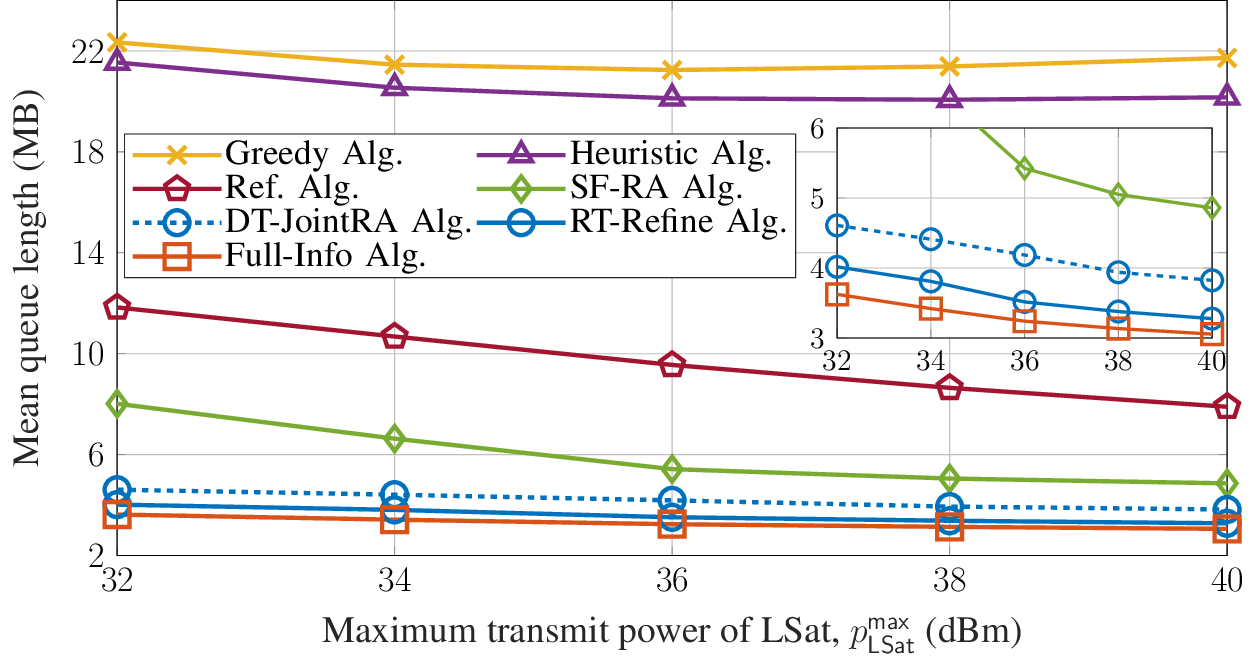}
    \vspace{-2mm}
    \caption{QL versus LSat power budget.}
    \label{fig:Q_PS} \vspace{-4mm}
\end{figure}
Fig.~\ref{fig:Q_PS} depicts the mean QL performance of the examined schemes with different LSat power budgets. Regarding the greedy and heuristic benchmarks, one can see that the mean QL outcomes first decrease and then increase. This is due to the impact of interference caused by LSat in the whole coverage area is not addressed in the power control phase. Similar to results in Fig.\ref{fig:Q_PA}, the adaptive BWA in the heuristic scheme brings the improvement in terms of QL compared to the greedy one while the joint user association, RB assignment, and power control optimization in the reference one further reduce mean QL. 
Additionally, by incorporating the interference-aware design, the SF-RA algorithm achieves an additional QL reduction of about $3-3.8$~MB.
However, compared to the proposed algorithms, the QL outcome gap is still significant. Regarding the proposed algorithms, a similar trend compared to results in Fig~\ref{fig:Q_PA} can be recognized. Particularly, the QL outcomes linearly decrease as LSat power budget $p_{\sf{LSat}}^{\sf{max}}$ increases, while the decreasing trend slightly reduces with $p_{\sf{LSat}}^{\sf{max}}=36 \rightarrow 40$ dBm. This slight degradation indicates that the LSat power budget starts to be redundant for QL minimization.

\begin{figure}[!t]
    \captionsetup{font=footnotesize}
    \centering
    \includegraphics[width=1\linewidth]{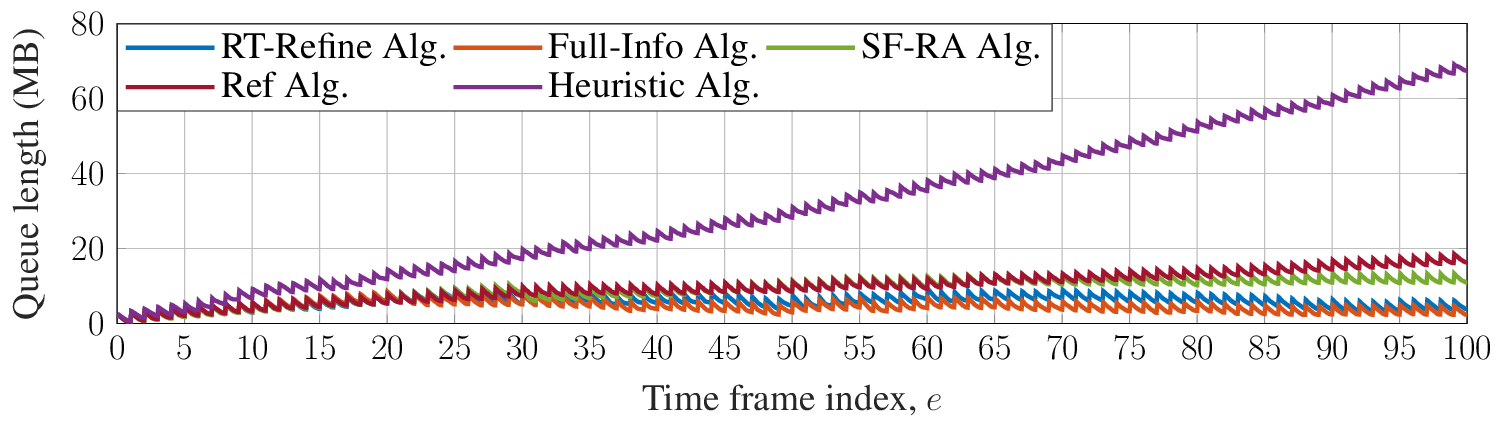}
    \vspace{-6mm}
    \caption{QL over time of examined algorithms.}
    \label{fig:Qtotal_frame} \vspace{-3mm}
\end{figure}
For insights into QL minimization performance, Fig.~\ref{fig:Qtotal_frame} illustrates the QL evolution of FIA, RT-Refine, the heuristic method, the reference, and the SF-RA algorithms. The heuristic benchmark exhibits a rapidly and almost linearly increasing QL, resulting in a significantly higher average QL than the proposed and reference schemes. 
Although the reference and SF-RA algorithms achieve the lower QL, it remains unstable as its QL continues to grow over time. In contrast, the proposed algorithms not only provide substantially better QL performance but also maintain the QL consistently below approximately $10$~MB across frames. These results demonstrate the superiority of the proposed solutions in terms of both QL minimization and long-term stability.

\begin{figure}[!t]
   \captionsetup{font=footnotesize}
    \centering
     \begin{subfigure}{1\linewidth}
         \centering
        \includegraphics[width=1\linewidth]{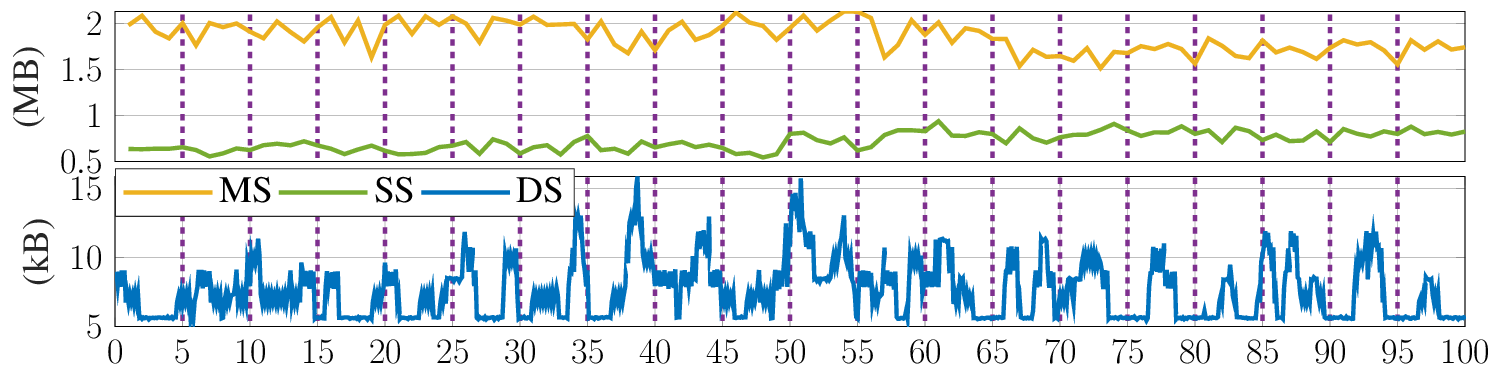}
        \vspace{-2mm}
        \caption{Arrival data of services.}
        \label{fig:lambda_frame}
    \end{subfigure}
    \begin{subfigure}{1\linewidth}
        \centering
        \includegraphics[width=1\linewidth]{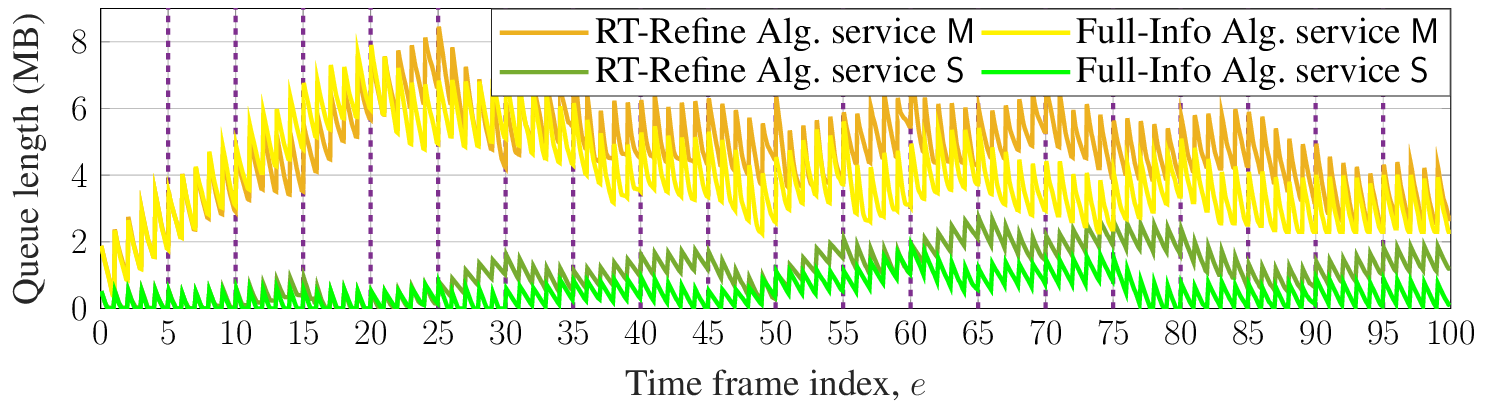}
        \vspace{-2mm}
        \caption{QL over time.}
        \label{fig:Q_frame}
    \end{subfigure}
    \begin{subfigure}{1\linewidth}
        \centering
        \includegraphics[width=1\linewidth]{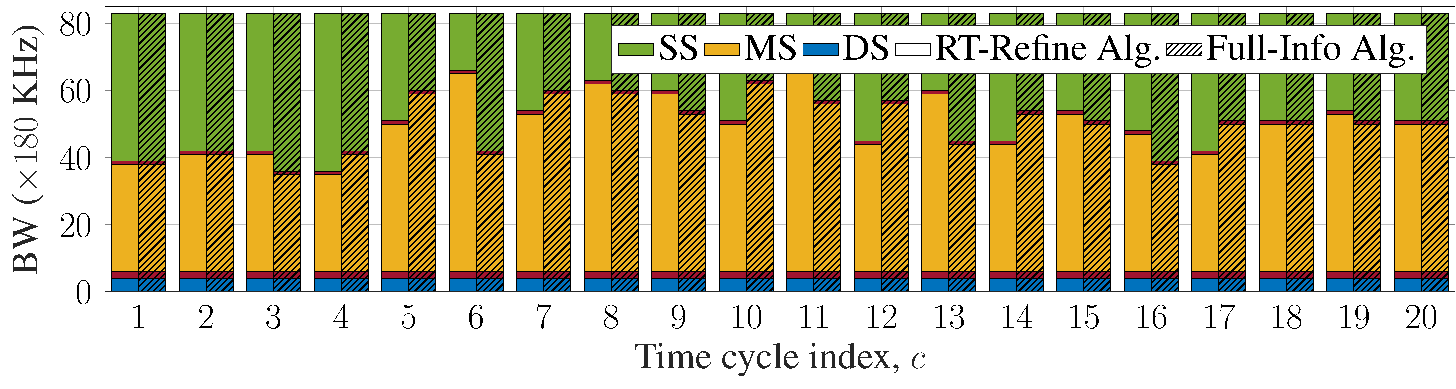}
        \vspace{-2mm}
        \caption{BW allocation over time.}
        \label{fig:BW_cycle}
    \end{subfigure}
    \vspace{-6mm}
    \caption{Arrival data, QL, and BWA over time.}
    \label{fig:result_time} \vspace{-3mm}
\end{figure}
For more insight into the QL over time, Fig.~\ref{fig:result_time} shows the arrival data of services, QL and BWA outcomes over time of the RT-Refine algorithm and FIA. Although the arrival data of $\sf{D}$ services is quite small, i.e., only about $5-10$~kB/frame, it is not completely served by using benchmarks since the interference is ignored. However, the proposed algorithms can satisfy $\sf{D}$ service requirements by allocating only $1$ SC for BWP $\sf{D}$. Besides, due to the larger varying total arrival data size of the $\sf{M}$ and $\sf{S}$ services, the BWAs for these services change over cycles for QL minimization. One can see that, at each cycle, the BW is allocated adaptively based on the remaining QL in the previous cycle and the arrival data. Additionally, compared with the FIA benchmark using full actual information, the RT-Refine algorithm, which is proposed for implementation purposes, can provide a comparable performance in terms of QL minimization over time.

\vspace{-3mm}

\section{Conclusion} \label{sec: conclusion}
\vspace{-1mm}
This work studied the novel DT-aided joint spectrum sharing, traffic steering, and resource allocation for ISTNs over time. The optimization problems aim to minimize the system congestion under system and service constraints. To assist RM, a digital-twin system is developed for information prediction. Based on this, we proposed a DT-JointRA algorithm implementable at the CNC to address the optimization problem. For practical implementation, a re-optimization phase is built on top of the DT-JointRA algorithm to refine its solution with the information in real-world systems. Through numerical results with the actual 3D map and traffic information, the proposed algorithms show the superiority in terms of congestion minimization, compared to benchmarks. Furthermore, the simulation results emphasize the practicality and effectiveness of the proposed algorithms in terms of adaptation ability.

The proposed presolve--refinement strategy is applicable to other DT-based systems, where problems can be presolved using predicted information to support long-term decision-making and satisfy execution time horizon, and resulting solutions are subsequently refined based on real-world feedback to better align with actual system conditions. Furthermore, utilizing the DT-based solution as an initial point is particularly beneficial for accelerating the refinement convergence.

\vspace{-3mm}
\appendices
\section{Proof of Proposition~\ref{pro: rate cons}} \label{app: rate cons}
Consider function $f_{\sf{r}}(x,y)=\log_2(1+\frac{x}{y+a}), x,y \geq 0, a > 0$, constraint $f_{\sf{r}}(x,y) \geq z, \; z\geq 0$ is approximated as \cite{Hung_TCOM25}

\begin{align}\label{eq: apx rate} 
\begin{cases}
    \log_2(x+y+a) \geq z + u / \ln(2), \\
    y + a \leq f_{\sf{exp}}^{(i)}(u) \triangleq \exp(u^{(i)})(u - u^{(i)} + 1), 
\end{cases}
\end{align}
where $u$ and $u^{(i)}$ are the slack variable and its feasible point at iteration $i$, respectively.
Apply approximation in $\eqref{eq: apx rate}$ to log component in rate functions $R_{\ell,k}^{{\sf{M}},[v_{\sf{M}},n_{\sf{M}}]}$ and $R_{0,k}^{{\sf{M}},[v_{\sf{M}},n_{\sf{M}}]}$ with setting $x$, $y$, $a$, and $u$ by the desired received power, the total interference terms, noise power, and the corresponding $\eta$, constraints $(C17.1)$ and $(C18)$ are convexified by $(\tilde{C}17a)$, $(\tilde{C}17b)$ and $(\tilde{C}18)$, respectively. Specifically, $R_{0,k}^{{{\sf{S}}},[v_{\sf{S}},n_{\sf{S}}]}(\boldsymbol{P}_{\!c})$ is concave, hence, constraint $(\tilde{C}20c)$ is convex.

\vspace{-3mm}
\section{Proof of Proposition~\ref{pro: SINR cons}}\label{app: SINR cons}
Constraint $(\bar{C}10)$ is equivalently transformed as
\beqn \label{eq: relax C10}
    \log_2(1+\gamma_{\ell,k}^{{\sf{D}},[v_{\sf{D}},n_{\sf{D}}]}(\boldsymbol{P}_{\!c})) \geq \mathcal{F}_{\! \sf{apx}}^{(i)}\!(p_{\ell,k}^{{\sf{D}},[v_{\sf{D}},n_{\sf{D}}]}) \log_2(1 + \gamma_{0}^{\sf{D}}), 
\eeqn
$\forall (\ell ,k,v_{\sf{D}},n_{\sf{D}})$. Apply approximation \eqref{eq: apx rate} to the log component by setting $x=p_{\ell,k}^{[v_{\sf{D}},n_{\sf{D}}]} h_{\ell,k}^{[v_{\sf{D}},n_{\sf{D}}]}$, $y=\Psi_{\ell,k}^{[v_{\sf{D}},n_{\sf{D}}]}$, $a=\sigma_{{\sf{D}},k}^{2}$, and $u=\eta_{\ell,k}^{[v_{\sf{D}},n_{\sf{D}}]}$, we obtain $(\tilde{C}10a)$ and $(\tilde{C}10b)$.

\vspace{-2mm}
\section{Proof of Proposition~\ref{pro: DS rate cons}} \label{app: DS rate cons}
\vspace{-1mm}
Let $\zeta_{\ell,k,[s]}$ be the upper bound of the RB number allocated to ${\sf{AP}}_{\ell}-{\sf{UE}}_{k}^{\sf{D}}$ link in SF $s$, constraint $(C17.2)$ is rewritten as
\begin{IEEEeqnarray}{ll} \label{eq: DS rate relax}
    w_{\sf{D}} \!\!\!\! \scaleobj{0.8}{\sum_{ \forall (v_{\sf{D}},n_{\sf{D}}) } } \!\!\! \log_2{(1 + \gamma_{\ell,k}^{{\sf{D}},[v_{\sf{D}},n_{\sf{D}}]}(\boldsymbol{P}_{\!c})}) - \chi_{\sf{D}} \scaleobj{0.8}{\sqrt{\zeta_{\ell,k,[s]}}} \geq r_{\ell,k,[s]}^{{\sf{SF,D}}}, \hspace{6mm} \subnum \label{eq: DS rate relax a} \\
    \zeta_{\ell,k,[s]} \geq \scaleobj{0.8}{\sum\nolimits_{{\forall (v_{\sf{D}},n_{\sf{D}}) } }} \Vert p_{\ell,k}^{[v_{\sf{D}},n_{\sf{D}}]} \Vert_{0}. \subnum \label{eq: DS rate relax b}
\end{IEEEeqnarray}
Apply approximation \eqref{eq: apx norm0} to $\ell_{0}$-norm components, \eqref{eq: DS rate relax b} is convexified as $(\tilde{C}17d)$. 
In \eqref{eq: DS rate relax a}, the log component can be convexified as in proposition~\ref{pro: DS rate cons} while the square root component is convexified as follows. Apply SCA technique, function $\mathcal{F}_{\sf{sqrt}}(x)=\sqrt{x}$ is approximated at iteration $i$ as 
\begin{equation} \label{eq: apx sqrt}
    \mathcal{F}_{\sf{sqrt}}(x) \leq {x}/{(2 \scaleobj{0.8}{\sqrt{x^{(i)}}})} + \scaleobj{0.8}{\sqrt{x^{(i)}}}/2 := \mathcal{F}_{\sf{sqrt}}^{(i)}(x).
\end{equation}
Use approximation \eqref{eq: apx sqrt} for $\mathtt{sqrt}$ term in $\eqref{eq: DS rate relax a}$, we obtain the convex form as $(\tilde{C}17c)$.

\vspace{-4mm}
\bibliographystyle{IEEEtran}
\bibliography{Journal}
\end{document}